\title{Decomposable sums and their implications on\\ naturally quasiconvex risk measures}
\author{\c{C}a\u{g}{\i}n Ararat\thanks{Bilkent University, Department of Industrial Engineering, Ankara, Turkey, cararat@bilkent.edu.tr.}
\and
Bar{\i}\c{s} Bilir\thanks{University of Texas at Austin, Department of Operations Research and Industrial Engineering, Austin, TX, barisbilir@utexas.edu.}
\and
Elisa Mastrogiacomo\thanks{Universit\`a degli Studi dell'Insubria, Department of Economics, Varese, Italy, elisa.mastrogiacomo@uninsubria.it.}}
\date{\today}
\makeatletter \renewenvironment{proof}[1][\proofname] {\par\pushQED{\qed}\normalfont\topsep6\p@\@plus6\p@\relax\trivlist\item[\hskip\labelsep\bfseries#1\@addpunct{.}]\ignorespaces}{\popQED\endtrivlist\@endpefalse} \makeatother
\newtheorem{thm}{Theorem}[section]
\newtheorem{lem}[thm]{Lemma}
\newtheorem{prop}[thm]{Proposition}
\newtheorem{assumption}[thm]{Assumption}
\newtheorem{defn}[thm]{Definition}
\theoremstyle{definition}
\newtheorem{example}[thm]{Example}
\newtheorem{rem}[thm]{Remark}
\numberwithin{equation}{section}
\newcommand{\R}{\mathbb{R}}
\newcommand{\cF}{\mathcal{F}}
\newcommand{\cG}{\mathcal{G}}
\newcommand{\N}{\mathbb{N}}
\newcommand{\bN}{\mathbb{N}}
\newcommand{\bbM}{\mathbf{M}}
\newcommand{\bbN}{\mathbf{N}}
\newcommand{\sm}{\!\setminus\!}
\DeclareMathOperator{\cl}{cl}
\DeclareMathOperator{\spn}{span}
\DeclareMathOperator{\dom}{dom}
\DeclareMathOperator{\esssup}{ess\,sup}
\DeclareMathOperator{\Leb}{Leb}
\newcommand{\G}{\mathcal{G}}
\newcommand{\F}{\mathcal{F}}
\newcommand{\X}{\mathcal{X}}
\newcommand{\B}{\mathcal{B}}
\newcommand{\Y}{\mathcal{Y}}
\renewcommand{\P}{\mathbb{P}}
\newcommand{\E}{\mathbb{E}}
\renewcommand{\H}{\mathcal{H}}
\newcommand{\of}[1]{\ensuremath{\left( #1 \right)}}
\newcommand{\cb}[1]{\ensuremath{ \left\{ #1 \right\} }}
\newcommand{\sqb}[1]{\ensuremath{ \left[ #1 \right] }}
\newcommand{\norm}[1]{\ensuremath{ \left\Vert #1 \right\Vert }}
\newcommand{\ip}[1]{\ensuremath{ \left\langle #1 \right\rangle }}
\def\prehp(#1,#2){\ensuremath{  #1 \cdot #2 }}
\pgfplotsset{compat=1.9}
\tikzstyle{vecArrow} = [thick, decoration={markings,mark=at position
\begin{document}
\maketitle
\thispagestyle{empty}
	
\begin{abstract}
 	Convexity and quasiconvexity are two properties that capture the concept of diversification for risk measures. Between the two, there is natural quasiconvexity, an old but not so well-known property weaker than convexity but stronger than quasiconvexity. A detailed discussion on natural quasiconvexity is still missing and this paper aims to fill this gap in the setting of conditional risk measures. We relate natural quasiconvexity to additively decomposable sums. The notion of convexity index, defined in 1980s for finite-dimensional vector spaces, plays a crucial role in the discussion of decomposable sums. We propose a general treatment of convexity index in topological vector spaces and use it to study naturally quasiconvex risk measures. We prove that natural quasiconvexity and convexity are equivalent for conditional risk measures on $L^{p}$ spaces, $p \geq 1$, under mild continuity and locality conditions. Finally, we discuss an alternative notion of locality with respect to an orthonormal basis in $L^2$.
 	\\
 	\\[-10pt]
 	\textbf{Keywords and phrases: }convexity index, decomposable sum, natural quasiconvexity, risk measure\\
 	\\[-10pt]
 	\textbf{Mathematics Subject Classification (2020): }46N10, 46B15, 52A01, 91G70
\end{abstract}


\section{Introduction}

Measuring the risk of a financial position is a central problem in finance. Risk measures are functionals that are used for this purpose. A static risk measure maps a random variable to its minimum deterministic capital requirement evaluated today. In a dynamic setting, the risk of a random variable can be evaluated at an intermediate time, in which case the capital requirement is also random. Such functionals are called conditional risk measures. In the literature, a lot of emphasis has been given to two diversity-related properties of static and conditional risk measures: convexity and quasiconvexity. The aim of this paper is to investigate the so-called natural quasiconvexity for conditional risk measures, which is a property in abstract convexity that is not very well-understood.

In the seminal paper \citet{Artzner}, coherent risk measures are defined as real-valued functionals on a vector space of random variables that model financial positions. Under the axioms of a coherent risk measure, having greater returns for all possible scenarios implies lower risk (monotonicity), a deterministic amount added to the position reduces the risk by the same amount (translativity or cash additivity), and risk increases in a sublinear way (subadditivity and positive homogeneity).

Coherent risk measures are then generalized to convex risk measures in \citet{Follmer}, \citet{Frittelli-1}. The motivation behind this generalization is that the risk of a financial position may increase in a nonlinear way with the size of a position. Therefore, subadditivity and positive homogeneity axioms are replaced with the weaker convexity axiom. It is noteworthy that convexity explicitly captures the idea that diversification does not increase risk. Later, it is argued in \citet{El-Karoui} that cash additivity should be replaced with cash sub-additivity since the former ignores the uncertainty on interest rates. Moreover, it is a well-known result that quasiconvexity and convexity are equivalent under cash additivity (see, e.g., \citet[Corollary 4.2]{Marinacci}). Hence, the replacement of cash additivity with cash sub-additivity paves the way for drawing the distinction between the quasiconvexity and convexity properties of risk measures. Important works along these lines are \citet{Cerreia-1}, \citet{Drapeau}, \citet{Frittelli-2}. Furthermore, it is argued in \citet{Cerreia-2} that quasiconvexity is indeed the right mathematical formulation of diversification under cash sub-additivity. For applications of quasiconvex risk measures, we refer the reader to \citet{Elisa}, \citet{Kallblad}, \citet{Mucahit}.

In the conditional setting, a risk measure gives the capital requirement for a financial position at an intermediate time. In this case, the functional maps into a vector space of random variables that are measurable with respect to a smaller $\sigma$-algebra. For discussions on conditional risk measures, we refer the reader to \citet{Bion-Nadal}, \citet{Detlefsen}, \citet{Frittelli-2}, \citet{Riedel}, \citet{Frittelli-3}, \citet{Ruszczynski}. 

Let $(\Omega, \F, \P)$ be a probability space, $\G \subseteq \F$ a sub-$\sigma$-algebra, and $p\geq 1$. A conditional risk measure $\rho \colon L^{p}(\Omega, \F, \P) \to L^{p}(\Omega, \G, \P)$ is called naturally quasiconvex if, for every $X,Y\in L^{p}(\Omega, \F, \P)$ and $\lambda \in [0,1]$, there exists $\mu\in [0,1]$ such that
\[
	\rho(\lambda X + (1-\lambda)Y) \leq \mu \rho(X) + (1-\mu) \rho(Y),
\]
where the inequality is understood in the almost sure sense. Natural quasiconvexity is defined in more abstract settings for vector-valued and set-valued functions in 1990s; see \citet{Helbig} (under the name fractional convexity), \citet{Tanaka}, \citet{Kuroiwa}. Clearly, natural quasiconvexity is stronger than quasiconvexity but weaker than convexity. Furthermore, natural quasiconvexity is equivalent to a property called $\star$-quasiconvexity, which is defined as the quasiconvexity of a family of scalar functions induced by dual elements of the image space. In view of this equivalence, it turns out that naturally quasiconvex risk measures are closely related to additively decomposable sums, which we discuss next.

In \citet{Debreu}, and \citet{Crouzeix}, an in-depth discussion on additively decomposable sums can be found. In their setting, a real-valued function $s$ defined on the product $\X_{1} \times \ldots \times \X_{n}$ of $n$ open convex factor sets $\X_1,\ldots,\X_n$ of arbitrary finite dimension is called additively decomposable if 
\[
	s(x_{1}, \ldots, x_{n}) = f_{1}(x_{1}) + \ldots +f_{n}(x_{n}),\quad x_1\in \X_1,\ldots,x_n\in\X_n,
\]
for some coordinate functions $f_1,\ldots,f_n$. They define the convexity index of each coordinate function as a break-even point $\lambda\in\R$ at which a suitable exponential tranformation of the function indexed by $\lambda$ becomes convex/concave. They study the properties of this index and show that the function is convex if and only if its convexity index is nonnegative. Then, they characterize the quasiconvexity of an additively decomposable function $s$ in terms of the nonnegativity of the sum of the
convexity indices of its coordinate functions $f_1,\ldots,f_n$. As a by-product, they prove that if $f_1,\ldots,f_n$ are not constant, then all of them, with at most one
exception, are actually convex if and only if $s$ is quasiconvex. 

Potential applications of these results can be found in some areas of economic theory. For example, techniques developed in the study of quasiconcave
additively decomposable functions (which are almost concave) appear in utility theory, see, e.g., \citet{De60}, \citet{Gr61}, \citet{Ra72}, and the more recent papers \citet{Wak94}, \citet{GhiMa01}. 
In addition, additive decomposability is used in the context of production functions, see, among the others, \citet{ArEn61}.

In this paper, we propose a novel application of decomposable sums in the field of naturally quasiconvex risk measures. Since risk measures are defined on Lebesgue spaces of random variables, which are generally infinite-dimensional, the available results in \citet{Debreu}, and \citet{Crouzeix} are not applicable to our setting. Hence, we extend the framework of these papers fundamentally by considering general topological vector spaces. Later, this extension sheds light on natural quasiconvexity for conditional risk measures.

In the abstract setting of general topological spaces, we also consider infinite decomposable sums and characterize the quasiconvexity of such sums in terms of the convexity indices of the coordinate functions. To the best of our knowledge, the case of infinite sums has not been considered before even in the finite-dimensional setting.

The remainder of this paper is organized as follows. In Section \ref{sec:sums}, we study additively decomposable sums on topological vector spaces. We introduce the convexity index for extended real-valued functions defined on general vector spaces and present some important properties of it in Subsection \ref{subsec:index}. Then, in Subsection \ref{subsec:finsum}, we show that an additively decomposable finite sum is quasiconvex if and only if either all functions that appear in the sum are convex or all except one are convex together with a condition on the sum of the convexity indices of them. It is noteworthy that Subsections \ref{subsec:index} and \ref{subsec:finsum} have strong links with \citet{Debreu}, and \citet{Crouzeix} since most of the results in these sections are generalizations of their results to real-valued functions defined on general topological vector spaces. To make this generalization work, it appears that we only need a lower semicontinuity assumption on each function that appears in the additively decomposable sum. In Subsection \ref{subsec:infsum}, we show that almost the same result with the one in the previous section holds for infinite decomposable sums. In Section \ref{sec:nqc}, we study naturally quasiconvex conditional risk measures. Subsection \ref{subsec:risk} is a brief introduction to risk measures. In Subsection \ref{NQC defn}, we discuss natural quasiconvexity and give an equivalent characterization of it. In Subsection \ref{subsec:rel}, we work on a general $L^p$ space with $p\geq 1$, and show that convexity and natural quasiconvexity are exactly the same properties for conditional risk measures, under some mild conditions. Finally, in Subsection \ref{subsec:L2}, we work on an $L^2$ space with a special structure on the underlying probability space, define a new property, namely, locality with respect to an orthonormal basis, and show that, under that property, naturally quasiconvexity and convexity are equivalent with respect to the preorder defined by the cone generated by the elements of the basis.


\section{Decomposable sums in general vector spaces}\label{sec:sums}

\subsection{Convexity index}\label{subsec:index}
Convexity index for real-valued functions on $\R^{n}$ is first introduced in \citet{Debreu}, and studied further in \citet{Crouzeix}. The main concern of this section is to extend the definition of convexity index to extended real-valued functions on general topological vector spaces. This will be the building block for the study of additively decoposable sums on such spaces.

As we work with extended real-valued functions, the following conventions for the arithmetic on $\bar{\R}=[-\infty,+\infty]$ are used throughout the paper. We have $0\cdot(+\infty)=0\cdot(-\infty)=0$,  $\frac{1}{0} = +\infty$. We also set $e^{z}=+\infty$ if $z=+\infty$ and $e^z=0$ if $z=-\infty$.

Let $\X$ be a topological vector space and $f\colon \X\to \bar{\R}$ a function which we keep fixed unless stated otherwise. We define the \emph{effective domain} of $f$ as the set $\dom f\coloneqq \cb{x\in \X \colon f(x)<+\infty}$. We assume that $f$ is proper in the sense that $\dom f \neq \emptyset $ and $f(x)>-\infty $ for every $x\in \X$. For each $\lambda\in\R$, we associate to $f$ the function $r_\lambda\colon \X\to\R$ defined by
\begin{equation}\label{rlambda}
	r_{\lambda}(x) \coloneqq e^{-\lambda f(x)},\quad x\in \X.
\end{equation}
First, we present an auxiliary result related to the convexity properties of $r_\lambda$, $\lambda\in\R$, which is helpful to have a better grasp of the definition of convexity index.

\begin{lem} \label{lemma 1 section 2}
	The following results hold for $r_\lambda$, $\lambda\in\R$, associated to $f\colon \X\to\bar{\R}$.
	\begin{enumerate}[(i)]
		\item Let $\lambda < 0$. Then, $r_{\lambda}$ is convex if and only if $r_{\mu}$ is convex for each $\mu < \lambda$.
		\item Let $\lambda > 0$. Then, $r_{\lambda}$ is concave if and only if $r_{\mu}$ is concave for each $\mu\in [0,\lambda)$.
		\item If $r_{\lambda}$ is concave for some $\lambda > 0$, then $r_{\mu}$ is convex for each $\mu < 0$.
	\end{enumerate}
\end{lem}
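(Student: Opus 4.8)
The plan is to reduce all three statements to a single algebraic identity together with the standard composition rules for convexity. For any $\lambda\neq 0$ and any $\mu\in\R$ one has, pointwise,
$r_\mu(x)=e^{-\mu f(x)}=\bigl(e^{-\lambda f(x)}\bigr)^{\mu/\lambda}=\phi_{\mu/\lambda}\bigl(r_\lambda(x)\bigr)$, where $\phi_\alpha(t)=t^{\alpha}$ is the power map on $[0,+\infty]$, read through the stated conventions so that $\phi_\alpha(0)=+\infty$ and $\phi_\alpha(+\infty)=0$ when $\alpha<0$. Thus $r_\mu=\phi_{\mu/\lambda}\circ r_\lambda$. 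I would first record the elementary curvature and monotonicity of $\phi_\alpha$ ($\alpha>1$: convex nondecreasing; $\alpha\in(0,1)$: concave nondecreasing; $\alpha<0$: convex nonincreasing) and verify the three composition rules that will be used, namely that a convex nondecreasing function composed with a convex function is convex, a concave nondecreasing function composed with a concave function is concave, and a convex nonincreasing function composed with a concave function is convex. These all follow verbatim from the definitions in $[0,+\infty]$, since every function in sight is nonnegative and no $\infty-\infty$ ambiguity arises; the only boundary case to inspect is where $f(x)=+\infty$, i.e.\ $r_\lambda(x)\in\{0,+\infty\}$.

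Granting these, each \emph{forward} implication is a one-line application of the identity. In (i), $\mu<\lambda<0$ gives exponent $\alpha=\mu/\lambda>1$, so $\phi_\alpha$ is convex nondecreasing and convexity of $r_\lambda$ passes to $r_\mu=\phi_\alpha\circ r_\lambda$. In (ii), $0<\mu<\lambda$ gives $\alpha\in(0,1)$, so $\phi_\alpha$ is concave nondecreasing and concavity of $r_\lambda$ passes to $r_\mu$ (the case $\mu=0$ being trivial, as $r_0\equiv 1$). In (iii), $\mu<0<\lambda$ gives $\alpha<0$, so $\phi_\alpha$ is convex nonincreasing and the third composition rule turns concavity of $r_\lambda$ directly into convexity of $r_\mu$.

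The \emph{converse} directions of (i) and (ii) cannot be handled by composition, because inverting the identity produces an exponent $\lambda/\mu\in(0,1)$ carrying the wrong curvature. Here I would instead argue by a pointwise limit: for each fixed $x$, $r_\mu(x)=e^{-\mu f(x)}\to r_\lambda(x)$ as $\mu\uparrow\lambda$ (immediate when $f(x)$ is finite, and equally clear when $f(x)=+\infty$, since then $r_\mu(x)$ is eventually constant equal to $r_\lambda(x)$). A pointwise limit of convex functions is convex, and of concave functions is concave, with no arithmetic obstruction because the functions are nonnegative. Hence convexity of every $r_\mu$ with $\mu<\lambda$ forces convexity of $r_\lambda$ in (i), and concavity of every $r_\mu$ with $\mu\in[0,\lambda)$ forces concavity of $r_\lambda$ in (ii).

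The only genuinely delicate step is the bookkeeping of the extended-real-valued arithmetic at points where $f=+\infty$: one must check that the conventions $0\cdot(\pm\infty)=0$, $1/0=+\infty$, $e^{+\infty}=+\infty$, $e^{-\infty}=0$ make both the identity $r_\mu=\phi_{\mu/\lambda}\circ r_\lambda$ and the composition rules valid at the boundary values $0$ and $+\infty$. I expect this to be routine, but it is precisely the place where the passage from the finite-dimensional, real-valued setting of \citet{Debreu, Crouzeix} to proper extended-real-valued $f$ on a general topological vector space must be justified; everything else uses only the affine structure of $\X$ and is insensitive to the ambient space.
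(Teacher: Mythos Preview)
Your proposal is correct and follows essentially the same approach as the paper: both use the identity $r_\mu = \phi_{\mu/\lambda}\circ r_\lambda$ (the paper writes $k_{\mu,\lambda}(t)=t^{\mu/\lambda}$) together with the monotonicity/curvature of the power map for the forward implications, and a pointwise limit $\mu\uparrow\lambda$ for the converses in (i) and (ii). Your explicit remark that the converse cannot be recovered by inverting the composition (since the exponent $\lambda/\mu$ then lies in $(0,1)$) and your attention to the boundary values at $f=+\infty$ are sound and slightly more careful than the paper, but do not constitute a different method.
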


\begin{proof}
	Let $\lambda \neq 0$ and $\mu < \lambda$. Then, we may write $r_{\mu} = k_{\mu,\lambda}\circ r_{\lambda}$, where $k_{\mu,\lambda}(t) \coloneqq  t^{\frac{\mu}{\lambda}}$ for each $t\in [0, +\infty]$ with the conventions $0^0=(+\infty)^0=1$, $(+\infty)^a=+\infty$ for $a>0$, and $(+\infty)^a=0$ for $a<0$.
	\begin{enumerate}[(i)]
		\item Let $\lambda<0$. Suppose that $r_\lambda$ is convex and let $\mu < \lambda$. Let $x_1,x_2\in \X$ and $\eta \in[0,1]$. For every $t\in [0,+\infty]$, we have
		\[
		k_{\mu,\lambda}^\prime(t) = \frac{\mu}{\lambda} t^{\frac{\mu}{\lambda} -1} \geq 0,\qquad 
		k_{\mu,\lambda}^{\prime\prime}(t)= \Big(\frac{\mu}{\lambda}\Big)\Big(\frac{\mu}{\lambda} -1\Big) t^{\frac{\mu}{\lambda}-2} \geq 0
		\]
		since $\frac{\mu}{\lambda} > 1$. Hence $k_{\mu,\lambda}$ is convex and increasing. Observe that
		\begin{align*}
			r_{\mu}(\eta x_{1} + (1-\eta)x_{2}) 
			&= k_{\mu,\lambda}\circ r_{\lambda}(\eta x_{1} + (1-\eta)x_{2})\\
			&\leq k_{\mu,\lambda}(\eta r_{\lambda}( x_{1}) + (1-\eta)r_{\lambda}(x_{2}))\\
			&\leq \eta k_{\mu,\lambda}\circ r_{\lambda}(x_{1}) + (1-\eta)k_{\mu,\lambda}\circ r_{\lambda}(x_{2})
		    = \eta r_{\mu}(x_{1}) +(1-\eta) r_{\mu}(x_{2}),
		\end{align*}
		where the first inequality holds since $r_{\lambda}$ is convex, $k_{\mu,\lambda}$ is increasing and the second inequality follows from the convexity of $k_{\mu,\lambda}$. Therefore, $r_{\mu}$ is convex.
		
		Conversely, assume that $r_\mu$ is convex for each $\mu<\lambda$. Let $x_1,x_2\in \X$ and $\eta \in [0,1]$. For every $\mu<\lambda$, we have
		\[
					r_{\mu}(\eta x_{1} + (1-\eta)x_{2}) \leq \eta r_{\mu}(x_{1}) + (1- \eta) r_{\mu}(x_{2}).
		\]
		Thanks to the continuity of the power function $\mu\mapsto e^{-\mu a}$ on $(-\infty,0)$ for each fixed $a\in\bar{\R}$, we may let $\mu\rightarrow\lambda$ and get
		\[
			r_{\lambda}(\eta x_{1} + (1-\eta)x_{2}) \leq \eta r_{\lambda}(x_{1}) + (1- \eta)r_{\lambda}(x_{2}).
		\] 			
		Hence $r_{\lambda}$ is convex.
		
		\item Let $\lambda>0$. Suppose that $r_{\lambda}$ is concave. Clearly, $r_0\equiv 1$ is convex. Let $\mu\in (0,\lambda)$. Let $x_1,x_2\in \X$ and $\eta\in [0,1]$. Similar to (i), it is easy to check that $k_{\mu,\lambda}$ is concave and increasing since $0\leq \frac{\mu}{\lambda} < 1$. Observe that
		\begin{align*}
			r_{\mu}(\eta x_{1} + (1-\eta)x_{2}) 
			&= k_{\mu,\lambda}\circ r_{\lambda}(\eta x_{1} + (1-\eta)x_{2})\\
			&\geq k_{\mu,\lambda}(\eta r_{\lambda}( x_{1}) + (1-\eta)r_{\lambda}(x_{2}))\\
			&\geq \eta k_{\mu,\lambda}\circ r_{\lambda}(x_{1}) + (1-\eta)k_{\mu,\lambda}\circ r_{\lambda}(x_{2})
			= \eta r_{\mu}(x_{1}) +(1-\eta) r_{\mu}(x_{2}),
		\end{align*}
		where the first inequality holds since $r_{\lambda}$ is concave, $k_{\mu,\lambda}$ is increasing and the second inequality follows from concavity of $k$. Therefore, $r_{\mu}$ is concave.
		
		Conversely, assume that $r_\mu$ is concave for each $\mu\in [0,\lambda)$. Let $x_{1}, x_{2}\in \X$ and $\eta \in [0,1]$. For every $\mu\in[0,\lambda)$, we have
		\[
			r_{\mu}(\eta x_{1} + (1-\eta)x_{2}) \geq \eta r_{\mu}(x_{1}) + (1- \eta) r_{\mu}(x_{2}).
		\]
		By letting $\mu\to\lambda$ similar to (i), we get
		\[
			r_{\lambda}(\eta x_{1} + (1-\eta)x_{2}) \geq \eta r_{\lambda}(x_{1}) + (1- \eta)r_{\lambda}(x_{2}).
		\]
		Hence $r_{\lambda}$ is concave.
		
		\item Assume that there exists $\lambda>0$ such that $r_{\lambda}$ is concave. Let $\mu < 0$, $x_1,x_2\in \X$ and $\eta\in[0,1]$. Similar to (i) and (ii), we may conclude that $k_{\mu,\lambda}$ is convex and decreasing since $\frac{\mu}{\lambda} < 0$. Observe that
		\begin{align*}
			r_{\mu}(\eta x_{1} + (1-\eta)x_{2}) 
			&= k_{\mu,\lambda}\circ r_{\lambda}(\eta x_{1} + (1-\eta)x_{2})\\
			&\leq k_{\mu,\lambda}(\eta r_{\lambda}( x_{1}) + (1-\eta)r_{\lambda}(x_{2}))\\
			&\leq \eta k_{\mu,\lambda}\circ r_{\lambda}(x_{1}) + (1-\eta)k_{\mu,\lambda}\circ r_{\lambda}(x_{2})
			= \eta r_{\mu}(x_{1}) +(1-\eta) r_{\mu}(x_{2}),
		\end{align*}
		where the first inequality holds since $r_{\lambda}$ is concave, $k_{\mu,\lambda}$ is decreasing and the second inequality follows from convexity of $k$. Therefore, $r_{\mu}$ is convex.
	\end{enumerate}
\end{proof}

Let us consider the following two cases. First, assume that there exists $\lambda < 0$ such that $r_{\lambda}$ is not convex. Then, according to Lemma \ref{lemma 1 section 2}(i), $r_{\gamma}$ is not convex for every $\gamma\in[\lambda, 0)$. Moreover, if there exists $\mu < \lambda$ such that $r_{\mu}$ is convex, then $r_{\gamma}$ is convex for every $\gamma\in (-\infty,\mu]$. Second, assume otherwise that $r_{\lambda}$ is convex for every $\lambda < 0$. In view of Lemma \ref{lemma 1 section 2}(iii), this is a necessary condition for $r_{\mu}$ to be concave for some $\mu > 0$. Moreover, if $r_{\mu}$ is concave for some $\mu > 0$, then $r_{\gamma}$ is concave for every $\gamma\in [0,\mu)$ by Lemma \ref{lemma 1 section 2}(ii). Figure \ref{fig1} and Figure \ref{fig2} below depict these two cases.

\begin{figure}[h!]
	\centering
	\begin{tikzpicture}[scale= 1]
		
		\draw[thick]  (0,0)--(10,0) ;
		
		\draw (0,-.08)--(0,.08);
		\draw (2,-.08)--(2,.08);
		\draw (3,-.08)--(3,.08);
		\draw (5,-.08)--(5,.08);
		\draw (10,-.08)--(10,.08);
		
		\node [below] at (0,0) {$-\infty$};
		\node [below] at (10,0) {$+\infty$};
		\node [below] at (5,0) {$0$};
		
		\node [below] at (3,0) {$\lambda$};
		\node [below] at (2,0) {$\mu$};
		
		\draw [decorate,decoration={brace,amplitude=10pt}] (3,0.2) -- (5,0.2) ;
		\node [above] at (4,.8){not convex};
		\draw [decorate,decoration={brace,amplitude=10pt}] (0,0.2) -- (2,0.2) ;
		\node [above] at (1,.8){convex};
	\end{tikzpicture}
	\caption{$r_{\lambda}$ is not convex for some $\lambda < 0$.}
	\label{fig1}
\end{figure}
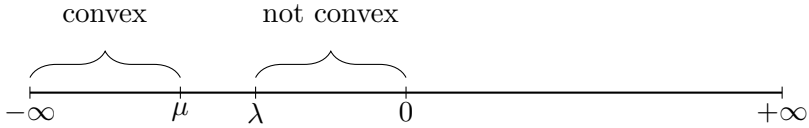

\begin{figure}[h!]
	\centering
	\begin{tikzpicture}[scale= 1]
		
		\draw[thick]  (0,0)--(10,0) ;
		
		\draw (0,-.08)--(0,.08);
		\draw (5,-.08)--(5,.08);
		\draw (8,-.08)--(8,.08);
		\draw (10,-.08)--(10,.08);
		
		\node [below] at (0,0) {$-\infty$};
		\node [below] at (10,0) {$+\infty$};
		\node [below] at (5,0) {$0$};
		
		\node [below] at (8,0) {$\mu$};

		\draw [decorate,decoration={brace,amplitude=10pt}] (5,0.2) -- (8,0.2) ;
		\node [above] at (6.5,.8){concave};
		
		\draw [decorate,decoration={brace,amplitude=10pt}] (0,0.2) -- (5,0.2) ;
		\node [above] at (2.5,.8){convex};
	\end{tikzpicture}
	\caption{$r_{\lambda}$ is convex for every $\lambda < 0$.}
	\label{fig2}
\end{figure}
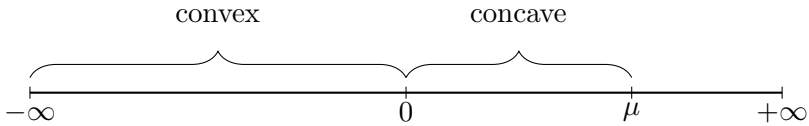

Considering the cases described above, an intriguing query arises as to determine the largest $\lambda<0$ for which $r_{\lambda}$ is convex (first case) and the largest $\lambda\geq 0$ for which $r_{\lambda}$ is concave (second case). The following definition is motivated by these ideas.

\begin{defn} \label{convindex_def_infdimensional}
	The convexity index $c(f)\in\bar{\R}$ of $f$ is defined as follows:
	\begin{enumerate}[(i)]
		\item if there exists $\bar{\lambda}<0$ such that $r_{\bar{\lambda}}$ is not convex, then
		\[
			c(f) \coloneqq \sup\{\lambda<0 \colon  r_{\lambda} \text{ is convex} \}.
		\]
		\item if $r_{\bar{\lambda}}$ is convex for every $\bar{\lambda}<0$, then 
		\[
			c(f) \coloneqq \sup\{\lambda \geq 0 \colon r_{\lambda} \text{ is concave} \}.
		\]
	\end{enumerate}
\end{defn}

\begin{rem}\label{cf-remark}
	By the discussion preceding Definition \ref{convindex_def_infdimensional}, it is clear that $c(f)\in [-\infty,0)$ in case (i) and $c(f)\in [0,+\infty]$ in case (ii).
\end{rem}

As its name suggests, the convexity index of $f$ tells whether $f$ is convex or not as stated in the next theorem.

\begin{thm}\label{convindex_values_infinitedimensional}
	The function $f$ is convex if and only if $c(f) \geq 0$.
\end{thm}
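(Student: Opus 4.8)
The plan is to prove both directions by connecting the convexity of $f$ to the convexity of the functions $r_\lambda$. The key observation is that for $\lambda<0$, the map $t\mapsto e^{-\lambda t}$ (i.e., $t\mapsto e^{|\lambda|t}$) is a strictly increasing and convex bijection from $\bar\R$ onto $[0,+\infty]$, while its inverse $u\mapsto -\tfrac{1}{\lambda}\ln u$ is increasing and concave. This correspondence is what links the shape of $f$ to the shape of $r_\lambda$.

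First I would prove the forward direction. Suppose $f$ is convex and fix any $\lambda<0$. For $x_1,x_2\in\X$ and $\eta\in[0,1]$, convexity of $f$ gives $f(\eta x_1+(1-\eta)x_2)\le \eta f(x_1)+(1-\eta)f(x_2)$. Applying the increasing convex function $t\mapsto e^{-\lambda t}$ (increasing because $-\lambda>0$) to the left-hand side and then using its convexity on the right yields
\[
	r_\lambda(\eta x_1+(1-\eta)x_2)\le e^{-\lambda(\eta f(x_1)+(1-\eta)f(x_2))}\le \eta r_\lambda(x_1)+(1-\eta)r_\lambda(x_2),
\]
so $r_\lambda$ is convex for every $\lambda<0$. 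By Definition \ref{convindex_def_infdimensional}, we are therefore in case (ii), whence $c(f)=\sup\{\lambda\ge 0\colon r_\lambda\text{ concave}\}\ge 0$. (The value $\lambda=0$ is always admissible since $r_0\equiv 1$ is concave, so the supremum is taken over a nonempty set.) This gives $c(f)\ge 0$.

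For the converse, suppose $c(f)\ge 0$. By Remark \ref{cf-remark}, $c(f)\ge 0$ forces us into case (ii) of the definition, so $r_\lambda$ is convex for every $\lambda<0$. I would then recover the convexity of $f$ by taking a limit as $\lambda\uparrow 0$. Fixing $x_1,x_2\in\X$ and $\eta\in[0,1]$, convexity of $r_\lambda$ reads $e^{-\lambda f(\eta x_1+(1-\eta)x_2)}\le \eta e^{-\lambda f(x_1)}+(1-\eta)e^{-\lambda f(x_2)}$; applying the increasing concave function $u\mapsto -\tfrac{1}{\lambda}\ln u$ to both sides and using Jensen's inequality for concave functions on the right gives
\[
	f(\eta x_1+(1-\eta)x_2)\le -\tfrac{1}{\lambda}\ln\!\big(\eta e^{-\lambda f(x_1)}+(1-\eta)e^{-\lambda f(x_2)}\big)\le \eta f(x_1)+(1-\eta)f(x_2).
\]
Thus $f$ is already convex for each fixed $\lambda<0$, and no limit is even needed.

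The main obstacle is handling the extended-real values, namely the cases where $f(x_i)=+\infty$ so that $r_\lambda(x_i)=0$ (for $\lambda<0$) or points outside $\dom f$. One must check the chained inequalities under the stated arithmetic conventions on $\bar\R$—in particular that the inequality is vacuous or consistent when $f(\eta x_1+(1-\eta)x_2)=+\infty$, and that $-\tfrac1\lambda\ln(\cdot)$ behaves correctly at $0$ and $+\infty$. Because $f$ is proper, the degenerate value $-\infty$ is excluded, which keeps these limiting computations well-defined; verifying the boundary cases carefully is the only delicate point, as the interior computation is the routine composition-with-a-monotone-convex-function argument already used in Lemma \ref{lemma 1 section 2}.
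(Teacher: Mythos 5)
Your forward direction is correct and is essentially the paper's argument verbatim: convexity of $f$ plus monotonicity and convexity of $t\mapsto e^{-\lambda t}$ gives convexity of $r_\lambda$ for all $\lambda<0$, which places you in case (ii) of Definition \ref{convindex_def_infdimensional} and yields $c(f)\geq 0$.

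The converse, however, contains a genuine error: the second inequality in your displayed chain points the wrong way. Jensen's inequality for the \emph{concave} function $u\mapsto -\tfrac{1}{\lambda}\ln u$ (with $\lambda<0$) gives
\[
-\tfrac{1}{\lambda}\ln\bigl(\eta e^{-\lambda f(x_1)}+(1-\eta)e^{-\lambda f(x_2)}\bigr)\ \geq\ \eta f(x_1)+(1-\eta)f(x_2),
\]
not $\leq$. (Concretely, with $\lambda=-1$, $\eta=\tfrac12$, $f(x_1)=0$, $f(x_2)=1$, the left side is $\ln\bigl(\tfrac{1+e}{2}\bigr)\approx 0.62$ while the right side is $0.5$.) So your two inequalities read $f(\eta x_1+(1-\eta)x_2)\leq A$ and $A\geq \eta f(x_1)+(1-\eta)f(x_2)$, and the chain does not close; in particular your remark that ``no limit is even needed'' is wrong. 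The exponential mean $A=A(\lambda)$ dominates the arithmetic mean for every fixed $\lambda<0$ and only converges to it as $\lambda\to 0^-$, so a limiting argument is exactly what is required. This is what the paper does: it sets $k(\lambda)=r_\lambda(\eta x_1+(1-\eta)x_2)-\eta r_\lambda(x_1)-(1-\eta)r_\lambda(x_2)$, observes $k(\lambda)\leq 0$ for $\lambda<0$ and $k(0)=0$, and extracts Jensen's inequality for $f$ from the sign of $k'(0)$ (a first-order expansion at $\lambda=0$, which is the differential form of the limit you skipped). The paper's derivative computation also handles the boundary case $f(\eta x_1+(1-\eta)x_2)=+\infty$ with finite $f(x_1),f(x_2)$, where $k(\lambda)=+\infty$ for $\lambda<0$ gives an immediate contradiction; your proposal flags this case but the mechanism you propose for the main case does not work as stated. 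Replacing your second inequality by the limit $\lim_{\lambda\to 0^-}A(\lambda)=\eta f(x_1)+(1-\eta)f(x_2)$ would repair the argument and make it equivalent to the paper's.
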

\begin{proof}
	Assume that $f$ is a convex function. Let $\bar{\lambda} < 0$. For every $\eta \in [0,1]$, $x_{1}, x_{2}\in \X$,
	\[
		r_{\bar{\lambda}}(\eta x_{1} + (1-\eta)x_{2})= e^{-\bar{\lambda} f(\eta x_{1} + (1-\eta)x_{2})}
\leq e^{-\bar{\lambda} (\eta f(x_{1}) + (1-\eta)f(x_{2}))}
\leq \eta r_{\bar{\lambda}}(x_{1}) + (1-\eta) r_{\bar{\lambda}}(x_{2}),
	\]
	where the first inequality follows from the convexity of $f$ and monotonicity of $t\mapsto e^{-\overline{\lambda} t}$ , the second inequality holds since $t\mapsto e^{-\overline{\lambda} t}$ defined on $[0,+\infty]$ is convex. Therefore,
	$r_{\bar{\lambda}}$ is convex. Hence, by Definition \ref{convindex_def_infdimensional},
	$c(f) = \sup \{\lambda\geq 0 \colon r_{\lambda}\text{ is concave} \}$ so that $c(f) \geq 0$.
	
	Conversely, assume that $c(f)\geq 0$. Let $x_{1}, x_{2}\in \X$ and $\eta\in [0,1]$. We claim that
	\begin{equation}\label{jensen}
		f(\eta x_{1} + (1-\eta )x_{2}) \leq \eta f(x_{1}) + (1-\eta)f(x_{2}).
	\end{equation}
	Note that \eqref{jensen} holds trivially if $f(x_{1})=+\infty$ or $f(x_{2})=+\infty$. Hence, we consider the following cases:
	\begin{enumerate}[(a)]
		\item $f(x_{1}) < +\infty$, $f(x_{2}) < +\infty$,  $f(\eta x_{1} + (1-\eta)x_{2}) < +\infty$,
		\item $f(x_{1}) < +\infty$, $f(x_{2}) < +\infty$,  $f(\eta x_{1} + (1-\eta)x_{2}) = +\infty$.
	\end{enumerate} 
	For each $\lambda\in \R$, define
	\begin{align*}
		k(\lambda)
		&\coloneqq  r_{\lambda}(\eta x_{1} + (1-\eta)x_{2}) - \eta r_{\lambda}(x_{1})- (1-\eta) r_{\lambda}(x_{2})\\
		&= e^{-\lambda f(\eta x_{1} + (1-\eta)x_{2})} - \eta e^{-\lambda f(x_{1})} - (1-\eta) e^{-\lambda f(x_{2})}.
	\end{align*}
	By Remark \ref{cf-remark}, $r_{\lambda}$ is convex for every $\lambda<0$, which implies that $k(\lambda)\leq 0$ for every $\lambda < 0$. Observe that $k(\lambda) = +\infty$ for every $\lambda < 0$ if case (b) is true, which is in contradiction with the previous statement. Therefore, we are only left with case (a). Noting that $k(0)=0$, we get
	\[
		k^\prime_{-}(0) \coloneqq \lim_{\lambda\rightarrow 0^{-}}\frac{k(\lambda) - k(0)}{\lambda} \geq 0.
	\]
	Furthermore, $k$ is differentiable everywhere on $\R$, in particular, at $\lambda=0$. Hence, the derivative and the left derivative of $k$ at 0 are equal, that is,
	\[
	0 \leq k^\prime_{-}(0) =  k^\prime(0) = -f(\eta x_{1} + (1-\eta )x_{2}) + \eta f(x_{1}) + (1-\eta)f(x_{2}).
	\]
	Therefore, \eqref{jensen} holds. Since $\eta, x_{1}, x_{2}$ are arbitrary, $f$ is convex. 
\end{proof}

Before proceeding further, we give a basic scaling property that will be useful later.
\begin{lem} \label{Homegenity Convexity Index}
	Let $w\in\R_{+}$. Then, $c(wf) = \frac{1}{w}c(f)$.
\end{lem}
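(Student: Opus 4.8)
The plan is to reduce the claim to a single change-of-variables identity. Denote by $r_{\lambda}^{f}$ and $r_{\lambda}^{wf}$ the functions associated via \eqref{rlambda} to $f$ and to $wf$, respectively. The crucial observation is that for every $\lambda\in\R$ and every $x\in\X$,
\[
  r_{\lambda}^{wf}(x)=e^{-\lambda (wf)(x)}=e^{-(w\lambda)f(x)}=r_{w\lambda}^{f}(x),
\]
so that $r_{\lambda}^{wf}=r_{w\lambda}^{f}$. In particular, $r_{\lambda}^{wf}$ is convex (resp.\ concave) if and only if $r_{w\lambda}^{f}$ is convex (resp.\ concave). Everything below is an exploitation of this identity together with the fact that, for $w>0$, the map $\lambda\mapsto w\lambda$ is an increasing bijection of $(-\infty,0)$ onto itself and of $[0,+\infty)$ onto itself.

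First I would treat the substantive case $w>0$, and check that the dichotomy of Definition \ref{convindex_def_infdimensional} is preserved under scaling, i.e.\ that $wf$ falls into case (i) precisely when $f$ does. Indeed, there is some $\bar{\lambda}<0$ with $r_{\bar{\lambda}}^{wf}$ not convex if and only if there is some $\bar{\mu}=w\bar{\lambda}<0$ with $r_{\bar{\mu}}^{f}$ not convex, since $\lambda\mapsto w\lambda$ is a bijection of $(-\infty,0)$. Consequently $f$ and $wf$ are simultaneously in case (i) or simultaneously in case (ii), so it suffices to match the two suprema in each case.

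In case (i), substituting $\mu=w\lambda$ gives
\[
  c(wf)=\sup\{\lambda<0\colon r_{\lambda}^{wf}\text{ is convex}\}
        =\sup\{\lambda<0\colon r_{w\lambda}^{f}\text{ is convex}\}
        =\sup\Big\{\tfrac{\mu}{w}\colon \mu<0,\ r_{\mu}^{f}\text{ is convex}\Big\},
\]
and, since $w>0$, pulling the positive factor $\tfrac{1}{w}$ out of the supremum yields $c(wf)=\tfrac{1}{w}\,c(f)$. Case (ii) is identical, with $\{\lambda\geq0\}$ in place of $\{\lambda<0\}$ and ``concave'' in place of ``convex''; here again $\mu=w\lambda$ runs over $[0,+\infty)$ and the positive scaling of the supremum gives the result.

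The computation is essentially bookkeeping, so the only points demanding care — and hence the main (minor) obstacles — are the behaviour of the supremum under positive scaling when its value is $\pm\infty$, which must be checked against the conventions on $\bar{\R}$, and the degenerate endpoint $w=0$. For $w=0$ one has $wf\equiv0$, so $r_{\lambda}^{wf}\equiv1$ is convex for all $\lambda<0$ and concave for all $\lambda\geq0$, giving $c(0\cdot f)=+\infty$; this agrees with $\tfrac{1}{w}c(f)=\tfrac{1}{0}\,c(f)$ under the stated conventions only when $c(f)>0$, so I would either read $\R_{+}$ as the strictly positive reals or record $w=0$ as a separate, convention-dependent boundary case.
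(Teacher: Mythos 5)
Your proof is correct and follows essentially the same route as the paper: both arguments reduce to the substitution $\mu=w\lambda$ in the supremum defining $c(\cdot)$, after checking that $f$ and $wf$ land in the same branch of Definition \ref{convindex_def_infdimensional} (the paper does this via Theorem \ref{convindex_values_infinitedimensional}, you do it directly from the identity $r_{\lambda}^{wf}=r_{w\lambda}^{f}$ and the bijectivity of $\lambda\mapsto w\lambda$). Your caveat about $w=0$ is well taken — the paper dismisses that case as trivial, but under the stated conventions $\tfrac{1}{0}\cdot c(f)$ only equals $c(0\cdot f)=+\infty$ when $c(f)>0$, so reading $w>0$ (or treating $w=0$ separately) is indeed the right call.
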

\begin{proof}
	The result follows trivially when $w = 0$. Suppose $w>0$. We consider the following cases:
	\begin{enumerate}[(a)]
		\item Suppose that $f$ is convex. Then, $wf$ is convex. Theorem \ref{convindex_values_infinitedimensional} together with Remark \ref{cf-remark} implies that $c(wf) = \sup\{\lambda\in\R \colon \lambda\geq 0, e^{-\lambda w f} \text{ is concave}\}$. Observe that
		\begin{align*}
			c(w f) 
			&= \sup\{\lambda\in\R \colon \lambda\geq 0, e^{-\lambda w f} \text{ is concave}\}\\
			&= \frac{1}{w}\sup\{\lambda w\in\R \colon \lambda\geq 0, e^{-\lambda w f} \text{ is concave}\}\\
			&=\frac{1}{w}\sup\{\mu\geq 0 \colon e^{-\mu f} \text{ is concave}\}= \frac{1}{w}c(f).
		\end{align*}
		
		\item Suppose that $f$ is not convex. Then, $wf$ is not convex. Theorem \ref{convindex_values_infinitedimensional} together with Remark \ref{cf-remark} implies that  $c(w f) = \sup\{\lambda\in\R \colon \lambda < 0, e^{-\lambda w f} \text{ is convex}\}$. Similar to case (a), we have
		\begin{align*}
			c(w f) 
			&= \sup\{\lambda\in\R \colon \lambda < 0, e^{-\lambda w f} \text{ is convex}\}\\
			&= \frac{1}{w}\sup\{\lambda w\in\R \colon \lambda< 0, e^{-\lambda w f} \text{ is convex}\}\\
			&=\frac{1}{w}\sup\{\mu<0 \colon e^{-\mu f} \text{ is convex}\}= \frac{1}{w}c(f).
		\end{align*}
	\end{enumerate}
	Hence, the result holds. 
\end{proof}

As discussed in \citet[Proposition 3(b)]{Crouzeix}, constant real-valued functions defined on $\R^{n}$ can be identified by their convexity indices. In the proof of this result, they use the fact that a convex function with finite values is continuous. Such a result does not hold for extended real-valued functions defined on general topological vector spaces. Fortunately, the characterization is still valid under a mild continuity condition, as shown in the next theorem.

\begin{thm}\label{convindex_values_infinitedimensional_constant}
	Assume that $f$ is lower semicontinuous. Then, $f$ is a constant function if and only if $c(f) = +\infty$.
\end{thm}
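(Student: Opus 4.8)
The plan is to prove the two implications separately, the forward one being essentially immediate and the converse carrying all the content. For the forward direction, suppose $f\equiv a$; since $f$ is proper we have $a\in\R$, so $r_\lambda\equiv e^{-\lambda a}$ is a (finite) constant and is therefore both convex and concave for every $\lambda$. In particular $r_\lambda$ is convex for every $\lambda<0$, placing us in case (ii) of Definition \ref{convindex_def_infdimensional}, and the set $\{\lambda\geq 0\colon r_\lambda \text{ is concave}\}$ is all of $[0,+\infty)$, so its supremum is $+\infty$, i.e. $c(f)=+\infty$. No continuity is used here.

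For the converse, assume $c(f)=+\infty$. By Remark \ref{cf-remark} we are in case (ii), so $r_\lambda$ is convex for every $\lambda<0$ and $\sup\{\lambda\geq 0\colon r_\lambda \text{ is concave}\}=+\infty$. The first step I would record is that $r_\lambda$ is in fact concave for \emph{every} $\lambda>0$: given $\lambda>0$, unboundedness of the supremum provides some $\lambda'>\lambda$ with $r_{\lambda'}$ concave, and Lemma \ref{lemma 1 section 2}(ii) then forces $r_\lambda$ concave as well. The core of the proof is a midpoint/reflection trick. Fix any $v\in\dom f$ (nonempty by properness), let $u\in\X$ be arbitrary, and note the identity $u=\frac{1}{2}(2u-v)+\frac{1}{2}v$. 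Applying concavity of $r_\lambda$ at this midpoint gives
\[
  e^{-\lambda f(u)} = r_\lambda(u) \;\geq\; \frac{1}{2}\,r_\lambda(2u-v) + \frac{1}{2}\,r_\lambda(v) \;\geq\; \frac{1}{2}\,e^{-\lambda f(v)} \;>\; 0,
\]
where the second inequality merely discards the nonnegative term $\frac{1}{2}r_\lambda(2u-v)$ and positivity holds because $f(v)\in\R$. Taking logarithms yields $f(u)\leq f(v)+\frac{\log 2}{\lambda}$ for every $\lambda>0$, and letting $\lambda\to+\infty$ gives $f(u)\leq f(v)$.

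Two conclusions drop out of this single inequality. First, $e^{-\lambda f(u)}>0$ forces $f(u)<+\infty$, so $\dom f=\X$ and, with properness, $f$ is real-valued everywhere. Second, once every point lies in $\dom f$, the bound $f(u)\leq f(v)$ may be applied to any ordered pair: taking the two points in both orders gives $f(u)\leq f(v)$ and $f(v)\leq f(u)$, hence $f(u)=f(v)$, so $f$ is constant. The step I expect to require the most care is handling the extended-valued nature of $f$ cleanly, and the nice feature of this route is that the same reflection inequality simultaneously rules out the value $+\infty$ (forcing $\dom f=\X$) and delivers constancy, so one never has to argue separately on the effective domain or worry about its relative interior. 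I would also remark that this argument relies only on the algebraic midpoint structure of $\X$ together with concavity of all $r_\lambda$, $\lambda>0$; the lower semicontinuity hypothesis, which here substitutes for the continuity of finite convex functions invoked in the finite-dimensional treatment of \citet{Crouzeix}, does not seem to be needed for this particular proof.
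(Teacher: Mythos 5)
Your proof is correct, and the converse direction follows a genuinely different and more economical route than the paper's. The paper argues by contradiction and splits into two cases according to whether a hypothetically non-constant $f$ takes one or at least two finite values; lower semicontinuity is used in the first case to make $\X\sm\dom f$ open (so that an interior point of a segment leaving $\dom f$ can be located) and in the second case to produce a point $x^{\bar{\eta}}$ with $f(x_1)<f(x^{\bar{\eta}})<f(x_2)$, after which a $\limsup_{\lambda\to\infty}$ computation on ratios of differences of $r_\lambda$-values yields the contradiction. Your reflection trick tests concavity of $r_\lambda$ at the midpoint of $2u-v$ and $v$ instead of along the segment from $u$ to $v$, and the resulting inequality $f(u)\leq f(v)+\lambda^{-1}\log 2$ handles both of the paper's cases at once: it simultaneously forces $\dom f=\X$ and, after letting $\lambda\to+\infty$ and symmetrizing in $u$ and $v$, constancy. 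All the supporting steps check out: $c(f)=+\infty$ does give concavity of $r_\lambda$ for every $\lambda>0$ via Lemma \ref{lemma 1 section 2}(ii); for $\lambda>0$ properness makes $r_\lambda$ finite and nonnegative, so discarding the term $\frac{1}{2}r_\lambda(2u-v)\geq 0$ and then taking logarithms of two positive reals are both legitimate. The payoff of your argument is that it proves a slightly stronger statement: the lower semicontinuity hypothesis is not needed for either implication, whereas the paper introduces it precisely as a substitute for the continuity of finite convex functions exploited in the finite-dimensional setting; your proof shows that no such substitute is required here. The paper's longer case analysis buys nothing extra for this particular theorem beyond staying close to that finite-dimensional template.
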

\begin{proof}
	Assume that $f$ is a constant function. Then, for every $\lambda\in\R$, the function $r_{\lambda}$ is constant, hence convex. By Definition \ref{convindex_def_infdimensional}, we have
	\begin{equation}\label{cfcase2}
		c(f) = \sup\{\lambda \geq 0\colon r_{\lambda} \text{ is concave} \}.
	\end{equation}
	Moreover, for each $\lambda\geq 0$, the constant function $r_\lambda$ is also concave. Therefore, $c(f) = +\infty$.
	
	Conversely, assume that $c(f) = +\infty$. Then, by Theorem \ref{convindex_values_infinitedimensional}, $f$ is convex. To get a contradiction, suppose that $f$ is not constant. We need to consider the following two cases:
	\begin{enumerate}[(a)]
		\item Suppose that $f$ takes only one finite value, that is, $f(\X)=\cb{c,+\infty}$ for some $c\in\R$. Then, for every $\lambda<0$, we have $r_\lambda(x)=e^{-\lambda c}$ for every $x\in\dom f$ and $r_\lambda(x)=+\infty$ for every $x\in \X\sm\dom f$; hence $r_\lambda$ is convex. So \eqref{cfcase2} is valid. Moreover, since $f$ is proper and lower semicontinuous, $\dom f = \cb{x\in \X\colon f(x)\leq c}$ is a closed set. Let $\lambda>0$. Then, $r_\lambda(x)=e^{-\lambda c}>0$ for every $x\in\dom f$ and $r_\lambda(x)=0$ for every $x\in \X\sm\dom f$. Let $x_1\in \dom f$ and $x_2\in X\sm \dom f$. Since $\X$ is a topological vector space, the function $[0,1]\ni \eta\mapsto x^\eta\coloneqq \eta x_1+(1-\eta)x_2\in \X$ is continuous. Hence, $\lim_{\eta\rightarrow 0}x^\eta=x_2$. Since $x_2\in \X\sm \dom f$ and $\X\sm\dom f$ is an open set, continuity at $\eta =0$ implies that there exists $\bar{\eta}\in(0,1)$ such that $x^{\bar{\eta}}\in \X\sm\dom f$. In particular, $f(x^{\bar{\eta}})=+\infty$ and $r_\lambda(x^{\bar{\eta}})=0$. It follows that
		$r_\lambda(x^{\bar{\eta}})=0< \bar{\eta}e^{-\lambda c} =\bar{\eta}r_\lambda(x_1)+(1-\bar{\eta})r_\lambda(x_2)$	so that $r_\lambda$ is not concave. Therefore, $c(f)=0$ by \eqref{cfcase2}, which is a contradiction to $c(f)=+\infty$. So this case is eliminated.
		\item Suppose that $f$ takes at least two finite values, that is, there exist $x_{1}, x_{2}\in \X$ such that $f(x_{1}) < f(x_{2})<+\infty$. Since $f$ is convex and $f(x_{1}) < f(x_{2})$, for every $\eta \in (0,1)$, we have
		\begin{equation}\label{eq:Theorem Constant 1} 
			f(\eta x_{1} + (1-\eta)x_{2}) \leq \eta f(x_{1}) + (1-\eta) f(x_{2}) < f(x_{2}). 
		\end{equation}
		Moreover, since $f$ is lower semicontinuous at $x_2$, we have $ f(x_2)\leq \liminf_{x\to x_{2}} f(x)$,
		where
		\[
		\liminf_{x\to x_{2}} f(x) = \sup \cb{ \inf\cb{f(x) \colon x\in U \setminus \{x_{2}\}  } \colon U\subseteq \X\text{ is open, } x_{2}\in U, U \setminus\{x_{2}\}\neq \emptyset \}  }.
		\]
		Since $f(x_1)<f(x_2)\leq \liminf_{x\to x_{2}} f(x)$, there exists an open neighborhood $\bar{U}$ of $x_{2}$ such that $f(x_1)<\inf\{f(x) \colon x\in \bar{U} \setminus \{x_{2}\} \}$. Observe that 
		\begin{equation}\label{eq:Theorem Constant 2}
			f(x_1) < f(x)\; \text{for every } x\in \bar{U}.
		\end{equation}
		Similar to case (a), the function $ [0,1]\ni \eta\mapsto x^\eta\coloneqq \eta x_{1} + (1-\eta) x_{2}\in \X$ is continuous with $\lim_{\eta \to 0} x^\eta = x_{2}$, which implies that there exists $\bar{\eta}\in(0,1)$ such that $x^{\bar{\eta}} \in \bar{U}$. By \eqref{eq:Theorem Constant 2}, $f(x_1)<f(x^{\bar{\eta}})$. Together with \eqref{eq:Theorem Constant 1}, we have
		\begin{equation}\label{eq:Theorem Constant 3}
			f(x_{1}) < f(x^{\bar{\eta}}) < f(x_{2}).
		\end{equation}
		By Remark \ref{cf-remark}, $r_{\lambda}$ is concave for every $\lambda \geq 0$ since $c(f) = + \infty$. Therefore,
		\begin{equation} \label{eq:Theorem Constant 4}
			r_{\lambda}(x^{\bar{\eta}}) - \bar{\eta} r_{\lambda}(x_{1}) - (1-\bar{\eta}) r_{\lambda}(x_{2}) \geq 0 \; \text{for every }\lambda > 0.
		\end{equation}
		Let $\lambda>0$. With some algebraic operations, \eqref{eq:Theorem Constant 4} can be rewritten as
		\begin{equation}\label{eq:Theorem Constant 5} 
			\frac{ r_{\lambda}(x_{1}) - r_{\lambda}(x_{2}) }{ r_{\lambda}(x^{\bar{\eta}}) - r_{\lambda}(x_{2})} \leq \frac{1}{\bar{\eta}}.  
		\end{equation}
		Note that
		\begin{align*}
			\frac{ r_{\lambda}(x_{1}) - r_{\lambda}(x_{2}) }{ r_{\lambda}(x^{\bar{\eta}}) - r_{\lambda}(x_{2})}&= \frac{ r_{\lambda}(x_{1})  -  r_{\lambda}(x^{\bar{\eta}}) +  r_{\lambda}(x^{\bar{\eta}}) - r_{\lambda}(x_{2})}{ r_{\lambda}(x^{\bar{\eta}}) -  r_{\lambda}(x_{2}) }
			= \frac{r_{\lambda}(x_{1}) -  r_{\lambda}(x^{\bar{\eta}})}{ r_{\lambda}(x^{\bar{\eta}}) -  r_{\lambda}(x_{2}) } + 1\\
			&\geq  \frac{r_{\lambda}(x_{1}) -  r_{\lambda}(x^{\bar{\eta}})}{ r_{\lambda}(x^{\bar{\eta}}) -  r_{\lambda}(x_{2}) }\geq   \frac{r_{\lambda}(x_{1}) -  r_{\lambda}(x^{\bar{\eta}})}{ r_{\lambda}(x^{\bar{\eta}}) }
			\geq  \frac{r_{\lambda}(x_{1})}{ r_{\lambda}(x^{\bar{\eta}}) } - 1= e^{-\lambda (f(x_{1})- f(x^{\bar{\eta}}))} - 1.
		\end{align*}
		By \eqref{eq:Theorem Constant 3}, $f(x_{1})- f(x^{\bar{\eta}})< 0$. Therefore,
		\[
			\limsup_{\lambda\to\infty} \frac{ r_{\lambda}(x_{1}) - r_{\lambda}(x_{2}) }{ r_{\lambda}(x^{\bar{\eta}}) - r_{\lambda}(x_{2})}
			\geq \limsup_{\lambda\to\infty} e^{-\lambda (f(x_{1})- f(x^{\bar{\eta}}))} - 1
			= \lim_{\lambda\to\infty}e^{-\lambda (f(x_{1})- f(x^{\bar{\eta}}))} - 1
			= + \infty.
		\]
		This implies that there exists $\bar{\lambda} > 0$ such that
		\[
		\frac{ r_{\bar{\lambda}}(x_{1}) - r_{\bar{\lambda}}(x_{2}) }{ r_{\bar{\lambda}}(x^{\bar{\eta}}) - r_{\bar{\lambda}}(x_{2})} > \frac{1}{\bar{\eta}}.
		\]
		By \eqref{eq:Theorem Constant 5}, $r_{\bar{\lambda}}$ is not concave, which is in contradiction with \eqref{eq:Theorem Constant 4}. Hence, $f$ is constant.
	\end{enumerate}		
\end{proof}	 

\subsection{Finite decomposable sums}\label{subsec:finsum}

In this section, we study additively decomposable quasiconvex functions on general topological vector spaces. The main results are (i) Theorem \ref{TwoFactorCase}, which characterizes a quasiconvex decomposable sum in terms of the sum of the convexity indices of its coordinate functions, and (ii) Theorem \ref{MainResult-2.2}, in which we consider a quasiconvex decomposable sum, and show that either all the coordinate functions are convex, or all except one are convex and convexity indices of coordinate functions satisfies an additive formula, and vice versa.

First we prove a technical result that we need in the sequel.

\begin{lem} \label{Lemma 2.2 - 1}
	Suppose that $f$ is quasiconvex and not convex. Then, there exist $x_{1},x_{2}\in \dom f$, $\bar{t}\in[0,1)$ and $\alpha\in\R$ satisying the following properties:
	\begin{enumerate}[(i)]
	\item It holds $f(x_{2}) < \alpha \leq f(x_{1})$.
	\item For every $t\in [0, \bar{t}]$, it holds $f(tx_{1} + (1-t) x_{2} ) \leq \alpha + (f(x_{1}) - f(x_{2})) (t - \bar{t})$.
	\item For every $t\in (\bar{t},1]$, it holds $\alpha \leq f(tx_{1} + (1-t) x_{2} ) < \alpha + (f(x_{1}) - f(x_{2})) (t - \bar{t})$.
	\end{enumerate} 
\end{lem}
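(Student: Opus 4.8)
The plan is to reduce the statement to the one–dimensional restriction of $f$ to a suitable segment and to read off $\bar t$ and $\alpha$ from the lowest line, parallel to the chord, that dominates $f$ along that segment. Since $f$ is not convex, Jensen's inequality fails: there are $a,b\in\dom f$ with finite values and $\lambda\in(0,1)$ such that $f(\lambda a+(1-\lambda)b)>\lambda f(a)+(1-\lambda)f(b)$. If $f(a)=f(b)$, quasiconvexity would force $f(\lambda a+(1-\lambda)b)\le\max\{f(a),f(b)\}=\lambda f(a)+(1-\lambda)f(b)$, a contradiction; hence $f(a)\neq f(b)$, and after relabeling I may call the endpoints $x_2,x_1$ so that $f(x_2)<f(x_1)$. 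Set $m:=f(x_1)-f(x_2)>0$ and $\phi(t):=f(tx_1+(1-t)x_2)$, a quasiconvex function on $[0,1]$ with $\phi(0)=f(x_2)$, $\phi(1)=f(x_1)$, and (again by quasiconvexity) $\phi(t)\le f(x_1)$ for all $t$. The bulge above the chord survives relabeling, so $\psi(t):=\phi(t)-mt$, which satisfies $\psi(0)=\psi(1)=f(x_2)$, has $c^\ast:=\sup_{t\in[0,1]}\psi(t)>f(x_2)$; the line $\ell(t):=c^\ast+mt$ is then the smallest line of slope $m$ with $\ell\ge\phi$ on $[0,1]$.

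Next I would pin down the break-point. Because no continuity is assumed, $\sup\psi$ need not be attained, so I cannot simply take the location of the maximal bulge; instead I set
\[
\bar t:=\inf_{\epsilon>0}\sup\{t\in[0,1]:\psi(t)>c^\ast-\epsilon\},\qquad \alpha:=c^\ast+m\bar t,
\]
so that the line through $(\bar t,\alpha)$ of slope $m$ is exactly $\ell$, i.e. $\alpha+m(t-\bar t)=c^\ast+mt$; this $\bar t$ plays the role of the right-most near-maximizer of $\psi$. Condition (ii) is then immediate, since $\psi\le c^\ast$ gives $\phi(t)\le\ell(t)=\alpha+m(t-\bar t)$ for every $t$. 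To see $\bar t<1$ I would use $\psi(t)\le f(x_1)-mt$: with $\epsilon_0:=(c^\ast-f(x_2))/2>0$ every $t$ with $\psi(t)>c^\ast-\epsilon_0$ satisfies $t<1-\epsilon_0/m$, whence $\bar t\le\sup\{t:\psi(t)>c^\ast-\epsilon_0\}<1$. The strict upper bound in (iii) also follows from the definition: if $\psi(t_0)=c^\ast$ for some $t_0>\bar t$, then $t_0\le\sup\{t:\psi(t)>c^\ast-\epsilon\}$ for every $\epsilon$, forcing $\bar t\ge t_0$, a contradiction; hence $\psi(t)<c^\ast$, i.e. $\phi(t)<\ell(t)$, for all $t>\bar t$.

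The heart of the argument is the lower bound $\alpha\le\phi(t)$ in (iii), which I would establish by a limiting use of quasiconvexity. For each small $\epsilon>0$ choose $s_\epsilon$ with $\psi(s_\epsilon)>c^\ast-\epsilon$ and $s_\epsilon$ within $\epsilon$ of $\sup\{t:\psi(t)>c^\ast-\epsilon\}$; then $s_\epsilon\to\bar t$ and $\psi(s_\epsilon)\to c^\ast$, so $\phi(s_\epsilon)=\psi(s_\epsilon)+ms_\epsilon\to c^\ast+m\bar t=\alpha$. Now suppose some $t'\in(\bar t,1]$ had $\phi(t')<\alpha$. For $\epsilon$ small enough $s_\epsilon\in(0,t')$ (note $s_\epsilon\neq 0$ because $\psi(0)=f(x_2)<c^\ast-\epsilon$), so quasiconvexity yields $\phi(s_\epsilon)\le\max\{\phi(0),\phi(t')\}=\max\{f(x_2),\phi(t')\}$; letting $\epsilon\to0$ gives $\alpha\le\max\{f(x_2),\phi(t')\}$, impossible since both $f(x_2)<\alpha$ and $\phi(t')<\alpha$. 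Hence $\phi(t)\ge\alpha$ on $(\bar t,1]$, and the case $t'=1$ shows $\alpha\le\phi(1)=f(x_1)$; together with $\alpha=c^\ast+m\bar t\ge c^\ast>f(x_2)$ this gives (i).

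The main obstacle is precisely the absence of any continuity hypothesis: the maximal bulge may fail to be attained, so the naive choice ``$\bar t=\arg\max\psi$'' is unavailable and $\phi$ need not touch $\ell$ at a single point. The two devices that circumvent this are the description of $\bar t$ as $\inf_{\epsilon>0}\sup\{t:\psi(t)>c^\ast-\epsilon\}$, which is well defined and lies in $[0,1)$ regardless of attainment, and the limiting quasiconvexity argument above, which only ever invokes near-maximizers $s_\epsilon$. If $f$ were assumed lower semicontinuous, as in the rest of this subsection, one could handle a maximizing sequence more directly, but the sketch here needs only quasiconvexity together with the bound $\phi\le f(x_1)$.
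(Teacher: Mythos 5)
Your proposal is correct and follows essentially the same route as the paper: your $\psi$ and $c^\ast$ are the paper's $g$ and $\Delta$ shifted by $f(x_2)$, your $\bar t=\inf_{\epsilon>0}\sup\{t:\psi(t)>c^\ast-\epsilon\}$ is the paper's $\bar t=\lim_{\gamma\to\Delta}t_\gamma$, and the key step (the lower bound $\alpha\le\phi(t)$ via quasiconvexity applied to near-maximizers $s_\epsilon\in(0,t)$, ruling out the $f(x_2)$ branch of the max) is the same argument, merely phrased as a contradiction rather than a direct limit. The only cosmetic differences are the use of strict rather than non-strict near-maximizer sets and the order in which (i) is established.
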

\begin{proof}
	Since $f$ is not convex, there exist $x_{1}, x_{2} \in \dom f$ such that $f(x_1)\geq f(x_2)$ and
	\begin{equation}\label{eq:Delta1}
		\Delta \coloneqq \sup_{t\in[0,1]}  g(t) > 0, 
	\end{equation}
	where $g(t) \coloneqq f( t x_{1} + (1-t) x_{2} ) - t f(x_{1}) - (1-t) f(x_{2})$. 
	Then, for every $ t\in[0,1]$, 
	\[
		g(t) \leq f(tx_{1} + (1-t)x_{2} ) - f(x_{2}) \leq f(x_{1}) - f(x_{2}),
	\]
	where the first inequality holds since $f(x_{1}) \geq f(x_{2})$ and the second inequality follows from the quasiconvexity of $f$. Hence, 
	\begin{equation} \label{eq:Delta2} 
		0 < \Delta \leq f(x_{1}) - f(x_{2}).  
	\end{equation}
	Let $\gamma\in[0,\Delta)$ and define $t_{\gamma} \coloneqq \sup\{  t\in[0,1] \colon g(t) \geq \gamma\}$. Obviously, $t_{\gamma} > 0$. We claim that
	\begin{equation}\label{eq:Delta4}
		t_{\gamma} \leq \frac{f(x_{1}) - f(x_{2}) -\gamma}{f(x_{1}) - f(x_{2})}.
	\end{equation}
	Indeed, supposing otherwise, there exists $t^{*} \in \Big( \frac{f(x_{1}) - f(x_{2}) -\gamma}{f(x_{1}) - f(x_{2})}, t_{\gamma} \Big]$ such that 
	\begin{equation}\label{eq:Delta3}
		g(t^{*}) \geq \gamma.
	\end{equation}
	Observe that
	\[
		t^* > \frac{f(x_{1}) - f(x_{2}) -\gamma}{f(x_{1}) - f(x_{2})} \quad \iff \quad  t^* f(x_1) +(1-t^*)f(x_2) >  f(x_{1}) - \gamma.
	\]
	Hence,
	\begin{align*}
		g(t^{*}) 
		&= f(t^{*} x_{1} + (1-t^{*}) x_{2} ) - t^{*}f(x_{1}) - (1-t^{*})f(x_{2})\\
		&< f(t^{*}x_{1} + (1-t^{*})x_{2}) - f(x_{1}) +\gamma
		\leq \max \{f(x_{1}) , f(x_{2}) \} -f(x_{1}) + \gamma
		= \gamma,
	\end{align*}
	where the second inequality follows from the quasiconvexity of $f$ while the last equality follows from the assumption that $f(x_{1}) \geq f(x_{2})$. This is in contradiction with \eqref{eq:Delta3}. Therefore,
	\eqref{eq:Delta4} holds for every $\gamma\in[0,\Delta)$. 

	Note that $\gamma\mapsto t_\gamma$ is a nonincreasing function.	Let us define
	\[
	\overline{t} \coloneqq \lim_{\gamma \to \Delta} t_{\gamma}, \quad\quad \alpha \coloneqq \Delta + \bar{t}f(x_{1}) + (1-\bar{t}) f(x_{2}).
	\]
	It follows from \eqref{eq:Delta2} and letting $\gamma\to\Delta$ in \eqref{eq:Delta4} that 
	\begin{equation} \label{eq:Delta6}
		0\leq \overline{t} \leq \frac{f(x_{1}) - f(x_{2}) -\Delta}{f(x_{1}) - f(x_{2})} < 1. 
	\end{equation}
	Therefore, 
	\[
		f(x_{2})\negthinspace <\negthinspace f(x_{2}) \negthinspace+ \negthinspace\Delta \negthinspace+\negthinspace \overline{t}(f(x_{1}) - f(x_{2})) 
		= \alpha
		\leq f(x_{2}) \negthinspace+\negthinspace \Delta \negthinspace+\negthinspace \frac{f(x_{1}) - f(x_{2}) -\Delta}{f(x_{1}) - f(x_{2})}(f(x_{1}) - f(x_{2}))= f(x_{1}),
	\]
	where the strict inequality holds since $\Delta + \overline{t}(f(x_{1}) - f(x_{2}))>0$ by \eqref{eq:Delta2}, \eqref{eq:Delta6}; and the non-strict inequality follows from \eqref{eq:Delta6}. Hence, property (i) follows.
	
	By the definition of $\Delta$, for every $t\in[0,1]$,
	\begin{equation}\label{eq:Delta7}
		f(t x_{1} + (1-t) x_{2}) \leq \Delta + t f(x_{1}) + (1-t) f(x_{2})  = \alpha + (t-\overline{t}) (f(x_{1}) - f(x_{2})).
	\end{equation}
	In particular, property (ii) holds. Now, assume that the inequality above holds with equality for some $\hat{t} \in[0,1]$. Then, $g(\hat{t}) = \Delta$. Observe that $g(\hat{t}) > \gamma$ for every $\gamma \in [0,\Delta)$. By the monotonicity of $\gamma\mapsto t_\gamma$, we have $\hat{t} \leq t_{\gamma}$ for every $\gamma \in [0,\Delta)$. Letting $\gamma\to\Delta$, we have $\hat{t} \leq \overline{t}$. Thus, letting $t\in (\overline{t}, 1]$, we have
	\begin{equation}\label{eq:Delta8}
		f(t x_{1} + (1-t) x_{2}) < \alpha +  (t-\overline{t}) (f(x_{1}) - f(x_{2})).
	\end{equation}
	It remains to show that $\alpha \leq f(t x_{1} + (1-t) x_{2})$. By the definition of $\overline{t}$ and the monotonicity $\gamma\mapsto t_{\gamma}$, we have $t_{\gamma} \geq \overline{t}$ for every $\gamma\in [0,\Delta)$. Hence, there exists $\overline{\gamma}\in (0, \Delta)$ such that
	\begin{equation} \label{eq:Delta9}
		\overline{t}\leq t_{\gamma} < t
	\end{equation}
	for every $\gamma\in [\overline{\gamma}, \Delta)$. Moreover, for every $\gamma\in [\overline{\gamma}, \Delta)$ and $\epsilon\in \big(0, \frac{t_{\gamma}}{2}\big)$, there exists $t_{\gamma, \epsilon} \in [t_{\gamma} - \epsilon, t_{\gamma}]$ such that $g(t_{\gamma, \epsilon}) \geq \gamma$. By the definition of $g$, this is equivalent to
	\begin{equation} \label{eq:Delta10}
		f(t_{\gamma, \epsilon} x_{1} + (1-t_{\gamma, \epsilon})x_{2}) \geq \gamma + f(x_{2}) + t_{\gamma, \epsilon} (f(x_{1}) - f(x_{2}) ) > f(x_{2}). 
	\end{equation}
	Note that $t \mapsto f(tx_{1} + (1-t) x_{2})$ is quasiconvex on $[0,1]$ as it is the composition of an affine function with a quasiconvex function. Observe that $t_{\gamma,\epsilon}$ can be written as a convex combination of $0$ and $t$ since $0< t_{\gamma, \epsilon}\leq t_{\gamma} < t$. Therefore, 
	$f(t_{\gamma, \epsilon} x_{1} + (1-t_{\gamma, \epsilon})x_{2}) \leq \max\{f(x_{2}), f(t x_{1} + (1-t) x_{2})\}$.
	By (\ref{eq:Delta10}), we have $f(t_{\gamma, \epsilon} x_{1} + (1-t_{\gamma, \epsilon}) x_{2}) > f(x_{2})$. Hence, 
	\begin{equation} \label{eq:Delta11}
		f(t x_{1} + (1-t) x_{2}) \geq f(t_{\gamma, \epsilon} x_{1} + (1-t_{\gamma, \epsilon})x_{2}).
	\end{equation}
	It follows from \eqref{eq:Delta9}, \eqref{eq:Delta10} and \eqref{eq:Delta11} that
	\begin{align*}
		f(t x_{1} + (1-t)x_{2}) \geq \gamma + f(x_{2}) + t_{\gamma, \epsilon} (f(x_{1}) - f(x_{2})) 
		&\geq \gamma + f(x_{2} ) + (t_{\gamma} - \epsilon)( f(x_{1}) - f(x_{2}) )\\
		&\geq \gamma + f(x_{2}) + (\overline{t} - \epsilon) (f(x_{1}) - f(x_{2}) ) .
	\end{align*}
	Letting $\epsilon \to 0$ and $\gamma\to\Delta$ gives
	\begin{equation} \label{eq:DeltaResult3}
		f(t x_{1} + (1-t) x_{2}) \geq \Delta + f(x_{2}) + \bar{t}(f(x_{1})- f(x_{2})) = \alpha.
	\end{equation}
	Hence, property (iii) follows.
\end{proof}

For the rest of this section, we fix $n\in\N$; for each $i \in \{1, \ldots, n \}$, we let $\X_{i}$ be a topological vector space and $f_{i}$ a proper extended real-valued function on $\X_{i}$. Furthermore, we define a function $s \colon \X_{1}\times \ldots \times \X_{n} \to \R \cup \{ + \infty \}$ by
\begin{equation}\label{Section 2 - Sum Function}
	s(x_{1}, \ldots, x_{n}) \coloneqq f_{1}(x_{1}) + \ldots + f_{n}(x_{n}).
\end{equation}
In \citet{Debreu}, and \citet{Crouzeix}, the implications of the quasiconvexity of $s$ on $f_{i}$ are studied when $\X_{i}$ is an open subset of $\R^{n}$. We next extend their results to general topological vector spaces. 
\begin{thm}\label{TwoFactorCase}
	Assume that $f_{1}, \ldots, f_{n}$ are non-constant. Then, $s$ is quasiconvex if and only if 
	\begin{align*}
		c(f_{1}) + \ldots + c(f_{n}) \geq 0.
	\end{align*}
\end{thm}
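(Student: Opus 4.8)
My plan is to reduce quasiconvexity of $s$ to convexity of the individual factors and then read off the index inequality. The first observation is that, fixing every coordinate except the $i$-th at arbitrary points of the (nonempty) effective domains, the map $x_i\mapsto s$ equals $f_i$ up to an additive constant. Since restrictions of a quasiconvex function to affine subspaces are quasiconvex and quasiconvexity is invariant under adding constants, quasiconvexity of $s$ forces each $f_i$ to be quasiconvex; this is precisely what makes Lemma \ref{Lemma 2.2 - 1} applicable to any non-convex factor.

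For the ``only if'' direction I would argue by contraposition. Suppose some factor, say $f_1$, is not convex; it is quasiconvex by the above, so Lemma \ref{Lemma 2.2 - 1} yields points $u,v\in\dom f_1$, a threshold $\bar t\in[0,1)$, and a level $\alpha$ describing the kink of $f_1$ along $[v,u]$: the values are pinned against the lower barrier $\alpha$ for $t>\bar t$ while the linear barrier $\alpha+(f_1(u)-f_1(v))(t-\bar t)$ keeps rising. Using the non-constancy of another factor $f_j$ ($j\neq 1$), I pick $p,q\in\dom f_j$ with $f_j(p)\neq f_j(q)$, freeze the remaining coordinates at domain points, and place two points of a common sublevel set $\{s\le\beta\}$ — one combining the low side $v$ of $f_1$ with one value of $f_j$, the other combining the high side $u$ with the other value. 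Calibrating $\beta$ so that both test points lie in the sublevel set, the convex combination taken at the kink parameter $\bar t$ is driven out of $\{s\le\beta\}$ by properties (ii)--(iii) of Lemma \ref{Lemma 2.2 - 1}, contradicting quasiconvexity of $s$. Hence every $f_i$ is convex, so $c(f_i)\ge 0$ for all $i$ by Theorem \ref{convindex_values_infinitedimensional} and $\sum_i c(f_i)\ge 0$. This already covers general $n$: only one auxiliary non-constant factor is needed, the remaining coordinates being frozen.

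For the ``if'' direction the crucial point is a structural property of the index on a topological vector space: a non-constant function satisfies $c(f)\le 0$. Indeed, if $c(f)>0$ then by Definition \ref{convindex_def_infdimensional}(ii) some $r_\lambda=e^{-\lambda f}$ with $\lambda>0$ is concave; being nonnegative and real-valued, its restriction to any line is a bounded-below concave function on $\R$, hence constant, and letting the line vary through a point where $r_\lambda>0$ forces $r_\lambda$, and thus $f$, to be constant. Therefore the non-constant factors satisfy $c(f_i)\le 0$, so $\sum_i c(f_i)\ge 0$ forces every $c(f_i)=0$; by Theorem \ref{convindex_values_infinitedimensional} each $f_i$ is convex, whence $s$ is convex and a fortiori quasiconvex.

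I expect the main obstacle to be the construction in the ``only if'' direction: engineering the two sublevel-set points and verifying that their combination at $\bar t$ escapes $\{s\le\beta\}$ using only lower semicontinuity, rather than the continuity available on the open finite-dimensional domains of \citet{Crouzeix}. Lemma \ref{Homegenity Convexity Index} should help calibrate the kink slope $f_1(u)-f_1(v)$ against the variation $f_j(q)-f_j(p)$ when choosing $\beta$, and the limiting device $\lambda\to\infty$ from the proof of Theorem \ref{convindex_values_infinitedimensional_constant} is the natural substitute if one prefers to extract the sharp inequality directly through the exponential transform $e^{-\lambda s}=\prod_i r_\lambda^{(i)}$ instead of the index-sign observation used above.
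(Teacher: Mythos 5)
Your proposal does not establish the theorem, and the gap is in both directions. In the ``only if'' direction you claim that quasiconvexity of $s$ forces every factor $f_i$ to be convex. This is false, and it contradicts the paper's own Theorem \ref{MainResult-2.2}(ii) as well as the classical result of \citet{Debreu} and \citet{Crouzeix} that the paper generalizes: a quasiconvex decomposable sum may have exactly one non-convex factor. Concretely, $s(x,y)=f_1(x)+f_2(y)$ with $f_1(x)=\log x$ and $f_2(y)=-\log y$ on the positive quadrant (extended by $+\infty$ elsewhere) has sublevel sets $\{(x,y)\colon x>0,\ y>0,\ x\le e^\beta y\}$, which are convex, so $s$ is quasiconvex; yet $f_1$ is strictly concave and hence not convex. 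Your sketched construction --- two points of a common sublevel set whose combination at the kink parameter $\bar t$ escapes --- is never actually carried out, and it cannot be carried out from the stated ingredients: whether the combination escapes depends on a quantitative comparison between the size of the kink of $f_1$ and the curvature of the auxiliary factor $f_j$, which is precisely what the convexity indices measure, and mere non-constancy of $f_j$ gives no control. This is why the paper's proof assumes $c(f)+c(g)<0$, applies Lemma \ref{Lemma 2.2 - 1} to \emph{both} exponential transforms $\overline{f}=e^{-\lambda f}$ and $-\overline{g}=-e^{\lambda g}$ for a $\lambda$ chosen strictly between $c(f)$ and $-c(g)$, and then runs a genuinely two-parameter argument with the sets $S$ and $T$ to contradict \eqref{twofactors7}; without the index hypothesis there is no contradiction to be had.

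The ``if'' direction has a different problem. You reduce everything to the claim that a non-constant $f$ satisfies $c(f)\le 0$, so that $\sum_i c(f_i)\ge 0$ forces all $c(f_i)=0$ and hence all $f_i$ convex. Your line-restriction argument (a nonnegative concave function on a full line is constant) is a sharp observation, and it does point at a real tension in the paper's framework, where $\X$ is an entire vector space rather than an open convex factor set as in \citet{Debreu} and \citet{Crouzeix}. But as a proof of Theorem \ref{TwoFactorCase} it proves the wrong statement: it collapses the hypothesis to ``all $f_i$ convex,'' which would render condition (ii) of Theorem \ref{MainResult-2.2} and case (c) of Proposition \ref{ConvIndex_AdditiveLem_2} vacuous, and it is incompatible with your own (unprovable) ``only if'' half once one accepts the example above. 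The paper's converse argument instead treats the genuinely mixed case $c(f)<0<c(g)$ head on: it shows $\psi=e^{c(g)f}$ is convex and $\zeta=e^{-c(g)g}$ is concave, verifies directly that the ratio $\psi/\zeta$ is quasiconvex, and composes with $\log$ to conclude that $s$ is quasiconvex. Your proposal never engages with this case, so neither implication of the equivalence is actually proved.
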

\begin{proof}
	It is enough to consider the case $n= 2$ and we write $f = f_{1}$, $g = f_{2}$, $\X_{1} = \X$ and $\X_{2} = \Y$. First, assume that $s$ is quasiconvex and suppose to the contrary that $c(f)+c(g) < 0$. Given $\overline{y} \in \dom g$, for every $x_{1}, x_{2}\in  \X$, and $t \in [0,1]$,
	\begin{align*}
		f(t x_{1} + (1-t) x_{2}) + g(\overline{y})			 
		&= s( t x_{1}+ (1-t) x_{2},\overline{y})\\
		&\leq \max\{s(x_{1}, \overline{y}), s(x_{2},\overline{y})\}
		= \max\{f(x_{1}), f(x_{2})\} + g(\overline{y}).
	\end{align*}
	Subtracting $g(\overline{y})$ from both sides yields $f( t x_{1} + (1-t)x_{2}) \leq \max \{ f(x_{1}), f(x_{2})\}$. Therefore, $f$ is quasiconvex. By symmetry, $g$ is also quasiconvex. Without loss of generality we can assume that $c(f) \leq c(g)$. Then, there exists $\lambda < 0$ such that
	\begin{equation} \label{eq:TwoFactorCase 2}
		c(f) < \lambda < -c(g).
	\end{equation}
	For every $(x,y)\in \X \times \Y$, define
	\[
		\overline{f}(x) \coloneqq  e^{-\lambda f(x)},\quad \overline{g}(y) \coloneqq e^{\lambda g(y)},\quad \overline{s}(x,y) \coloneqq e^{-\lambda s(x,y)}.
	\]
	Note that $\overline{s}$ is a quasiconvex function since it is the composition of a quasiconvex function with a non-decreasing function.  It follows from Definition \ref{convindex_def_infdimensional} and Remark \ref{cf-remark} that $\overline{f}$ is not convex.  Moreover, $-\overline{g}$ is not convex. To see this, assume to the contrary that $\overline{g}$ is concave. Then, $r_{\mu}$ associated with $g$ is convex for every $\mu < 0$ by Lemma \ref{lemma 1 section 2}-(iii). Hence, $c(g) = \sup\{\gamma \in\R \colon \gamma \geq 0, r_{\gamma} \text{ is concave} \}$ by Definition \ref{convindex_def_infdimensional}. Since $\overline{g}$ is assumed to be concave, $-\lambda \in \{\gamma \in\R \colon \gamma \geq 0, r_{\gamma} \text{ is concave} \}$. By Lemma \ref{lemma 1 section 2}, $ c(g) \geq - \lambda$. This is in contradiction with \eqref{eq:TwoFactorCase 2}.
	
	Observe that $\overline{f}$ and $-\overline{g}$ are compositions of $f$ and $g$ respectively with a non-decreasing function. Therefore $\overline{f}$ and $-\overline{g}$ are quasiconvex. 
	
	Next, we apply Lemma \ref{Lemma 2.2 - 1} to $\overline{f}$ and $-\overline{g}$ since they are quasiconvex but not convex functions. There exist $x_{1}, x_{2}\in \dom \overline{f}$, $y_{1}, y_{2}\in \dom (-\overline{g})$, $\overline{t}, \overline{u}\in [0,1)$ and $\alpha,\beta \in\R$ such that
	\begin{align}
		&0 < \theta (0) < \alpha \leq \theta(1), \quad &&\mu(0) < -\beta \leq \mu(1) < 0, \label{twofactors1}\\
		&\theta(t) \leq \alpha + m(t - \overline{t}),\quad t\in [0,\overline{t}],  \quad &&\mu(u) \leq -\beta + n(u - \overline{u}),\quad u\in[0, \overline{u}],\label{twofactors2}\\
		&\alpha \leq \theta(t) < \alpha + m(t-\overline{t}),\quad t\in(\overline{t}, 1],\quad &&-\beta \leq \mu(u) < -\beta + n(u-\overline{u}),\quad u\in (\overline{u}, 1], \label{twofactors3}
	\end{align}	
	where
	\begin{align*}
		\theta(t) &\coloneqq \overline{f}(t x_{1}+ (1-t) x_{2}),\quad t\in[0,1], 
		&&\mu(t) \coloneqq -\overline{g}(u y_{1} + (1-u) y_{2}),\quad u\in[0,1],\\
		m &\coloneqq \theta(1) - \theta(0) > 0,\quad && n \coloneqq \mu(1) - \mu(0) > 0.
	\end{align*}
	Let us also define
	\begin{align*}
		&\xi (t,u) \coloneqq -\frac{\theta(t)}{\mu(u)},\quad (t,u)\in [0,1]\times[0,1],\\
		&S \coloneqq \Big\{(t,u)\in [0,1]\times [0,1]\colon \xi (t,u) < \frac{\alpha}{\beta} \Big\},\;
		T \coloneqq \{(t,u)\in [0,1]\times [0,1]\colon m\beta (t - \overline{t}) + n\alpha (u -\overline{u}) \leq 0 \}.
	\end{align*}
	Notice that
	\[
		\xi(t,u) =e^{-\lambda s(t x_{1} + (1-t)x_{2} , u y_{1} + (1-u)y_{2})},\quad (t,u)\in [0,1]\times[0,1].
	\]
	Hence, $\xi$ is quasiconvex since $s$ is quasiconvex and $S$ is a convex set as the strict lower level set of a quasiconvex function. Moreover,
	\[
		S = \{(t,u)\in [0,1]\times [0,1]\colon  \beta \theta(t) + \alpha \mu(u) < 0 \}.
	\]
	From \eqref{twofactors2} and \eqref{twofactors3}, we immediately derive
	\begin{align}
		&S\cap \of{ (\overline{t},1]\times(\overline{u},1] } = \emptyset, \label{twofactors7}\\
		&S \supseteq T \cap \of{ [0,\overline{t}]\times(\overline{u},1]}, \label{twofactors8}\\
		&S \supseteq T \cap \of{ (\overline{t},1]\times[0,\overline{u}]},  \label{twofactors9}
	\end{align}
	There are four possible cases:
	\begin{enumerate}[(a)]
		\item Suppose that $\overline{t} > 0$ and $\overline{u} > 0$. Let
		\[
			k \coloneqq \frac{m \beta}{n \alpha},\qquad
			\epsilon \coloneqq \min \Big\{ \overline{t}, \frac{1 - \overline{u}}{k}, 1 - \overline{t}, \frac{\overline{u}}{k} \Big\}.
		\]
		It is not hard to see that $(\overline{t} - \epsilon, \overline{u} + k \epsilon)\in T \cap \of{ [0,\overline{t}]\times(\overline{u},1] }$ and $(\overline{t} + \epsilon, \overline{u} - k \epsilon) \in T \cap \of{ (\overline{t},1]\times[0,\overline{u}] }$. In view of \eqref{twofactors8} and \eqref{twofactors9}, $(\overline{t} - \epsilon, \overline{u} + k \epsilon), (\overline{t} + \epsilon, \overline{u} - k \epsilon) \in S$. It follows from the convexity of $S$ that
		\[
			(\overline{t}, \overline{u}) = \frac{1}{2} (\overline{t} - \epsilon, \overline{u} + k \epsilon) +  (1- \frac{1}{2}) (\overline{t} + \epsilon, \overline{u} - k \epsilon) \in S.
		\]
		Hence, $\beta\theta(\overline{t}) + \alpha \mu(\overline{u}) < 0$. On the other hand, by (\ref{twofactors2}), $\beta \theta(\overline{t}) \leq \alpha \beta$ and $\alpha \mu(\overline{u}) \leq -\alpha \beta$. Then, we have either $\alpha \mu(\overline{u}) < -\alpha \beta$ or $\beta \theta(\overline{t}) < \alpha \beta$. 
		
		First, assume that $\beta \theta(\overline{t}) < \alpha \beta$. Then, we may choose $u^{*}\in (\overline{u}, 1]$ such that 
		\[
		u^{*} - \overline{u} < \frac{\alpha \beta - \beta}{ n\alpha },
		\]
		that is, $n\alpha (u^{*} - \overline{u}) < \alpha \beta - \beta \theta(\overline{t})$. For instance, one can choose $u^{*} = \overline{u} + \frac{1}{2} \frac{\alpha \beta - \beta \theta(\overline{t})}{n \alpha}$. It follows from \eqref{twofactors3} that
		\[
			\beta \theta(\overline{t}) + \alpha \mu(u^{*}) < \beta \theta(\overline{t}) - \alpha \beta + n\alpha (u^{*} - \overline{u}) < 0.
		\]
		Therefore, $(\overline{t}, u^{*})\in S$. Note that one can choose $(t^{**},u^{**})\in (\overline{t},1] \times [0,\overline{u}]$ such that
		\[
		m\beta (t^{**} - \overline{t}) + n\alpha(u^{**}- \overline{u}) \leq 0.
		\]
		Hence, $(t^{**},u^{**})\in T \cap \of{(\overline{t},1] \times [0,\overline{u}]}$, and consequently $(t^{**},u^{**})\in S$ by \eqref{twofactors9}. Notice that, since $ \overline{t} < t^{**}$ and $u^{**} < \overline{u} < u^{*}$, there exists $\overline{\lambda}\in(0,1)$ such that
		\[
		\of{ \overline{\lambda}t^{**} + (1-\overline{\lambda})\overline{t},  \overline{\lambda} u^{**} + (1-\overline{\lambda})u^{*} } \in (\overline{t},1]\times(\overline{u},1].
		\]
		Moreover, $S$ is convex. Therefore, $( \overline{\lambda}t^{**} + (1-\overline{\lambda})\overline{t},  \overline{\lambda} u^{**} + (1-\overline{\lambda})u^{*} )\in S$. This is in contradiction to \eqref{twofactors7}.
		
		Next, assume that $\alpha \mu(\overline{u}) < -\alpha \beta$. Observe that there exists $t^{*}\in (\overline{t}, 1]$ such that $m\beta(t^{*} - \overline{t}) < -\alpha \beta- \alpha \mu (\overline{u}) $. Hence,
		\[
		\beta \theta(t^{*}) + \alpha \mu (\overline{u}) < \alpha \beta + m\beta ( t^{*} - \overline{t}) + \alpha \mu(\overline{u}) < 0,
		\]
		where the first inequality follows from (\ref{twofactors3}). Therefore $(t^{*},\overline{u}) \in S$.  Next, we can choose $(t^{**},u^{**})\in [0,\overline{t}] \times (\overline{u}, 1]$ such that 
		\[
		m\beta (t^{**} - \overline{t}) + n\alpha(u^{**}- \overline{u}) \leq 0.
		\]
		Therefore, $(t^{**},u^{**})\in T \cap [0,\overline{t}] \times (\overline{u}, 1]$. By (\ref{twofactors8}),  $(t^{**},u^{**})\in S$. Since $  t^{**} < \overline{t} < t^{*} $ and $\overline{u} < u^{**}$, there exists $\overline{\lambda}\in(0,1)$ such that
		\[
		\of{ \overline{\lambda}t^{**} + (1-\overline{\lambda})t^{*},  \overline{\lambda} u^{**} + (1-\overline{\lambda})\overline{u} } \in (\overline{t},1]\times(\overline{u},1].
		\]
		Moreover $S$ is convex. Hence, $( \overline{\lambda}t^{**} + (1-\overline{\lambda})\overline{t},  \overline{\lambda} u^{**} + (1-\overline{\lambda})u^{*}) \in S$. This is a contradiction to (\ref{twofactors7}).

		\item Suppose that $\overline{t} = 0$ and $\overline{u} > 0$. By (\ref{twofactors1}),  we have $\alpha - \theta(0) > 0 $. Choose $u^{*} \in (\overline{u},1]$ such that
		\[
			n\alpha (u^{*} - \overline{u}) < \beta(\alpha  - \theta(0)).
		\]
		For instance, one can choose $u^{*} = \overline{u} + \frac{1}{2} \frac{\beta( \alpha - \theta(0) )}{n \alpha}$. Observe that 
		\[
			\beta \theta(0) + \alpha \mu(u^{*})< \beta \theta(0) -\beta \alpha + n\alpha (u^{*} - \overline{u}) < 0,
		\]
		where the first strict inequality follows from (\ref{twofactors3}). By the definition of $S$, $(0, u^{*})\in S$.  Choose $(t^{**}, u^{**})\in(\overline{t},1]\times [0,\overline{u}]$ such that
		\[
			m\beta (t^{**} - \overline{t}) + n\alpha (u^{**} -\overline{u}) \leq 0.
		\]
		In other words, $(t^{**}, u^{**})\in T$. By (\ref{twofactors9}),  $(t^{**}, u^{**})\in S$. Since $S$ is convex and $(0, u^{*}),(t^{**}, u^{**})\in S$, for any $\lambda \in [0,1]$
		\[
			(\lambda t^{**} + (1-\lambda)0, \lambda u^{**} + (1-\lambda) u^{*})\in S.
		\]
		On the other hand, $ 0 = \overline{t} <  t^{**}$ and $ u^{**} \leq \overline{u} <  u^{*}$. Therefore, there exists $\overline{\lambda}\in (0,1)$ such that
		\[
			(\overline{\lambda} t^{**} + (1-\overline{\lambda})0, \overline{\lambda} u^{**} + (1-\overline{\lambda}) u^{*})\in (\overline{t}, 1] \times (\overline{u}, 1].
		\]
		This is a contradiction to (\ref{twofactors7}).

		\item Suppose that $\overline{t} > 0$ and $\overline{u} = 0$. By (\ref{twofactors1}),  we have $\mu(0) + \beta < 0$. Choose $t^{*} \in (\overline{t},1]$ such that
		\[
			m\beta (t^{*} - \overline{t}) < - \alpha (\beta  + \mu(0)). 
		\]
		For instance, one can choose $t^{*} = \overline{t} - \frac{1}{2} \frac{\alpha ( \beta - \mu(0) )}{m \beta}$. Observe that 
		\[
			\beta \theta(t^{*}) + \alpha \mu(0)< \alpha \beta +m\beta (t^{*} - \overline{t}) + \alpha\mu(0) <  0.
		\]
		where the first strict inequality follows from (\ref{twofactors3}). By the definition of $S$, $(t^{*},0)\in S$. Choose $(t^{**}, u^{**})\in[0,\overline{t}]\times(\overline{u}, 1]$ such that
		$m\beta (t^{**} - \overline{t}) + n\alpha (u^{**} -\overline{u}) \leq 0$.
		In other words, $(t^{**}, u^{**})\in T$. By (\ref{twofactors8}),  $(t^{**}, u^{**})\in S$. Since $S$ is convex and $(t^{*}, 0),(t^{**}, u^{**})\in S$, for any $\lambda \in [0,1]$,
		\[
			(\lambda t^{**} + (1-\lambda)t^{*}, \lambda u^{**} + (1-\lambda) 0)\in S.
		\]
		On the other hand, $ t^{**} \leq \overline{t} <  t^{*}$ and $ 0  = \overline{u} <  u^{**}$. Therefore, there exists $\overline{\lambda}\in (0,1)$ such that
		\[
			(\overline{\lambda} t^{**} + (1-\overline{\lambda})t^{*}, \overline{\lambda} u^{**} + (1-\overline{\lambda}) 0)\in (\overline{t}, 1] \times (\overline{u}, 1].
		\]
		This is a contradiction by (\ref{twofactors7}).
		\item Suppose that $\overline{t} = \overline{u} = 0$. By (\ref{twofactors1}), we have $\mu(0) + \beta < 0$ and $\alpha - \theta(0) > 0$. Choose $t^{*},u^{*} \in (0,1]$ such that 
		\[
			\beta m t^{*} < -\alpha(\beta  +  \mu(0)), \quad
			\alpha n u^{*} <  \beta(\alpha - \theta(0)).
		\]
		For instance, choose $t^{*} = \frac{1}{2} \frac{\beta(\alpha - \mu(0))}{\alpha n}$ and $u^{*} = - \frac{1}{2}\frac{\alpha (\beta + \mu(0) )}{\beta m}$. Observe that, 
		\begin{align*}
			&\beta \theta(t^{*}) + \alpha \mu(0) < \alpha \beta + \beta m (t^{*} - \overline{t}) + \alpha \mu (0) < 0, \\
			&\beta \theta(0) + \alpha \mu(u^{*}) < -\beta \alpha + \alpha n (u - \overline{u})  + \beta \theta(0) < 0,
		\end{align*}
		where the first strict inequalities on the two lines follow from \eqref{twofactors3}. By the definition of $S$, $(t^{*},0)\in S$ and $(0,u^{*})\in S$. Since $S$ is convex, $(\lambda t^{*}, (1- \lambda) u^{*})\in S$  for any $\lambda\in (0,1)$. On the other hand, $(\lambda t^{*}, (1- \lambda) u^{*})\in (\overline{t}, 1] \times (\overline{u}, 1]$, which is a contradiction by (\ref{twofactors7}).
	\end{enumerate}
	Since (a), (b), (c) and (d) all lead to a contradiction, we conclude that $c(f) + c(g) \geq 0$.
	
	
	Conversely, assume that $c(f) + c(g) \geq 0$. Since $f$ and $g$ are non-constant, by Theorem \ref{convindex_values_infinitedimensional_constant}, $c(f)$ and $c(g)$ are finite. If $c(f)\geq 0$ and $c(g)\geq 0$, by Theorem \ref{convindex_values_infinitedimensional}, $f,g$ are convex. As the sum of two convex functions, $s$ is convex, hence also quasiconvex. If $c(f)$ and $c(g)$ are not both positive, then by symmetry it is enough to consider the case $c(f) < 0 < c(g)$. Define,
	\[
		\psi(x) \coloneqq e^{c(g)f(x)},\ x\in \X;\quad \quad 
		\zeta(y) \coloneqq e^{-c(g)g(y)},\ y\in \Y.
	\]
	It follows from Definition \ref{convindex_def_infdimensional} and  Lemma \ref{lemma 1 section 2}-(i) that $x \mapsto e^{-c( f )f(x)}$ is convex. Since $-c(g) \leq c(f)$, by Lemma \ref{lemma 1 section 2}-(i), $\psi$ is convex. Moreover, $\zeta$ is concave directly from Definition \ref{convindex_def_infdimensional} and Lemma \ref{lemma 1 section 2}-(ii). On $\X\times \Y$, let $\rho$ be defined by
	\[
	\rho(x, y) \coloneqq e^{c(g)[f(x) + g(y)]} = \frac{\psi(x)}{\zeta(y)}.
	\]
	Notice that for every $(x_{1},y_{1}),(x_{2},y_{2})\in \X\times \Y$ and $\lambda \in [0,1]$,
	\begin{equation}  \label{eq:theoremsum1}
		\rho\big(\lambda (x_{1},y_{1}) + (1-\lambda) (x_{2},y_{2}) \big)
		= \frac{\psi(\lambda x_{1} + (1-\lambda)x_{2})}{\zeta(\lambda y_{1} + (1-\lambda) y_{2})}\\
		\leq \frac{\lambda\psi(x_{1}) + (1-\lambda)\psi(x_{2})}{\lambda \zeta(y_{1}) + (1-\lambda)\zeta(y_{2})},
	\end{equation}
	where the inequality follows from the convexity of $\psi$ and the concavity of $\zeta$. 
	
	First, suppose that $\frac{\psi(x_{2})}{\zeta(y_{2})} \leq \frac{\psi(x_{1})}{\zeta(y_{1})}$. Then,
	\[
	\zeta(y_{1}) [\lambda \psi(x_{1}) +  (1-\lambda)\psi(x_{2})]   \leq   \psi(x_{1})[\lambda \zeta(y_{1})  + (1-\lambda)\zeta(y_{2})],
	\]
	which implies
	\[
		\frac{  \lambda \psi(x_{1}) +  (1-\lambda)\psi(x_{2})  }{ \lambda \zeta(y_{1})  + (1-\lambda)\zeta(y_{2})   } \leq \frac{  \psi(x_{1})  }{\zeta(y_{1}) }.
	\]
	Therefore,
	\begin{equation}\label{eq:theoremsum2}
		\frac{\lambda \psi(x_{1}) + (1-\lambda)\psi(x_{2})}{\lambda \zeta(y_{1}) + (1-\lambda)\zeta(y_{2})} \leq \rho(x_{1}, y_{1}).
	\end{equation}		
	Next, suppose that $\frac{\psi(x_{2})}{\zeta(y_{2})}  >  \frac{\psi(x_{1})}{\zeta(y_{1})}$. A similar argument as in the previous case gives
	\begin{equation}\label{eq:theoremsum3}
		\frac{  \lambda \psi(x_{1}) +  (1-\lambda)\psi(x_{2})  }{ \lambda \zeta(y_{1})  + (1-\lambda)\zeta(y_{2})   } < \rho(x_{2}, y_{2}).
	\end{equation}
	By \eqref{eq:theoremsum1}, \eqref{eq:theoremsum2} and \eqref{eq:theoremsum3},
	$	\rho(\lambda (x_{1},y_{1}) + (1-\lambda) (x_{2},y_{2}) ) \leq \max\{\rho(x_{1}, y_{1}), \rho(x_{2}, y_{2}) \}$.
	Therefore, $\rho$ is quasiconvex. Furthermore,
	$	(x,y)\mapsto \log\of{ \rho(x,y) }= c(g)[f(x) + g(y)] = c(g)s(x, y)$
	is quasiconvex since quasiconvexity is stable under composition with a nondecreasing function. Hence, $s$ is quasiconvex.
\end{proof}

The following lemma is needed in the proof of Proposition \ref{ConvIndex_AdditiveLem_2}. As in the proof of Theorem \ref{TwoFactorCase}, we write $\X_{1} = \X$ and $\X_{2} = \Y$.

\begin{lem} \label{ConvIndex_AdditiveLem_1}
	Let $\rho_{1},\rho_{2}$ be positive real-valued lower semicontinuous and convex functions on $\X, \Y$, respectively. On $\X\times \Y$, define $\rho$ by
	\[
	\rho(x, y) \coloneqq [\rho_{1}(x)]^{\alpha}[\rho_{2}(y)]^{\beta},
	\]
	where $\alpha,\beta > 0$ and $\alpha + \beta = 1$. If $\rho$ is concave, then at least one of $\rho_{1}$ and $\rho_{2}$ is concave. 
\end{lem}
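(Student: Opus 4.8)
I would prove the contrapositive: assuming that \emph{neither} $\rho_1$ nor $\rho_2$ is concave, I would exhibit a two-dimensional affine slice on which $\rho$ fails to be concave. Since each $\rho_i$ is convex, ``not concave'' is the same as ``not affine.'' The first move is a reduction to one variable: $\rho$ can be concave on $\X\times\Y$ only if it is concave on every product of a line $\ell_1\subseteq\X$ with a line $\ell_2\subseteq\Y$, and on such a slice $\rho$ becomes $F(s,t)=a(s)^\alpha b(t)^\beta$ with $a=\rho_1\circ\ell_1$ and $b=\rho_2\circ\ell_2$ finite, positive and convex. Because $\rho_1$ is non-affine I can choose $\ell_1$ so that $a$ is non-affine, and I would deliberately keep the choice of $\ell_2$ and, crucially, the scaling of its parametrization free for later. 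It then suffices to show that such an $F$ is not concave.

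Next I would record what concavity of $F$ forces on the one-dimensional factors. Freezing $t$ (resp.\ $s$) shows that $a^\alpha$ and $b^\beta$ are concave. Two consequences are worth isolating. First, $a$ and $b$ are of class $C^1$: a jump in the one-sided derivative of the convex function $a$ would produce an upward jump in the derivative of $a^\alpha$ (as $a>0$ and $t\mapsto t^\alpha$ is increasing), contradicting concavity of $a^\alpha$. Second, $\log a=\tfrac1\alpha\log(a^\alpha)$ and $\log b$ are concave, being logarithms of positive concave functions; moreover $\log a$ cannot be affine, since an affine $\log a$ with nonzero slope would make $a^\alpha$ a nonconstant exponential (not concave), while a constant $\log a$ would make $a$ affine. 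Hence $\log a$ and $\log b$ are concave and non-affine, and concavity of $a^\alpha,b^\beta$ yields the quantitative bounds $(\log a)''\le-\alpha\,((\log a)')^2$ and $(\log b)''\le-\beta\,((\log b)')^2$ along the slice.

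I would then produce the contradiction from a diagonal second-order computation. Writing $G=\log F=\alpha\log a+\beta\log b$, midpoint-concavity of $F$ along the segment through $(s_0,t_0)$ with half-increments $(h,k)$ is, after exponentiating $F=e^{G}$, equivalent to $\log\cosh(m)\le\sigma$, where $2m=G(s_0+h,t_0+k)-G(s_0-h,t_0-k)$ is the odd (linear) part and $2\sigma=2G(s_0,t_0)-G(s_0+h,t_0+k)-G(s_0-h,t_0-k)\ge 0$ is the concavity deficit of the concave function $G$. The plan is to make the linear part win: using the freedom to rescale the parametrization of $\ell_2$, which multiplies $(\log b)'$ by an arbitrary constant, I would \emph{match the logarithmic slopes} of $a$ and $b$ at $(s_0,t_0)$ and take $h=k$, so that the leading parts of $m$ reinforce. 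In the infinitesimal version this is exactly the diagonal Hessian inequality $\alpha\,a''/a+\beta\,b''/b\le \alpha\beta\,(a'/a-b'/b)^2$; at matched slopes the right-hand side vanishes while the left-hand side is at least $\alpha\,a''/a>0$, contradicting concavity. Equivalently, in the $\log\cosh$ form the square of the matched linear part cancels exactly the $((\log a)')^2$ and $((\log b)')^2$ pieces hidden in $\sigma$, leaving the strictly positive curvature of $a$ as the surviving term, whence $\log\cosh(m)>\sigma$.

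The hard part will be this final step for functions that are merely convex. A non-affine convex function need not be twice differentiable; its pointwise second derivative can even vanish almost everywhere, with the curvature located in the singular part of $a''$, so the classical Hessian inequality cannot simply be evaluated at a point where $a''>0$. I would therefore carry the argument through the exact second-difference criterion $\log\cosh(m)\le\sigma$ rather than through derivatives: the non-affineness of $\log a$ guarantees a fixed chord with strictly positive second difference, and the scaling of $\ell_2$ lets me match the average logarithmic slopes on that chord. The delicate estimate is to confirm that, after matching, the finite-difference analogue of the slope cancellation still leaves a positive residual, i.e.\ $\log\cosh(m)>\sigma$ for a concrete finite $(h,k)$, using only the $C^1$ regularity available. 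Establishing this clean inequality at finite scale is where the real work lies.
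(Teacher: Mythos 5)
Your overall architecture is sound and, up to the last step, tracks the paper's proof closely: the paper also argues by contradiction, restricts to a segment in each factor, deduces from concavity of $\rho$ that the restriction $\theta$ of $\rho_1$ has $\theta^\alpha$ concave and hence $\theta$ differentiable where needed, and then derives the contradiction by choosing increments that match the logarithmic slopes, namely $h_\delta\theta'(\bar{t})/\theta(\bar{t})=k_\delta\mu'(\bar{u})/\mu(\bar{u})=-\delta$ --- exactly your slope-matching idea. The problem is the step you yourself flag as ``where the real work lies'': it is not merely delicate, it is unproven, and the route you sketch for it does not go through as stated. From concavity of $a^\alpha$ and $b^\beta$ you know $\log\cosh(\alpha m_a)\le\alpha\sigma_a$ and $\log\cosh(\beta m_b)\le\beta\sigma_b$ on every chord, and from non-concavity of $a$ and $b$ you can find chords with $\log\cosh(m_a)>\sigma_a$ and $\log\cosh(m_b)>\sigma_b$; but what you need is $\log\cosh(\alpha m_a+\beta m_b)>\alpha\sigma_a+\beta\sigma_b$, and since $\log\cosh$ is convex you only get $\log\cosh(\alpha m_a+\beta m_b)\le\alpha\log\cosh(m_a)+\beta\log\cosh(m_b)$ --- an inequality pointing the wrong way. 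So a single finite chord on which $\log a$ has positive concavity deficit, even after rescaling $\ell_2$ to match average slopes and taking the diagonal, need not violate midpoint concavity of $F$: the strict concavity slack of $a^\alpha$ on that chord can absorb the gain from the reinforced linear part.

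What is missing is a substitute for ``a point where $a''>0$'' that survives the lack of second-order regularity, and this is precisely where the paper invests its machinery. It applies Lemma \ref{Lemma 2.2 - 1} to $-\rho_1$ (quasiconvex but not convex) to produce a parameter $\bar{t}\in(0,1)$ at which $\theta$ is differentiable with $\theta'(\bar{t})=-m<0$ and, crucially, the graph of $\theta$ lies \emph{strictly above} its tangent line on the whole interval $(\bar{t},1]$. This strict supporting-line inequality over an interval, rather than a positive second difference on one chord, is what permits taking $h_\delta=-\delta\theta(\bar{t})/\theta'(\bar{t})$, $k_\delta=-\delta\mu(\bar{u})/\mu'(\bar{u})$ and concluding $\zeta(\bar{t}+h_\delta,\bar{u}+k_\delta)>[\theta(\bar{t})]^\alpha[\mu(\bar{u})]^\beta(1-\delta)$, while the first-order (directional-derivative) consequence of concavity of $\zeta$ gives the reverse weak inequality --- a contradiction obtained with no second differences at all. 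Your proof needs either this lemma or an equivalent device; as written, the final inequality $\log\cosh(m)>\sigma$ is asserted, not established.
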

\begin{proof}
	Assume that $\rho$ is concave and suppose to the contrary that both $\rho_{1}$ and $\rho_{2}$ are not concave. Choose $\overline{y}\in \Y$ such that $\rho_{2}(\overline{y}) > 0$. By the concavity of $\rho$, for every $x_{1}, x_{2} \in \X$ and $\lambda\in [0,1]$,
	\[
		\rho(\lambda x_{1} + (1-\lambda)x_{2}, \overline{y}) \geq \lambda \rho(x_{1}, \overline{y}) + (1-\lambda)\rho(x_{2}, \overline{y}).
	\]
	Dividing both sides by $[\rho_{2}(\overline{y})]^{\beta}$ yields
	\begin{equation} \label{ConvIndex_AdditiveLem_1-1}
		[\rho_{1}(\lambda x + (1-\lambda) x_{2})]^{\alpha} \geq \lambda [\rho_{1}(x_{1})]^{\alpha} + (1-\lambda) [\rho_{1}(x_{2})]^{\alpha}.
	\end{equation}
	Note that $\lambda [\rho_{1}(x_{1})]^{\alpha} + (1-\lambda) [\rho_{1}(x_{2})]^{\alpha} \geq \min\{ [\rho_{1}(x_{1})]^{\alpha}, [\rho_{1}(x_{2})]^{\alpha} \} =  \min\{ \rho_{1}(x_{1}), \rho_{1}(x_{2}) \} ^{\alpha}$. So
	\[
		[\rho_{1}(\lambda x_{1} + (1-\lambda)x_{2})]^{\alpha} \geq \min\{ \rho_{1}(x_{1}), \rho_{1}(x_{2}) \} ^{\alpha},
	\]
	which implies
	\begin{equation}\label{ConvIndex_AdditiveLem_1-2}
		\rho_{1}(\lambda x_{1} + (1-\lambda)x_{2}) \geq \min\{ \rho_{1}(x_{1}), \rho_{1}(x_{2}) \}.
	\end{equation}
	The supposition together with \eqref{ConvIndex_AdditiveLem_1-2} implies that $-\rho_{1}$ is quasiconvex but not convex. By Lemma \ref{Lemma 2.2 - 1}, there exist $x_{1}, x_{2}\in \X$, $\overline{t}\in[0,1)$ and $\gamma \in [0,1]$ such that
	\begin{align}
		&\theta(0) > \gamma \geq \theta(1) > 0, \label{ConvIndex_AdditiveLem_1-3} \\
		&\theta(t) \geq \gamma - m (t - \overline{t}),\quad t\in[0,\overline{t}], \label{ConvIndex_AdditiveLem_1-4}\\
		&\gamma \geq \theta(t) >  \gamma - m (t - \overline{t}),\quad t\in (\overline{t},1], \label{ConvIndex_AdditiveLem_1-5}
	\end{align}
	where $\theta(t) \coloneqq \rho_{1}(t x_{1} + (1-t) x_{2})$, $t\in[0,1]$, and $m \coloneqq \theta(0) - \theta(1) > 0$. 
	
	The function $t\mapsto \overline{\theta} (t) \coloneqq  [\theta(t)]^{\alpha}$ is concave on $[0,1]$ by \eqref{ConvIndex_AdditiveLem_1-1} and the fact that the composition of a concave function with an affine function is concave. As a real-valued concave function defined on [0,1], $\overline{\theta}$  is continuous on $(0,1)$ and have finite right and left derivatives. Consequently, $\theta = \overline{\theta}^{\frac{1}{\alpha}}$ is continuous as it is the composition of two continuous.  It is not hard to see that
	\begin{equation} \label{ConvIndex_AdditiveLem_1-Chain Rule}
			\overline{\theta}^{\prime}_{+}(t) = \alpha ( \theta(t) ) ^{\alpha - 1} \theta_{+}^{\prime}(t), \qquad
			\overline{\theta}^{\prime}_{-}(t) = \alpha ( \theta(t) ) ^{\alpha - 1} \theta_{-}^{\prime}(t),
	\end{equation}
	respectively. Hence, $\theta$ also has finite right and left derivatives. Moreover,by \eqref{ConvIndex_AdditiveLem_1-3} we have
	\begin{equation}\label{ConvIndex_AdditiveLem_1-inter result}
		\overline{\theta}(0) > \gamma^{\alpha}.
	\end{equation}
	We claim that $\overline{t} \neq 0$. Indeed, supposing otherwise, by \eqref{ConvIndex_AdditiveLem_1-5}, we get $\gamma \geq \theta(t)$ for every $t\in (0,1]$. Then, since $\overline{\theta}$ is right-continuous at $0$, we have $\gamma^{\alpha} \geq \overline{\theta}(0)$. This is in contradiction with \eqref{ConvIndex_AdditiveLem_1-inter result}; therefore, $\overline{t} \neq 0,$ and consequently $\overline{\theta}$ is continuous at $\overline{t}$. It follows from \eqref{ConvIndex_AdditiveLem_1-4} and \eqref{ConvIndex_AdditiveLem_1-5} that $\theta(\overline{t}) =  \gamma$ and $\theta^{\prime}_{-}(\overline{t}) \leq -m \leq \theta^{\prime}_{+}(\overline{t})$. Furthermore, $\overline{\theta}^{\prime}_{+}(\overline{t}) \leq \overline{\theta}^{\prime}_{-}(\overline{t})$ since $\overline{\theta}$ is concave. In view of \eqref{ConvIndex_AdditiveLem_1-Chain Rule}, $\theta^{\prime}_{+}(\overline{t}) = \theta^{\prime}_{-}(\overline{t}) = -m$. Thus, $\theta$ is differentiable at $\overline{t}$ and $\theta^{\prime}(\overline{t}) < 0$. By \eqref{ConvIndex_AdditiveLem_1-5}, we get
	\begin{equation}\label{ConvIndex_AdditiveLem_1-6}
		\theta(t) > \theta(\overline{t}) + \theta^{\prime}(\overline{t})(t-\overline{t}),\quad t\in (\overline{t}, 1].
	\end{equation}
	
	Reasoning as above, we can prove that there exist $y_{1}, y_{2}\in \Y$, $\overline{u}\in (0,1)$ such that $\mu^{\prime}(\overline{u}) < 0$ and
	\begin{equation}\label{ConvIndex_AdditiveLem_1-7}
		\mu(u) > \mu(\overline{u}) + \mu^{\prime}(\overline{u}) (u-\overline{u}),\quad u\in (\overline{u}, 1],
	\end{equation}
	where
	$	\mu(u) \coloneqq \rho_{2}(u y_{1} + (1-u) y_{2})$, $u\in[0,1]$.
	
	Let
	\[
		\zeta(t,u) \coloneqq [\theta(t)]^{\alpha}[\mu(u)]^{\beta},\quad t,u\in[0,1];
	\]
	and, for every $\delta > 0$, define
	\[
		h_{\delta} \coloneqq -\delta \frac{\theta(\overline{t})}{\theta^{\prime}(\overline{t})},\qquad
		k_{\delta} \coloneqq -\delta \frac{\mu( \overline{u} ) }{\mu^{\prime}(\overline{u})}.
	\]
	Next, choose $\delta$ so that $0 < h_{\delta} < 1-\overline{t}$ and $0 < k_{\delta} < 1-\overline{u}$. For instance, one can choose $\delta = \frac{1}{2} \min \{ -\frac{\theta^{\prime}(\overline{t})}{\theta(\overline{t}) } (1-\overline{t}) ,  -\frac{\mu^{\prime}(\overline{u})}{\mu(\overline{u}) } (1-\overline{u})  \}$. By \eqref{ConvIndex_AdditiveLem_1-5} and \eqref{ConvIndex_AdditiveLem_1-7}, we have
	\begin{align*}
		\theta(\overline{t} + h_{\delta}) &> \theta(\overline{t}) + \theta^{\prime}(\overline{t})h_{\delta} = \theta(\overline{t})[1-\delta]\\
		\mu(\overline{u}+ k_{\delta}) &> \mu(\overline{u}) + \mu^{\prime}(\overline{u}) k_{\delta} = \mu(\overline{u})[1-\delta]
	\end{align*}
	Hence,
	\begin{equation}\label{ConvIndex_AdditiveLem_1-8}
		\zeta(\overline{t} + h_{\delta},\overline{u} + k_{\delta}) > [\theta(\overline{t})]^{\alpha}[\mu(\overline{u})]^{\beta}[1-\delta].
	\end{equation}
	On the other hand, for every $\lambda\in[0,1]$,
	\begin{align*}
		\zeta(\overline{t} + \lambda h_{\delta},\overline{u} + \lambda k_{\delta})
		= \zeta\big( (\overline{t},\overline{u}) + \lambda (h_{\delta},k_{\delta}) \big)
		&= \zeta \big(\lambda (\overline{t} + h_{\delta}, \overline{u} + k_{\delta}) + (1-\lambda) (\overline{t}, \overline{u})\big) \\
		&\geq \lambda \zeta (\overline{t} + h_{\delta}, \overline{u} + k_{\delta} ) + (1- \lambda ) \zeta (\overline{t}, \overline{u}) \\
		&= \zeta(\overline{t},\overline{u}) + \lambda  [ \zeta(\overline{t}+ h_{\delta},\overline{u} + k_{\delta}) -  \zeta(\overline{t},\overline{u}) ],
	\end{align*}
	where the inequality follows from concavity of $\zeta$ on $[0,1] \times [0,1]$. With some arithmetic, we get
	\[
		\zeta(\overline{t},\overline{u}) + \frac{\zeta\big( (\overline{t},\overline{u}) + \lambda(h_{\delta},k_{\delta})\big) -  \zeta(\overline{t},\overline{u})}{\lambda} \geq \zeta(\overline{t}+ h_{\delta},\overline{u} + k_{\delta}) 
	\]
	Moreover, the directional derivative of $\zeta$ along the direction $(h_{\delta},k_{\delta})$ and $(\overline{t}, \overline{u})$ exists. By letting $\lambda\to 0$, we get
	\[
		\zeta(\overline{t},\overline{u}) + \frac{\partial \zeta}{\partial t}(\overline{t}, \overline{u}) h_{\delta} + \frac{\partial \zeta}{\partial u}(\overline{t}, \overline{u}) k_{\delta} \geq \zeta(\overline{t}+ h_{\delta},\overline{u} + k_{\delta}).
	\]
	Plugging in the values of the partial derivatives yields
	\[
		\zeta(\overline{t},\overline{u})\Big[1 + \alpha h_{\delta} \frac{\theta^{\prime}(\overline{t})}{\theta(\overline{t})} + \beta k_{\delta} \frac{\mu^{\prime}(\overline{u})}{\mu(\overline{u})}\Big] \geq \zeta(\overline{t}+ h_{\delta},\overline{u} + k_{\delta}),
	\]
	It follows from the defintion of $h_{\delta}$, $k_{\delta}$ and $\zeta$ that 
	\[
		[\theta(\overline{t})]^{\alpha}[\mu(\overline{u})]^{\beta}[1-\delta] \geq \zeta(\overline{t} + h_{\delta}, \overline{u} + k_{\delta}),
	\]
	which is in contradiction with \eqref{ConvIndex_AdditiveLem_1-8}. Therefore, at least one of the functions $\rho_{1}$ and $\rho_{2}$ is concave. 
\end{proof}

The  next proposition gives an additive formula for the convexity index. 
\begin{prop} \label{ConvIndex_AdditiveLem_2}
	Assume that $f_{i}$ is lower semicontinuous and convex for  each $i \in  \{ 1,\ldots n \}$. Then,
	\begin{equation} \label{eq:ConvIndexSumFormula}
		\frac{1}{c(s)} = \sum_{i = 1}^{n} \frac{1}{c(f_{i})},
	\end{equation}
	where $s$ is defined by \eqref{Section 2 - Sum Function}.
\end{prop}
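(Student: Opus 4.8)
The plan is to reduce to a single two-factor identity and then bracket $c(s)$ from both sides by $c^\ast\coloneqq\big(\sum_i 1/c(f_i)\big)^{-1}$. First I would strip off the degenerate coordinates. If some $f_i$ is constant, then $c(f_i)=+\infty$ by Theorem \ref{convindex_values_infinitedimensional_constant}, while $e^{-\lambda s}$ equals $\prod_{j\neq i}e^{-\lambda f_j}$ up to the positive constant factor $e^{-\lambda f_i}$, so this coordinate changes neither the concavity of $r_\lambda$ nor, via the convention $1/(+\infty)=0$, the right-hand side; hence I may assume every $f_i$ is non-constant, so that $c(f_i)$ is finite by Theorem \ref{convindex_values_infinitedimensional_constant} and nonnegative by Theorem \ref{convindex_values_infinitedimensional}. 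If in addition some $c(f_i)=0$, restricting $e^{-\lambda s}$ to a line in the $i$-th coordinate exhibits it as a positive multiple of $e^{-\lambda f_i}$, which is concave for no $\lambda>0$; thus $c(s)=0$ and both sides equal $+\infty$. This leaves the essential case $c(f_i)\in(0,\infty)$ for all $i$, which I would handle by induction on $n$: grouping $s=f_1+\tilde s$ with $\tilde s\coloneqq f_2+\dots+f_n$ convex and lower semicontinuous, the inductive hypothesis gives $1/c(\tilde s)=\sum_{i\ge 2}1/c(f_i)$, so it suffices to prove the two-factor case $1/c(f+g)=1/c(f)+1/c(g)$ for $s(x,y)=f(x)+g(y)$, i.e. $c(s)=\frac{c(f)c(g)}{c(f)+c(g)}$.

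For $c(s)\ge c^\ast$ in the two-factor case, put $h_f\coloneqq r_{c(f)}=e^{-c(f)f}$ and $h_g\coloneqq e^{-c(g)g}$; these are nonnegative and concave, being $r_{c(f)}$, $r_{c(g)}$ at the break-even exponents (concave by Definition \ref{convindex_def_infdimensional}(ii) and the converse half of Lemma \ref{lemma 1 section 2}(ii), letting the exponent increase to $c(f)$, $c(g)$). Since $c^\ast/c(f)+c^\ast/c(g)=1$, we have $e^{-c^\ast s}=h_f^{\,c^\ast/c(f)}\,h_g^{\,c^\ast/c(g)}$, a weighted geometric mean of concave functions with exponents summing to one; because $(a,b)\mapsto a^{\theta}b^{1-\theta}$ is concave and coordinatewise nondecreasing on $[0,\infty]^2$, the composition is concave, and by Definition \ref{convindex_def_infdimensional} this gives $c(s)\ge c^\ast$.

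The reverse inequality $c(s)\le c^\ast$ is the crux. Suppose $e^{-\lambda s}$ is concave and set $a\coloneqq\lambda/c(f)$, $b\coloneqq\lambda/c(g)$; I must show $a+b\le 1$. Restricting to lines in each coordinate forces $e^{-\lambda f}$ and $e^{-\lambda g}$ to be concave, so $\lambda\le\min\{c(f),c(g)\}$ and $a,b\in(0,1]$, and the whole difficulty is to exclude $a+b>1$. The underlying reason is that $c(f),c(g)$ are the \emph{exact} break-even exponents: at the points where the defining suprema are (almost) attained the convex transforms $e^{c(f)f}$, $e^{c(g)g}$ are (almost) affine, so $e^{-\lambda f}$, $e^{-\lambda g}$ shed their strict concavity there, and a weighted product of two such near-affine functions of separate variables develops a saddle unless the weights satisfy $a+b\le 1$. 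Lemma \ref{ConvIndex_AdditiveLem_1} is precisely the smoothness-free form of this rigidity — a weighted product $\rho_1^\alpha\rho_2^\beta$ of convex functions is concave only when one factor is already concave. I would thus assume $a+b>1$, manufacture from $f$ and $g$ a pair of convex, non-concave functions whose suitably weighted product inherits concavity from that of $e^{-\lambda s}$, and apply Lemma \ref{ConvIndex_AdditiveLem_1} to conclude one of them is concave, contradicting $c(f),c(g)<\infty$ for non-constant $f,g$. I expect this construction to be the main obstacle: turning the joint concavity of $e^{-\lambda s}$ into the hypotheses of Lemma \ref{ConvIndex_AdditiveLem_1} while controlling lower semicontinuity and $+\infty$ values, which — as in the proof of Lemma \ref{Lemma 2.2 - 1} — must be done through one-sided derivatives along segments rather than second-order conditions.
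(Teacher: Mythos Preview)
Your plan coincides with the paper's: same reduction of degenerate cases, same induction to $n=2$, identical lower bound via the weighted geometric mean, and the same tool---Lemma \ref{ConvIndex_AdditiveLem_1}---for the upper bound. The piece you flag as ``the main obstacle'' is, however, much simpler than you anticipate and requires no segment-derivative analysis in the style of Lemma \ref{Lemma 2.2 - 1}.

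Here is the construction the paper uses (in your two-factor notation, with $c^\ast=\tfrac{c(f)c(g)}{c(f)+c(g)}$). For any $\lambda>c^\ast$, put $p=\tfrac{c(g)}{c(f)+c(g)}$ and $q=\tfrac{c(f)}{c(f)+c(g)}$, so that $p+q=1$ and
\[
e^{-\lambda s(x,y)}=\bigl(e^{-(\lambda/p)\,f(x)}\bigr)^{\,p}\bigl(e^{-(\lambda/q)\,g(y)}\bigr)^{\,q}.
\]
Since $\lambda>c^\ast$, one has $\lambda/p>c(f)$ and $\lambda/q>c(g)$, so by the very definition of the convexity index neither factor is concave. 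The contrapositive of Lemma \ref{ConvIndex_AdditiveLem_1} then says $e^{-\lambda s}$ is not concave; hence $c(s)\le\lambda$, and letting $\lambda\downarrow c^\ast$ gives $c(s)\le c^\ast$. The construction is purely algebraic---no contradiction hypothesis $a+b>1$ is needed, and no one-sided derivatives enter.

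One caveat on the worry that closes your proposal: you try to arrange the factors $\rho_i$ to be \emph{convex}, as Lemma \ref{ConvIndex_AdditiveLem_1} nominally demands. The paper does not verify this (indeed $e^{-\gamma f}$ need not be convex for $\gamma>0$ even when $f$ is). If you read through the proof of Lemma \ref{ConvIndex_AdditiveLem_1}, convexity of $\rho_1,\rho_2$ is never actually used---only positivity and non-concavity---so that hypothesis is harmless overkill and you should not spend effort trying to secure it.
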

\begin{proof}
	It is enough to consider the following three cases with $n= 2$. Note that, since $s$ is convex, by Theorem \ref{convindex_values_infinitedimensional} implies that $c(s) \geq 0$.
	\begin{enumerate}[(a)]
		\item Suppose that $c(f_{1}) = 0$ and $c(f_{2}) \in [0, + \infty)$. By Definition \ref{convindex_def_infdimensional}, $r_{\lambda}$ associated with $f_{1}$ is not concave for any $\lambda > 0$. Suppose for a contradiction that there exists $\overline{\lambda} > 0$ such that $r_{\overline{\lambda}}$ associated with $s$ is concave. Then, for every $(x_{1},x_{2}), (x_{1}^{\prime},x_{2})\in \X_{1}\times \dom f_{2}$, and $ t \in[0,1]$
		\[
			e^{-\overline{\lambda}[f_{1}(t x_{1} + (1-t) x_{1}^{\prime}) + f_{2}(x_{2})]} \geq t e^{-\overline{\lambda}[f_{1}(x_{1}) + f_{2}(x_{2})]} + (1-t) e^{-\overline{\lambda}[f_{1}(x_{1}^{\prime}) + f_{2}(x_{2})] }.
		\]
		Dividing both sides by $e^{-\overline{\lambda}f_{2}(x_{2})}$ yields
		\[
			e^{-\overline{\lambda}f_{1}(\eta x_{1} + (1-\eta) x_{1}^{\prime})} \geq \eta e^{-\overline{\lambda}f_{1}(x_{1})} + (1-\eta) e^{-\overline{\lambda}f_{1}(x_{1}^{\prime}) },
		\]
		where $\eta \coloneqq \frac{t}{e^{-\overline{\lambda}f_{2}(x_{2})}}$. In other words, $r_{\overline{\lambda}}$ associated with $f_{1}$ is concave. This  is in contradiction with the assumption $c(f_{1}) = 0$. Therefore, $r_{\lambda}$ associated with $s$ is not concave for every $\lambda > 0$. By Definition  \ref{convindex_def_infdimensional}, $c(s) = 0$ and \eqref{eq:ConvIndexSumFormula} holds.
		\item Suppose that $c(f_{1}) = +\infty$ and $c(f_{2}) \in [0, + \infty]$. By Theorem \ref{convindex_values_infinitedimensional_constant}, $f_{1} \equiv K$, where $K\in \R \cup \{ + \infty \}$ is a constant. If $K = +\infty$, then $s(x_{1}, x_{2}) = +\infty$ for every $(x_{1}, x_{2}) \in \X_{1}\times \X_{2}$. Therefore, \eqref{eq:ConvIndexSumFormula} holds trivially. If $c(s) = + \infty$, then $f_{1}, f_{2}$ are constant. Hence, \eqref{eq:ConvIndexSumFormula} holds trivially. So assume $K > +\infty$ and $c(f_{2}) < +\infty$. Observe that
		\[
			e^{-c(f_{2}) s(x_{1}, x_{2})} = e^{-c(f_{2}) K } e^{-c(f_{2}) f_{2}(x_{2})},\quad (x_{1}, x_{2}) \in \X_{1}\times \X_{2}. 
		\]
		$e^{-c(f_{2}) K } > 0$ is a constant and $x_{2} \mapsto e^{-c(f_{2}) f_{2}(x_{2})}$ is concave on $\X_{2}$. Therefore,  $r_{c(f_{2})}$ associated with $s$ is concave. Definition \ref{convindex_def_infdimensional} together with Lemma \ref{lemma 1 section 2} implies that $c(f_{2}) \leq c(s)$.
		
		On the other hand, $e^{-c(s) s}$ is concave by Definition \ref{convindex_def_infdimensional}. Moreover,
		$e^{-c( s ) s} = e^{-c(s)K} e^{-c(s) f_{2}}$, where $e^{-c(s)K} > 0$ is a constant. Therefore,  $e^{-c(s)f_{2}}$ is concave.  It follows from Definition \ref{convindex_def_infdimensional} and Lemma \ref{lemma 1 section 2} that $c(s) \leq c(f_{2})$. Hence, $c(s) = c(f_{2})$ and \eqref{eq:ConvIndexSumFormula} holds.
		\item Suppose that $0 < c(f_{i}) < +\infty$ for $i \in\{1,2\}$. Set $\alpha_{i} = \frac{1}{c(f_{i})}$ for $i\{1,2\}$ and $\alpha = \alpha_{1} + \alpha_{2}$. Then,
		\[
			e^{-\frac{1}{\alpha} s(x_{1}, x_{2}) } = \Big(e^{- c(f_{1}) f_{1}(x_{1}) } \Big)^{ \frac{\alpha_{1}}{\alpha} } \Big(e^{- c(f_{2}) f_{2}(x_{2}) } \Big)^{ \frac{\alpha_{2}}{\alpha} }.
		\]
		Note that $x_{i} \mapsto e^{- c(f_{i}) f_{i}(x_{i}) }$ is concave on $\X_{i}$ for $i \in \{1,2\}$ and $g(u,v) = u^{\frac{\alpha_{1}}{\alpha}} v^{\frac{\alpha_{2}}{\alpha}}$ is non-decreasing in each argument, concave on $\R_{+}\times \R_{+}$. Therefore, $(x_{1}, x_{2}) \mapsto e^{-\frac{1}{\alpha} s(x_{1}, x_{2}) }$ is concave on $\X_{1}\times \X_{2}$ as it is the composition of a concave functions with a function which is non-decreasing in each argument. In view of Definition \ref{convindex_def_infdimensional}, we have
		\begin{equation}\label{eq:ConvIndex_AdditiveLem_2_1}
			\frac{1}{\alpha} \leq c(s) \iff \frac{1}{c(f_{1})} + \frac{1}{c(f_{2})} \geq \frac{1}{c(s)}.
		\end{equation}
		
		Next, assume that $\mu\in [0, \alpha)$. We have,
		\[
			e^{-\frac{1}{\mu} f_{i}(x_{i}) } = \Big(e^{-\frac{\alpha}{\mu} f_{i}(x_{i})}\Big)^{\frac{1}{ \alpha } } =  \Big(e^{-\frac{\alpha}{\mu} c(f_{i}) f_{i}(x_{i})}\Big)^{\frac{\alpha_{i}}{\alpha} }, \quad x_{i}\in \X_{i}
		\]
		for each $i \in \{ 1,2\}$. Since $\frac{\alpha}{\mu} > 1$, it follows from Definition \ref{convindex_def_infdimensional} that $x_{i} \mapsto e^{-\frac{\alpha}{\mu} c(f_{i}) f_{i}(x_i)}$ is not concave for each $i \in \{ 1,2 \}$. Moreover,
		\[
			e^{-\frac{1}{\mu} s(x_{1}, x_{2})} = \Big( e^{-\frac{\alpha}{\mu} c(f_{1}) f_{1}(x_{1})}  \Big)^{\frac{\alpha_{1}}{\alpha} }  \Big( e^{-\frac{\alpha}{\mu} c(f_{2}) f_{2}(x_{2})}  \Big)^{\frac{\alpha_{2}}{\alpha} }, \quad (x_{1}, x_{2}) \in \X_{1}\times \X_{2}.
		\]
		In view of Lemma \ref{ConvIndex_AdditiveLem_1}, $e^{-\frac{1}{\mu} s}$ is not concave and consequently $c(s) < \frac{1}{\mu}$. Letting $\mu \to \alpha$, we have $c(s) \leq \frac{1}{\alpha}$. Thus,
		\begin{equation}\label{eq:ConvIndex_AdditiveLem_2_2}
			\frac{1}{c(f_{1})} + \frac{1}{c(f_{2})} \leq \frac{1}{c(s)}.
		\end{equation}
		Summing up, \eqref{eq:ConvIndex_AdditiveLem_2_1} and \eqref{eq:ConvIndex_AdditiveLem_2_2} imply \eqref{eq:ConvIndexSumFormula}.
	\end{enumerate}
\end{proof}	

We are ready to give the second main result of this section.
\begin{thm} \label{MainResult-2.2}
	Assume that $f_{i}$ is a non-constant lower semicontinuous function for each $i \in \{1, \ldots, n \}$. Then, $s$ is quasiconvex if and only if one of the following conditions holds:
	\begin{itemize}
		\item[(i)] $f_{1}, \ldots, f_{n}$ are convex.
		\item[(ii)] All of $f_{1}, \ldots f_{n}$ except one are convex and
		\begin{equation}\label{eq:MainResult-2.2}
			\frac{1}{c(f_{1})}  + \ldots + \frac{1}{c(f_{n})} \leq 0.
		\end{equation}
	\end{itemize}
\end{thm}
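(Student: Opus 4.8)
The plan is to reduce everything to the two-factor statement of Theorem \ref{TwoFactorCase} together with the reciprocal-additivity formula of Proposition \ref{ConvIndex_AdditiveLem_2}, by lumping all of the convex coordinate functions into one convex factor. Throughout I would use two standing facts: since each $f_i$ is non-constant and lower semicontinuous, Theorem \ref{convindex_values_infinitedimensional_constant} gives $c(f_i) < +\infty$; and by Theorem \ref{convindex_values_infinitedimensional}, $f_i$ is convex exactly when $c(f_i) \geq 0$ and non-convex exactly when $c(f_i) < 0$. I would also record the elementary sign fact that for $a \in (0,+\infty)$ and $b \in (-\infty,0)$ one has $\frac1a + \frac1b = \frac{a+b}{ab} \leq 0$ if and only if $a + b \geq 0$, since $ab < 0$.

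For sufficiency, assume first (i): then $s$ is a sum of convex functions, hence convex and thus quasiconvex. Assume instead (ii), and relabel so that $f_1, \ldots, f_{n-1}$ are convex. I would set $g \coloneqq f_1 + \ldots + f_{n-1}$ on $\X_1 \times \ldots \times \X_{n-1}$; this is convex, lower semicontinuous, and non-constant, since each summand is non-constant. If $f_n$ is also convex we are back in case (i). Otherwise $c(f_n) < 0$, and Proposition \ref{ConvIndex_AdditiveLem_2} applied to the convex lsc factors $f_1,\ldots,f_{n-1}$ gives $\frac1{c(g)} = \sum_{i=1}^{n-1}\frac1{c(f_i)}$, so hypothesis \eqref{eq:MainResult-2.2} reads $\frac1{c(g)} + \frac1{c(f_n)} \leq 0$. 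This inequality forces $c(g) > 0$ (if $c(g)=0$ the left side is $+\infty$) and $c(f_n)$ finite (if $c(f_n)=-\infty$ the left side equals $\frac1{c(g)}>0$); the sign fact then yields $c(g) + c(f_n) \geq 0$. Viewing $s$ as the two-factor sum $g + f_n$ of non-constant functions on $(\X_1\times\ldots\times\X_{n-1})\times\X_n$, Theorem \ref{TwoFactorCase} shows that $s$ is quasiconvex.

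For necessity, suppose $s$ is quasiconvex. Fixing, for any pair $i \neq j$, all other coordinates at points of the nonempty effective domains $\dom f_k$, the restriction $(x_i,x_j)\mapsto f_i(x_i)+f_j(x_j)+\text{const}$ is quasiconvex (a coordinate slice of a quasiconvex function), so $f_i+f_j$ is quasiconvex and Theorem \ref{TwoFactorCase} gives $c(f_i)+c(f_j)\geq 0$. Since the sum of two negative indices is negative, at most one $f_i$ can be non-convex. If none is, then all $c(f_i)\geq 0$ and (i) holds. If exactly one is, relabel it as $f_n$, so $f_1,\ldots,f_{n-1}$ are convex and $c(f_n)<0$; with $g \coloneqq f_1+\ldots+f_{n-1}$ again convex, lsc and non-constant, applying Theorem \ref{TwoFactorCase} to $s = g + f_n$ gives $c(g)+c(f_n)\geq 0$. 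This forces $c(g)>0$ (else $c(f_n)\geq 0$) and $c(f_n)$ finite (else $c(g)+c(f_n)=-\infty$), so the sign fact gives $\frac1{c(g)}+\frac1{c(f_n)}\leq 0$. Finally Proposition \ref{ConvIndex_AdditiveLem_2} rewrites $\frac1{c(g)}=\sum_{i=1}^{n-1}\frac1{c(f_i)}$, whence \eqref{eq:MainResult-2.2} holds and (ii) follows.

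I expect the main obstacle to be bookkeeping at the boundary values of the convexity index rather than any single deep step: one must treat the conventions $\frac10=+\infty$ and the possibility $c(f_i)=-\infty$ consistently, and verify that the hypotheses needed to invoke Theorem \ref{TwoFactorCase} (non-constancy of $g$ and of the two grouped factors) and Proposition \ref{ConvIndex_AdditiveLem_2} (lower semicontinuity and convexity of the grouped factors) are genuinely available. The one structural point worth stating carefully is that passing to a coordinate slice preserves quasiconvexity, which is precisely what powers the pairwise argument ruling out two non-convex factors.
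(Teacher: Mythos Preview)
Your proof is correct and follows essentially the same route as the paper: both directions reduce to Theorem~\ref{TwoFactorCase} applied to the two-factor decomposition $s = \hat{s} + f_m$ (your $g + f_n$), with Proposition~\ref{ConvIndex_AdditiveLem_2} converting between $\frac{1}{c(\hat{s})}$ and $\sum_{i\neq m}\frac{1}{c(f_i)}$. The only cosmetic difference is that for the ``at most one non-convex'' step the paper groups all putatively non-convex factors into a single $k$-fold sum and applies Theorem~\ref{TwoFactorCase} once, whereas you work pairwise; your more careful handling of the boundary cases $c(g)=0$, $c(f_n)=-\infty$ is a welcome addition but does not change the argument.
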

\begin{proof}
	Assume that $s$ is quasiconvex. Then, $f_{1}, \ldots ,f_{n}$ are quasiconvex. We claim that at most one of the functions $f_{1}, \ldots f_{n}$ is not convex. Suppose for a contradiction that $f_{i_{1}}, \ldots, f_{i_{k}}$ are not convex for some $k \geq 2$ so that from Theorem \ref{convindex_values_infinitedimensional} we deduce $c(f_{i_{j}}) < 0$ for each $j\in \{1,\ldots k\}$. Thus
	\begin{equation}\label{eq:MainResult-2.2-1}
		c(f_{i_{1}}) + \ldots + c(f_{i_{k}}) < 0.
	\end{equation}
	Next, define $\overline{s}$ on $ \X_{i_{1}} \times \ldots \times \X_{i_{k}}$ by 
	\[
		\overline{s}(x_{i_{1}}, \ldots, x_{i_{k}}) \coloneqq f_{i_{1}}(x_{i_{1}}) + \ldots + f_{i_{k}}(x_{i_{k}}).
	\]
	Clearly, $\overline{s}$ is quasiconvex. Then, by Theorem \ref{TwoFactorCase},
	$c(f_{i_{1}}) + \ldots + c(f_{i_{k}}) \geq 0$, which is in contradiction with \eqref{eq:MainResult-2.2-1}. Hence, the claim holds. If $f_{1}, \ldots, f_{n}$ are convex, then we are done. Suppose that $f_{m}$ is not convex for some $m\in \{1, \ldots, n\}$ and define $\hat{s}$ on $\bigtimes_{i \in \{1, \ldots, n\} \setminus \{m\}} \X_{i}$ by
	\[
		\hat{s}(x_{1}, \ldots, x_{m-1}, x_{m+1}, \ldots, x_{n}) = \sum_{i \in \{1, \ldots, n\} \setminus \{m\}} f_{i} (x_{i}).
	\]
	By Theorem \ref{TwoFactorCase}, $c(\hat{s}) + c(f_{m}) \geq 0$.
	Since $c(f_{m}) < 0$, we have $\frac{1}{c(\hat{s})} \leq - \frac{1}{c(f_{m})}$, which implies by Proposition \ref{ConvIndex_AdditiveLem_2} that
	\[
		\frac{1}{c(f_{1})} + \ldots + \frac{1}{c(f_{n})} \leq 0.
	\]
	
	Conversely, assume that either $f_{1}, \ldots, f_{n}$ are convex or all of $f_{1}, \ldots, f_{n}$  except one are convex and \eqref{eq:MainResult-2.2} holds. If the former holds, then $s$ is quasiconvex trivially. Suppose that the latter holds together with $f_{m}$ is not convex for some $m\in \{1, \ldots, n\}$ and $\hat{s}$ is defined as above. Proposition \ref{ConvIndex_AdditiveLem_2} together with \eqref{eq:MainResult-2.2} implies that
	\[
		\frac{1}{c(\hat{s})} + \frac{1}{c(f_{m})}\leq 0.
	\]
	Since $c(f_{m}) < 0$, this implies that $c(\hat{s}) + c(f_{m}) \geq 0$. In view of Theorem \ref{TwoFactorCase}, $s$ is quasiconvex.
\end{proof}


\subsection{Infinite decomposable sums}\label{subsec:infsum}

The aim of this section is to extend Theorem \ref{MainResult-2.2} to infinite decomposable sums. Similar to the notation of the previous sections, $\X_{i}$ is a topological vector space and $f_{i}$ is a proper extended real-valued function on $\X_{i}$ for each $i \in \N$. Let $\X = \bigtimes_{i = 1}^{\infty} \X_{i}$. For every $n\in\N$ and $x \in \X$, let
\[
	s_{n}(x) \coloneqq \sum_{i = 1}^{n} f_{i}(x_{i}),\qquad s(x) \coloneqq \lim_{n\to\infty} s_{n}(x)
\]
provided that the limit exists. Given $n\in\N$, it can be easily noticed that if $f_i$ is lower semicontinuous on $\X_i$ for each $i \in\{ 1,\ldots,n\}$, then $s_{n}$ is lower continuous on $\X$ since, in this case, each $f_{i}$ is also lower semicontinuous on $\X$ and lower semicontinuity is preserved under finite sums. Before presenting the main result of this section, we give two separate conditions under which lower semicontinuity is preserved under infinite sums. 
\begin{lem} \label{Lemma Section 2.3}
	Assume that $f_{i}$ is a lower semicontinuous function for each $i \in\N$ and at least one of the following conditions holds:
	\begin{enumerate}[(i)]
		\item $f_{i}$ is real-valued for each $i \in \N$ and $(s_{n})_{n\in\N}$ is uniformly convergent.
		\item $f_{i} \geq 0$ for each $i \in \N$. 
	\end{enumerate}
	Then, $s$ is well-defined and lower semicontinuous.
\end{lem}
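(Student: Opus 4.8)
The plan is to handle the two sufficient conditions separately, in each case reducing the claim to a stability property of lower semicontinuity under a particular mode of convergence. As already noted preceding the statement, for each fixed $n$ the partial sum $s_{n}$ is lower semicontinuous on $\X$: each map $x\mapsto f_{i}(x_{i})$ is lower semicontinuous on $\X$ because it is the composition of the continuous coordinate projection $\pi_{i}\colon \X\to \X_{i}$ with the lower semicontinuous function $f_{i}$, and lower semicontinuity is preserved under finite sums. Thus the only two things to establish in either case are that the limit $s=\lim_{n}s_{n}$ exists pointwise (well-definedness) and that it inherits lower semicontinuity.

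Under condition (ii), since $f_{i}\geq 0$ the sequence $(s_{n}(x))_{n\in\N}$ is nondecreasing for every $x\in\X$, hence converges in $[0,+\infty]$; so $s$ is well-defined and in fact $s=\sup_{n}s_{n}$. It then suffices to invoke the fact that a supremum of lower semicontinuous functions is again lower semicontinuous. I would prove this directly from the sublevel-set characterization: for each $a\in\R$, the set $\{s\leq a\}=\bigcap_{n}\{s_{n}\leq a\}$ is an intersection of closed sets, hence closed. This yields the lower semicontinuity of $s$ and completes case (ii).

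Under condition (i), uniform convergence of $(s_{n})$ immediately yields pointwise convergence, so $s$ exists and, as a uniform limit of the real-valued functions $s_{n}$, is itself real-valued; this settles well-definedness. For lower semicontinuity I would establish the auxiliary fact that a uniform limit of lower semicontinuous real-valued functions is lower semicontinuous, by the standard $\epsilon/3$ argument: fix $x_{0}\in\X$ and $\epsilon>0$, choose $N$ with $\sup_{x}\abs{s_{N}(x)-s(x)}<\epsilon/3$, then use lower semicontinuity of $s_{N}$ at $x_{0}$ to find a neighborhood $U$ of $x_{0}$ on which $s_{N}(x)>s_{N}(x_{0})-\epsilon/3$, and chain the three estimates to obtain $s(x)>s(x_{0})-\epsilon$ for all $x\in U$. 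Real-valuedness is precisely what makes the differences $s_{N}-s$ meaningful, which explains why this hypothesis is bundled with uniform convergence in (i).

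The routine components—monotone convergence, the sublevel-set description of a supremum, and the $\epsilon/3$ estimate—are all elementary. The only point requiring genuine care, and which I regard as the main (albeit mild) obstacle, is checking that these stability results apply cleanly in the product-space setting: one must confirm that lower semicontinuity of each $f_{i}$ on $\X_{i}$ transfers to $x\mapsto f_{i}(x_{i})$ on the product $\X$ equipped with the product topology, so that the coordinate projections are continuous. This is exactly the ingredient already invoked to see that each $s_{n}$ is lower semicontinuous, so once it is in place both cases follow from the general stability facts above.
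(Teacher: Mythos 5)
Your proposal is correct and follows essentially the same route as the paper: case (ii) is the ``supremum of lower semicontinuous functions is lower semicontinuous'' fact (you phrase it via closed sublevel sets $\{s\leq a\}=\bigcap_n\{s_n\leq a\}$, the paper equivalently via open strict superlevel sets $U_r=\bigcup_n U_r^n$), and case (i) is the standard uniform-limit argument, which your $\epsilon/3$ chain and the paper's $2\sup_x|s(x)-s_n(x)|$ bound both implement. No gaps.
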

\begin{proof}
	In both cases, it is obvious that $s$ is well-defined. Assume that $(i)$ holds. Let $x_{0}\in \X$. Note that $s_{n}(x_{0})\in\R$ since each $f_{i}$ is real-valued. By (uniform) convergence, $s(x_0)\in\R$ as well. Thus,
	\begin{align}\label{eq:Lemma Section 2.3-1}
			s(x_{0}) 
			= s(x_{0}) - s_{n} (x_{0}) + s_{n} (x_{0}) 
			 \leq s_{n}(x_{0}) + |s(x_{0}) - s_{n}(x_{0})| 
			 \leq  s_{n}(x_{0}) + \sup_{x\in \X}|s(x) - s_{n}(x)|.
	\end{align}
	It follows from the lower semicontinuity of $s_{n}$ at $x_{0}$ together with \eqref{eq:Lemma Section 2.3-1} that
	\begin{equation}\label{eq:Lemma Section 2.3-2}
		s(x_{0}) \leq \sup_{x\in \X}|s(x) - s_{n}(x)| + \liminf_{x\to x_{0} }s_{n}(x).
	\end{equation}
	Moreover, similar to \eqref{eq:Lemma Section 2.3-1}, for every $x^{\prime}\in \X$ and $n\in\N$,
	\[
		s_{n}(x^{\prime}) \leq \sup_{x\in \X}|s(x) - s_{n}(x)| + s(x^{\prime}).
	\]
	By taking limit infima of both sides as $ x^{\prime} \to x_{0}$, we get
	\[
		\liminf_{x^{\prime} \to x_{0} } s_{n}(x^{\prime}) \leq \sup_{x\in \X}|s(x) - s_{n}(x)| + \liminf_{x^{\prime} \to x_{0} } s(x^{\prime}).
	\]
	Together with \eqref{eq:Lemma Section 2.3-2}, we have
	\[
		s(x_{0}) \leq 2 \sup_{x\in \X}|s(x) - s_{n}(x)| + \liminf_{x \to x_{0} } s(x).
	\]
	It follows from uniform convergence of $(s_{n})_{n\in\N}$ that
	\[
		s(x_{0}) \leq \liminf_{x \to x_{0} } s(x).
	\]
	
	Next, assume that $(ii)$ holds. For every $r\geq 0$ and $n\in\N$, define
	\[
		U_{r} \coloneqq \{ x\in \X \colon s(x) > r \},\quad 
		U_{r}^{n} \coloneqq \{ x\in \X \colon s_{n}(x) > r \}.
	\]
	Fix $r\geq 0$. We claim that
	\begin{equation} \label{eq:Lemma Section 2.3-3}
		U_{r} = \bigcup_{n= 1}^{\infty}U_{r}^{n}.
	\end{equation}
	Let $\overline{x} \in U_{r}$. Observe that $(s_{n}(\overline{x}))_{n\in\N}$ is a nondecreasing sequence in $\R$ since $f_{i}(\overline{x})\geq 0$ for each $i\in\N$. Hence, there exists $N \in \N$ such that $s_{N}(\overline{x}) > r$. Then, $\overline{x}\in U_{r}^{N}$, and consequently, $U_{r} \subseteq \bigcup_{n= 1}^{\infty}U_{r}^{n}$. Conversely, let $x^{\prime}\in \bigcup_{n= 1}^{\infty} U_{r}^{n}$. Then, there exists $N^{\prime}\in \N$ such that $x^{\prime} \in U_{r}^{N^{\prime}}\subseteq U_{r}$ since $(s_{n}(x^{\prime}))_{n\in\N}$ is a nondecreasing sequence in $\R$. Hence, $\bigcup_{n= 1}^{\infty}U_{r}^{n} \subseteq U_{r}$, and consequently, \eqref{eq:Lemma Section 2.3-3} holds. Finally, for each $n\in\N$, the set $U_{r}^{n}$ is open since $s_{n}$ is lower semicontinuous on $\X$. By \eqref{eq:Lemma Section 2.3-3}, $U_r$ is also open. Hence, $s$ is lower semicontinuous on $\X$.
\end{proof}
We are ready to give the main result of this section. 

\begin{thm} \label{MainResult-2.3}
	For each $i\in\N$, suppose that $f_{i}$ is a non-constant lower semicontinuous function. In addition, assume that either $f_{i} \geq 0$ for each $i\in\N$, or $f_{i}$ is real-valued for each $i \in\N$ as well as $(s_{n})_{n\in\N}$ is uniformly absolutely convergent. Then, $s$ is quasiconvex if and only if one of the following conditions holds:
	\begin{itemize}
		\item[(i)] $f_{i}$ is convex for each $i\in\N$.
		\item[(ii)] All of $f_{1},f_{2}, \ldots$ except one are convex and
		\begin{equation}\label{eq:MainResult-2.3-1}
			\sum_{i = 1}^{\infty} \frac{1}{c(f_{i})} \leq 0.
		\end{equation}
	\end{itemize}
\end{thm}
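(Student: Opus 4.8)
The plan is to reduce everything to the finite-sum result, Theorem \ref{MainResult-2.2}, by working with the partial sums $s_n$ and then passing to the limit. The two enabling facts are that $s$ is well-defined and lower semicontinuous (Lemma \ref{Lemma Section 2.3}) and that a pointwise limit of quasiconvex (resp.\ convex) functions is again quasiconvex (resp.\ convex), since $\max$ and the inequality defining (quasi)convexity pass to limits.

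For the forward implication, assume $s$ is quasiconvex; I would first show each partial sum $s_n$ is quasiconvex. To this end, pick a point $\bar x$ at which the tail series $\sum_{i>n} f_i(\bar x_i)$ is finite --- possible under either hypothesis (using a point of $\dom s$ when $f_i\ge 0$, and absolute convergence in the real-valued case) --- and restrict $s$ to the affine slice obtained by freezing the coordinates $i>n$ at $\bar x_i$. On this slice $s$ equals $s_n$ plus a finite constant, and restriction to a convex set together with subtraction of a constant preserves quasiconvexity, so $s_n$ is quasiconvex. Applying Theorem \ref{MainResult-2.2} to each $s_n$ then forces the dichotomy at every finite level; in particular at most one $f_i$ may fail to be convex, since two non-convex functions would already contradict the theorem for any $n$ exceeding both their indices. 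If none fails we are in case (i). If exactly one, say $f_m$, fails, then for every $n\ge m$ Theorem \ref{MainResult-2.2}(ii) gives $\sum_{i=1}^n 1/c(f_i)\le 0$; since all indices $c(f_i)$ with $i\ne m$ are strictly positive (convex, non-constant, hence finite by Theorems \ref{convindex_values_infinitedimensional} and \ref{convindex_values_infinitedimensional_constant}), these partial sums are nondecreasing in $n$ and bounded above by $0$, so they converge to $\sum_{i=1}^\infty 1/c(f_i)\le 0$, which is \eqref{eq:MainResult-2.3-1}.

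For the converse, case (i) is immediate: each $s_n$ is a finite sum of convex functions, hence convex, and $s=\lim_n s_n$ is convex as a pointwise limit, hence quasiconvex. In case (ii), let $f_m$ be the lone non-convex function. The key observation is that because every term $1/c(f_i)$ with $i\ne m$ is nonnegative, \eqref{eq:MainResult-2.3-1} forces $\sum_{i=1}^n 1/c(f_i)\le 0$ for each $n\ge m$ as well; the hypotheses of Theorem \ref{MainResult-2.2}(ii) are therefore met at every level $n\ge m$, so each such $s_n$ is quasiconvex. Passing to the pointwise limit $s=\lim_n s_n$ preserves quasiconvexity, and the claim follows. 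Note that this route sidesteps any need for an infinite version of the additive formula in Proposition \ref{ConvIndex_AdditiveLem_2}.

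The main obstacle I anticipate is organizing the limiting arguments uniformly across the two hypotheses of the theorem: making the ``freeze the tail'' reduction rigorous (ensuring the tail constant is finite, and disposing of the degenerate case $\dom s=\emptyset$ where $s\equiv+\infty$ is trivially quasiconvex), and confirming the monotonicity and sign bookkeeping for the series $\sum 1/c(f_i)$, so that the finite-level conclusions of Theorem \ref{MainResult-2.2} assemble correctly in the limit. Once these are in place, the heavy lifting is entirely inherited from the finite case and from the stability of (quasi)convexity under pointwise limits.
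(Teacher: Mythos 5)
Your proposal is correct, and its forward direction is essentially the paper's: both reduce to Theorem \ref{MainResult-2.2} on finite sub-sums to conclude that at most one $f_i$ can fail to be convex, and both obtain \eqref{eq:MainResult-2.3-1} as the limit of the nondecreasing partial sums $\sum_{i=1}^n 1/c(f_i)\le 0$, bounded above by $-1/c(f_m)$. (The paper handles the ``at most one non-convex'' step by packaging the infinite tail $\overline{s}=\sum_{j>n_0}f_j$ as a single extra factor and case-splitting on whether it is convex; your version, which only ever applies the finite theorem to $s_n$ for $n$ exceeding two putative non-convex indices, is a touch cleaner since it never needs non-constancy or semicontinuity of an infinite tail.) Where you genuinely diverge is the converse in case (ii): the paper shows $f_1+f_2$ is quasiconvex via Theorem \ref{MainResult-2.2}, converts this to $c(f_1)+c(f_2)\ge 0$ via Theorem \ref{TwoFactorCase}, and then applies Theorem \ref{TwoFactorCase} once more to the three-factor decomposition $f_1+f_2+\overline{\overline{s}}$ with the convex infinite tail $\overline{\overline{s}}$; you instead note that \eqref{eq:MainResult-2.3-1} together with $1/c(f_i)\ge 0$ for $i\ne m$ forces $\sum_{i=1}^n 1/c(f_i)\le 0$ at every finite level, so each $s_n$ is quasiconvex by Theorem \ref{MainResult-2.2}, and quasiconvexity passes to the pointwise limit $s=\lim_n s_n$ because the inequality $s_n(\lambda x+(1-\lambda)y)\le\max\{s_n(x),s_n(y)\}$ survives limits in $\bar\R$. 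This is a valid and more elementary route that again spares you the bookkeeping on the infinite tail. Two small remarks: your claim that the $c(f_i)$, $i\ne m$, are ``strictly positive'' because $f_i$ is convex and non-constant is not quite the right justification (convexity and non-constancy only give $0\le c(f_i)<+\infty$); positivity is instead forced by the finiteness of the sums in condition (ii) under the convention $1/0=+\infty$, though this does not affect the monotonicity-and-boundedness argument. And the ``freeze the tail'' step needed to deduce quasiconvexity of $s_n$ from that of $s$ requires a point at which the frozen part of the series is finite, which tacitly excludes the degenerate situation $s\equiv+\infty$ (conceivable when $f_i\ge 0$); you flag this explicitly, while the paper asserts the step without comment, so it is a shared gap rather than a defect of your argument.
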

\begin{proof}
	Assume that $s$ is quasiconvex. Then, $f_{i}$ is quasiconvex for each $i\in\N$. We claim that $f_{i}$ is not convex for at most one $i\in \N$. Suppose not. Let us fix $n_{0}\in \N$ such that $n_{0} \geq 2$ and define $\overline{s}$ on $\bigtimes_{i = n_{0}+1}^{\infty} \X_{i}$ by
	\[
		\overline{s}(x_{n_{0} +1}, x_{n_{0} +2}, \ldots ) \coloneqq \sum_{j = n_{0}+1}^{\infty} f_{j}(x_{j}).
	\]
	Then, it follows from Theorem \ref{MainResult-2.2} that at most one of $f_{1}, \ldots f_{n_{0} }, \overline{s}$ is not convex. We need to consider the following two cases: 
	\begin{enumerate}[(a)]
		\item Suppose that $\overline{s}$ is convex. Then, at most one of $f_{1}, \ldots, f_{n_{0}}$ is not convex.
		\item Suppose that $\overline{s}$ is not convex. Let us choose $i_{0}, i_{1} > n_{0}$ such that $f_{i_{0}}, f_{i_{1}}$ are not convex. It follows from the quasiconvexity of $s$ on $\X$ that the function $\hat{s}$ defined on $\X_{1}\times\ldots \times \X_{n_{0}}\times \X_{i_{0}}\times \X_{i_{1}}$ by
		\[
			\hat{s} (x_{1},\ldots x_{n_{0}},x_{i_{0}}, x_{i_{1}}) \coloneqq f_{1}(x_{1}) + \ldots + f_{n_{0}}(x_{n_{0}}) + f_{i_{0}}(x_{i_{0}}) + f_{i_{1}}(x_{i_{1}})
		\]
		is quasiconvex. But then, by Theorem \ref{MainResult-2.2}, at least one of $f_{i_{0}}, f_{i_{1}}$ needs to be convex.
	\end{enumerate}
	Therefore, the claim holds, that is, either $f_{i}$ is convex for each $i\in\N$ or all of $f_{1},f_{2}, \ldots$ except one are convex. In the former case, $(i)$ holds. Suppose that the latter holds. Without loss of generality, since $(s_{n})_{n\in\N}$ is absolutely convergent, we can assume that $f_{1}$ is the function which is quasiconvex but not convex. For every $n\in \N$, define
	\[
		a_{n} \coloneqq   \sum_{i = 1}^{n} \frac{1}{c(f_{i + 1})}.
	\]
	Observe that $(a_{n})_{n\in \N}$ is a nondecreasing sequence of real numbers since $\frac{1}{c(f_{i})} \geq 0$ for every $i \in\N\setminus\{1\}$ by Theorem \ref{convindex_values_infinitedimensional}. Moreover, the quasiconvexity of $s$ implies that $s_{n}$ is quasiconvex. In view of Theorem \ref{MainResult-2.2}, for every $n\in \N$, we have $\frac{1}{c(f_{1})} + a_{n} \leq 0$, that is,
	\[
		a_{n}  \leq  - \frac{1}{c(f_{1})}.
	\]
	Observe that $\lim_{n\to\infty} a_{n}$ exists and it is finite since the sequence $(a_{n})_{n\in\N}$ is nondecreasing and bounded from above. Then, $\lim_{n\to\infty} a_{n} \leq - \frac{1}{c(f_{1})}$ so that
	\[
		\sum_{i = 1}^{\infty} \frac{1}{c(f_{i})} = \lim_{n\to\infty} a_{n}  + \frac{1}{c(f_{1})} \leq 0.
	\]
	Hence, $(ii)$ holds in this case.
	
	Conversely, assume that either $(i)$ or $(ii)$ holds. If $(i)$ holds, then $s$ is quasiconvex trivially. Suppose that $(ii)$ holds. Without loss of generality, since $(s_{n})_{n\in\N}$ is absolutely convergent, we can assume that $f_{1}$ is not convex. Note that $f_{i}$ is convex, and consequently, $c(f_{i}) \geq 0$ for each $i \in \N\setminus \{1\}$ by Theorem \ref{convindex_values_infinitedimensional}. Hence, by \eqref{eq:MainResult-2.3-1},
	\begin{align} \label{eq:MainResult-2.3-2}
		\frac{1}{c(f_{1})} + \frac{1}{c(f_{2})} \leq \sum_{i = 1}^{\infty} \frac{1}{c(f_{i})} \leq 0.
	\end{align}
	In view of Theorem \ref{MainResult-2.2}, the convexity of $f_{2}$ and \eqref{eq:MainResult-2.3-2} imply that $f_{1} + f_{2}$ is quasiconvex. Then, by Theorem \ref{TwoFactorCase},
	\begin{equation}\label{eq:MainResult-2.3-3}
		c(f_{1}) + c(f_{2}) \geq 0.
	\end{equation}
	Next, on $\bigtimes_{i = 3}^{\infty} \X_{i}$, define a function $\overline{\overline{s}}$ by 
	\[
		\overline{\overline{s}}(x_{3}, x_{4}, \ldots) \coloneqq \sum_{i = 3}^{\infty}f_{i}(x_{i}).
	\]
	Clearly, $\overline{\overline{s}}$ is convex, and consequently, $c(\overline{\overline{s}}) \geq 0$ by Theorem \ref{convindex_values_infinitedimensional}.  By \eqref{eq:MainResult-2.3-3}, we have
	\[
		c(f_{1}) + c(f_{2}) +  c(\overline{\overline{s}}) \geq 0.
	\]
	It follows from Theorem \ref{TwoFactorCase} that $s$ is quasiconvex. 
\end{proof}

\section{Naturally quasiconvex conditional risk measures}\label{sec:nqc}

Throughout this section, we fix a probability space $(\Omega, \F, \P)$ and denote by $L^0(\F)$ the set of all $\F$-measurable real-valued random variables on $\Omega$, where two elements are distinguished up to $\P$-almost sure equality. Let $p\in [1,+\infty]$. We denote by $L^p(\F)=\{X\in L^0(\F)\colon \norm{X}_p<+\infty\}$ the Banach space of all $p$-integrable random variables equipped with the $L^p$-norm $X\mapsto \norm{X}_p\coloneqq \E[|X|^p]$ for $p<+\infty$, and $X\mapsto \norm{X}_p=\esssup|X|$ for $p=+\infty$. Furthermore, we assume that $L^{p}(\F)$ is equipped with the topology induced by the $L^p$-norm for $p\in [1,\infty)$, and with the weak$^\ast$ topology $\sigma(L^{\infty}(\F), L^{1}(\F))$ for $p = \infty$.

Let $\G$ be a sub-$\sigma$-algebra of $\F$. We denote by $L^{p}(\G)$ the set of $\G$-measurable elements in $L^p(\F)$ and equip it with the subspace topology of $L^p(\F)$. In particular, the dual space of $L^p(\G)$ is identified by $L^{q}(\G)$, where $q\in[1,+\infty]$ is such that $\frac{1}{p} + \frac{1}{q} = 1$, together with the bilinear mapping $\ip{\cdot,\cdot}\colon L^p(\G)\times L^q(\G)\to\R$ defined by
\[
\ip{X,Y}\coloneqq\E\sqb{XY}.
\] 
Let $L^{p}_{+}(\G) \coloneqq \{X \in L^{p}(\G) \colon \P\{X\geq 0\} =1 \}$ be the cone of positive elements in $L^p(\G)$; the cone $L^q_+(\G)$ is defined similarly. Then, the positive dual cone of $L^{p}_{+}(\G)$ is $L^{q}_{+}(\G)$ since we have
\[
	L^{q}_{+}(\G)= \{Y \in L^{q}(\G) \colon \E[YZ ] \geq 0\text{ for every } Z\in L_{+}^{p}(\G) \}.
\]
All of the equalities and inequalities among random variables are assumed to hold $\P$-almost surely. For every $A\in\F$, the stochastic indicator function of $A$ is denoted by ${\bf 1}_A$; we have ${\bf 1}_{A}(\omega)=1$ if $\omega\in A$, and ${\bf 1}_A(\omega)=0$ if $\omega \not\in A$.

	\subsection{Risk measures}\label{subsec:risk}

	In this section, we briefly discuss risk measures and review their basic properties. Let $p\in [1, \infty]$. In the following, we consider the risk measures which are defined on $L^{p}(\F)$ and take values in the closed linear subspace $L^{p}(\G)$ of its domain. In this setting, a risk measure gives the risk of a financial position possibly at an intermediate time. In other words, we are in the conditional setting. Notice that the case where the risk measure is measured at the present time, namely the static case, can be covered by taking $\G = \{ \emptyset, \Omega \}$ (or a trivial $\sigma$-algebra) as a special case.

Below, we define a conditional risk measure as a functional with the minimal set of properties.

\begin{defn} \label{Definition of Risk Measure}
	A mapping $\rho \colon L^{p}(\F) \to L^{p}(\G)$ is a conditional risk measure if it satisfies the following properties:
	\begin{enumerate}[(i)]
		\item Monotonicity : $X \leq Y$ implies $\rho(X) \geq \rho(Y)$  for every $X,Y \in L^{p}(\F)$.
		\item Quasiconvexity : $\rho( \lambda X + (1-\lambda) Y) \leq \max \{ \rho(X), \rho(Y) \}$ for every $X,Y \in L^{p}(\F)$ and $\lambda\in [0,1]$. 
	\end{enumerate}
\end{defn}

The next definition provides some further properties of conditional risk measures which are of importance. 

\begin{defn}\label{defn:properties risk measures}
	A conditional risk measure $\rho \colon L^{p}(\F) \to L^{p}(\G)$ is called 
	\begin{enumerate}[(i)]
		\item translative if $\rho(X + Z) = \rho(X) - Z$ for every $Z\in L^{p}(\G)$.
		\item local if $\rho(X {\bf 1}_{A}) {\bf 1}_{A} = \rho(X) {\bf 1}_{A}$ for every $X\in L^{p}(\F)$ and $A\in \G$.
		\item convex if $\rho(\lambda X + (1-\lambda)  Y)  \leq \lambda \rho(X) + (1-\lambda) \rho(Y)$ for every $X,Y\in L^{p}(\F)$ and $\lambda\in  [0,1]$.
	\item normalized if $\rho(0)=0$.
	\end{enumerate}
\end{defn}

\begin{rem}\label{rem:alternative locality}
	Locality property holds if and only if $\rho(X {\bf 1}_{A} + U {\bf 1}_{A^{c}} ) = \rho (X ) {\bf 1}_{A} + \rho(U) {\bf 1}_{A^{c}}$ for every $X, U \in L^{p}(\F)$ and $A\in \G$.
\end{rem}
Translativity captures the following idea. Suppose that $X, Z$ are financial positions such that worth of $X$ and $Z$ are fully known at the times $T_{1}$ and $T_{2}$, respectively, where $T_{1} < T_{2}$. Suppose further that $\rho$ is a risk measure which measures risk of a financial position at time $T_{1}$ whose value is fully known at time $T_{2}$. Then, one can expect that at time $T_{1}$ the risk of holding $X+Z$ is equal to the risk of holding $X$ and liquidating $Z$. 

Locality means that the events that will not happen in the future have no contribution to the value of risk. Note that this is exactly having a tree-like probabilistic setting as in Figure \ref{Figure: Tree} when $\F$ is finitely generated.

Suppose $X$ and $Y$ are two financial positions. One can diversify by investing some fraction $\lambda\in [0,1]$ of the resources on $X$ and the remaining on $Y$. In other words, one can diversify by choosing to invest on the portfolio $\lambda X + (1-\lambda) Y$ instead of taking position in only one of the assets. The convexity property exactly means that the risk of the diversified position is less than or equal to the weighted sum of individual risks with weights equal to the fractions of the assets in the diversified portfolio. 

As discussed earlier, quasiconvexity is also considered as a formulation of diversification. Under quasiconvexity, for each possible scenario, risk of the diversified portfolio is less than or equal to the risk of the riskier asset. Although both convexity and quasiconvexity refers to the same concept, it is clear that convexity is a more conservative property to capture the concept of diversification.


\subsection{Natural quasiconvexity}\label{NQC defn}

The two properties which are of the utmost importance for the remaining part are natural quasiconvexity and $\star$-quasiconvexity. These properties are studied in \citet{Helbig}, \citet{Tanaka}, \citet{Kuroiwa}, and \citet{Jeyakumar} for vector-valued and set-valued functions on general vector spaces. We study them within the framework of conditional risk measures. Let us start with the definition of a naturally quasiconvex conditional risk measure.

\begin{defn}\label{StarQC defn}
	A conditional risk measure $\rho : L^{p}(\F) \to L^{p}(\G)$ is called naturally quasiconvex if for every $X, Y\in L^{p}(\F)$ and $\lambda\in[0,1]$ there exists $\mu \in [0,1]$ such that 
	\[
		\rho(\lambda X + (1-\lambda) Y) \leq \mu \rho(X) + (1-\mu)\rho(Y).
	\]
\end{defn}

Clearly, natural quasiconvexity is a property which is stonger than quasiconvexity but weaker than convexity. Next, we give definition of $\star$-quasiconvex risk measure.  

\begin{defn}\label{def:star-quasiconvex}
	A conditional risk measure $\rho : L^{p}(\F) \to L^{p}(\G)$ is called $\star$-quasiconvex if for every $Z^{*}\in L^{q}_{+}(\G)$, the function	$X \mapsto \E [\rho(X) Z^{*}]$ is quasiconvex on $L^{p}(\F)$.
\end{defn}

Next, we show the equivalence between natural quasiconvexity and $\star$-quasiconvexity. A set-valued version of the following theorem for abstract topological vector spaces can be found in \citet[Proposition 2.1, Theorem 2.1]{Kuroiwa}.

\begin{thm}\label{StarQC Equivalent to NQC}
	A conditional risk measure $\rho : L^{p}(\F) \to L^{p}(\G)$ is naturally quasiconvex if and only if it is $\star$-quasiconvex. 
\end{thm}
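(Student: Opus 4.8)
The plan is to prove the two implications separately, with the forward direction (natural quasiconvexity $\Rightarrow$ $\star$-quasiconvexity) being routine and the reverse direction carrying essentially all the weight. For the forward direction, I would fix $Z^{*}\in L^{q}_{+}(\G)$, $X,Y\in L^{p}(\F)$ and $\lambda\in[0,1]$, and use natural quasiconvexity to obtain $\mu\in[0,1]$ with $\rho(\lambda X+(1-\lambda)Y)\leq \mu\rho(X)+(1-\mu)\rho(Y)$ almost surely. Multiplying this inequality by the nonnegative $Z^{*}$ (which preserves the a.s.\ order) and taking expectations, all integrals being finite by H\"older, yields
\[
\E[\rho(\lambda X+(1-\lambda)Y)Z^{*}]\leq \mu\,\E[\rho(X)Z^{*}]+(1-\mu)\,\E[\rho(Y)Z^{*}]\leq \max\cb{\E[\rho(X)Z^{*}],\,\E[\rho(Y)Z^{*}]},
\]
since the middle expression is a convex combination of the two expectations. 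Hence $X\mapsto \E[\rho(X)Z^{*}]$ is quasiconvex.

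For the reverse direction I would fix $X,Y$ and $\lambda$ and abbreviate $Z\coloneqq\rho(\lambda X+(1-\lambda)Y)$, $A\coloneqq\rho(X)$, $B\coloneqq\rho(Y)$, all elements of $L^{p}(\G)$. The first ingredient is a bipolar-type fact: for $V\in L^{p}(\G)$ one has $V\geq 0$ a.s.\ if and only if $\E[VZ^{*}]\geq 0$ for every $Z^{*}\in L^{q}_{+}(\G)$, where the nontrivial direction follows by testing against $\mathbf{1}_{\{V<0\}}\in L^{q}(\G)$. Applying this with $V=\mu A+(1-\mu)B-Z$ shows that finding $\mu\in[0,1]$ with $Z\leq \mu A+(1-\mu)B$ a.s.\ (which is exactly natural quasiconvexity for this triple) is equivalent to finding $\mu\in[0,1]$ such that $\mu\,\E[(A-Z)Z^{*}]+(1-\mu)\,\E[(B-Z)Z^{*}]\geq 0$ for all $Z^{*}\in L^{q}_{+}(\G)$. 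On the other hand, $\star$-quasiconvexity gives $\E[ZZ^{*}]\leq \max\cb{\E[AZ^{*}],\E[BZ^{*}]}$ for every $Z^{*}\in L^{q}_{+}(\G)$, which says precisely that there is \emph{no} $Z^{*}\in L^{q}_{+}(\G)$ with both $\E[(A-Z)Z^{*}]<0$ and $\E[(B-Z)Z^{*}]<0$.

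The heart of the argument is then a finite-dimensional separation. I would introduce the linear map $\Phi\colon L^{q}_{+}(\G)\to\R^{2}$ given by $\Phi(Z^{*})\coloneqq\of{\E[(A-Z)Z^{*}],\,\E[(B-Z)Z^{*}]}$, whose image $D\coloneqq\Phi(L^{q}_{+}(\G))$ is a convex cone in $\R^{2}$. The previous paragraph says exactly that $D$ is disjoint from the open third quadrant $Q\coloneqq(-\infty,0)\times(-\infty,0)$. Separating the disjoint convex sets $D$ and $Q$ by the separating hyperplane theorem in $\R^{2}$ produces a nonzero $w=(w_{1},w_{2})$ with $w\cdot d\geq 0$ on $D$ and $w\cdot q\leq 0$ on $\overline{Q}$; the latter (testing $(-1,0)$ and $(0,-1)$) forces $w_{1},w_{2}\geq 0$, and homogeneity of the cone $D$ lets us take the separating functional through the origin. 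Normalizing $\mu\coloneqq w_{1}/(w_{1}+w_{2})\in[0,1]$ then gives $\mu\,\E[(A-Z)Z^{*}]+(1-\mu)\,\E[(B-Z)Z^{*}]\geq 0$ for all $Z^{*}\in L^{q}_{+}(\G)$, and by the bipolar fact this is $Z\leq \mu A+(1-\mu)B$ a.s. Since $X,Y,\lambda$ were arbitrary, $\rho$ is naturally quasiconvex.

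The main obstacle is precisely this last passage: $\star$-quasiconvexity only furnishes, for each individual $Z^{*}$, a pointwise choice between $A$ and $B$, whereas natural quasiconvexity demands a single weight $\mu$ valid simultaneously for all $Z^{*}$. The reduction to a convex cone in $\R^{2}$ together with a homogeneous separation is what converts this family of pointwise inequalities into one uniform convex combination; I would also verify that the degenerate configurations (for instance $D=\{0\}$, or $D$ a ray or a line) cause no trouble, which they do not, since the separation never uses that $D$ is full-dimensional or closed.
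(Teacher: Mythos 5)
Your proof is correct, and the reverse implication follows a genuinely different route from the paper's. The paper argues by contradiction: assuming $\star$-quasiconvexity but not natural quasiconvexity, it forms the compact segment $A=\cb{\mu\rho(X)+(1-\mu)\rho(Y)\colon\mu\in[0,1]}$ and the closed convex set $B=\rho(\lambda X+(1-\lambda)Y)+L^p_+(\G)$ inside $L^p(\G)$, strictly separates them by Hahn--Banach, shows the separating functional must lie in $L^q_+(\G)$ (else $\inf_B=-\infty$), and thereby contradicts $\star$-quasiconvexity. You instead argue directly: you push everything into $\R^2$ through the linear map $Z^*\mapsto\of{\E[(\rho(X)-Z)Z^*],\E[(\rho(Y)-Z)Z^*]}$, observe that $\star$-quasiconvexity says exactly that the image cone $D$ misses the open negative quadrant, and then a finite-dimensional separation of $D$ from that quadrant (through the origin, since $0\in D$ and $0\in\overline{Q}$ force the separating constant to be $0$) yields the single weight $\mu$; the elementary duality $V\geq 0$ a.s.\ iff $\E[VZ^*]\geq 0$ for all $Z^*\in L^q_+(\G)$, proved by testing against ${\bf 1}_{\{V<0\}}$, then converts the family of scalar inequalities back into the almost sure one. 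Your version buys two things: it is constructive (it exhibits $\mu$ rather than deriving a contradiction), and it needs only the separating hyperplane theorem in $\R^2$ plus an elementary order-duality fact, sidestepping the infinite-dimensional strict separation and the attendant care about which topology (norm for $p<\infty$, weak$^\ast$ for $p=\infty$) makes $A$ compact and $B$ closed. The paper's version, in exchange, works entirely in the image space and makes the geometric obstruction (the segment missing the upper set) visible at a glance. Both are complete; your handling of the degenerate cases ($D=\{0\}$, a ray, or a line) is indeed unproblematic since the separation argument never uses closedness or full dimension of $D$.
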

\begin{proof}
	Assume that $\rho$ is naturally quasiconvex and let $Z^{*}\in L^{q}_{+}(\G)$. Then, for every $X, Y\in L^{p}(\F)$ and $\lambda \in [0,1]$ there exists $\mu\in[0,1]$ such that
	\[
		\E[\rho(\lambda X + (1- \lambda) Y) Z^{*}] \leq \mu \E[\rho(X) Z^{*}] + (1-\mu) \E[\rho(Y) Z^{*}] .
	\]
	Clearly,
	\[
		\mu \E[\rho(X) Z^{*}] + (1-\mu) \E[\rho(Y) Z^{*}] \leq \max \{\E[\rho(X) Z^{*}], \E[\rho(Y) Z^{*}] \}.
	\]
	Hence,
	\[
		\E[\rho(\lambda X + (1- \lambda) Y) Z^{*}] \leq \max \{\E[\rho(X) Z^{*}], \E[\rho(Y) Z^{*}] \}.
	\]
	Conversely, assume that $\rho$ is $\star$-quasiconvex and assume to the contrary that $\rho$ is not naturally quasiconvex. Then, there exists $X,Y\in L^{p}(\F)$, $\lambda \in [0,1]$ such that for every $\mu\in [0,1] $ 
	\begin{equation}\label{eq:NQC-starQC-1}
		\rho(\lambda X + (1-\lambda) Y ) > \mu \rho(X) + (1-\mu) \rho(Y).
	\end{equation}
	Clearly, \eqref{eq:NQC-starQC-1} is equivalent to
	\[
		\mu \rho(X) + (1-\mu)\rho(Y)\not\in \rho(\lambda X + (1-\lambda) Y) + L^{p}_{+}(\G).
	\]
	Let us define two sets $A\coloneqq \{\mu\rho(X) + (1-\mu) \rho(Y)  \colon \mu\in [0,1]\}$ and $B \coloneqq \rho(\lambda X + (1-\lambda) Y) + L^{p}_{+}(\G)$. Observe that $A$ is compact in $L^{p}(\G)$, $B$ is closed convex  in $L^{p}(\G)$ and $A\cap B$ = $\emptyset$. Hence, in view of strict separation theorem, there exists $Z^{*}\in L^{q}(\G)$ such that 
	\begin{equation}\label{eq:NQC-starQC-2}
		\sup_{U\in A} \E[Z^{*} U] < \inf_{U\in B} \E[Z^{*} U].
	\end{equation}
	Notice that
	\[
		\sup_{U\in A} \E[Z^{*} U]  = \sup_{\mu\in[0,1] } \big\{\mu \E[Z^{*}\rho(X)] + (1-\mu)\E[Z^{*} \rho(Y)]  \big\}
		=\max \big\{\E[Z^{*}\rho(X)   ], \E[Z^{*} \rho(Y)] \big\}.
	\]
	On the other hand,
	\[
		\inf_{U\in B} \E[Z^{*} U] = \E[Z^{*} \rho(\lambda X + (1-\lambda) Y)]  + \inf_{U^{'}\in L^{p}_{+}(\G)} \E[Z^{*} U^{'}].
	\]
	Moreover, since $\E[U Z^{*}] < 0$ if $Z^{*}\in L_{+}^{q}(\G)$, we have 
	\[
		\inf_{U^{'}\in L^{p}_{+}(\G)} \E[Z^{*} U^{'}] = \begin{cases}
			0 & \text{if } Z^{*}\in L^{q}_{+}(\G),\\
			-\infty & \text{if }Z^{*}\not\in L^{q}_{+}(\G).
		\end{cases}
	\]
	Hence,
	\begin{equation}\label{eq:NQC-starQC-4}
		\inf_{U\in B} \E[Z^{*} U] = \begin{cases}
			\E[Z^{*} \rho(\lambda X + (1-\lambda) Y)]   & \text{if } Z^{*}\in L^{q}_{+}(\G),\\
			-\infty & \text{if }Z^{*}\not\in L^{q}_{+}(\G).
		\end{cases}
	\end{equation}
	In view of \eqref{eq:NQC-starQC-2} and \eqref{eq:NQC-starQC-4}, $Z^{*}\in L^{q}_{+}(\G)$. Therefore,
	\[
		\max \big\{\E[Z^{*}\rho(X)   ], \E[Z^{*} \rho(Y)] \big\} < \E[Z^{*} \rho(\lambda X + (1-\lambda) Y)],   
	\]
	which contradicts the $\star$-quasiconvexity assumption.  
\end{proof}

	
\subsection{Relationship between convexity and natural quasiconvexity}\label{subsec:rel}

		In light of the discussion of the previous section, natural quasiconvexity and convexity are closely related. Indeed, we will prove that they are equivalent properties for risk measures under some mild assumptions. To be more precise, we will require that the risk measure is, in some sense, non-constant, lower semicontinuous and local. In the example below, the role of being non-constant and lower semicontinuous can be clearly seen since we make use of Theorem \ref{MainResult-2.2}, which works under these properties. On the other hand, locality is not explicitly emphasized but can be understood from the tree-like like structure of the probability space as can be seen in Figure \ref{Figure-Example 1}. 
	
	\begin{example} \label{Finitely Generated Example}
		Let $\Omega = \{\omega_{1}, \ldots, \omega_{10}  \}$,  $X:\Omega\to\R$ be an $\F$-measurable random variable and $\rho : L^{p}(\F)\to L^{p}(\G)$ be a local conditional risk measure. We take
		\[
			\F = 2^{\Omega},\quad 
			\G = \sigma\big(\big\{ \{\omega_{1},\omega_{2},\omega_{3},\omega_{4}\} , \{\omega_{5},\omega_{6},\omega_{7}\}, \{\omega_{8},\omega_{9},\omega_{10}\}     \big\}\big).
		\]
		
		\tikzstyle{level 1}=[level distance=3.5cm, sibling distance=2cm]
		\tikzstyle{level 2}=[level distance=3.5cm, sibling distance=0.6cm]
		
		\tikzstyle{bag} = [text width=4em, text centered]
		\tikzstyle{end} = [circle, minimum width=3pt,fill, inner sep=0pt]
		
		\begin{center}
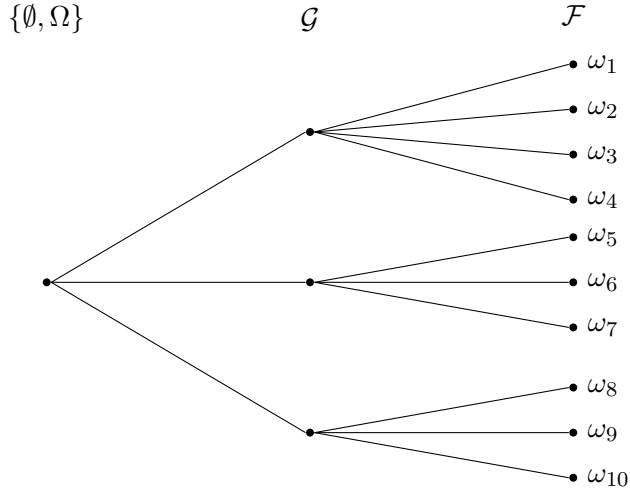

			
			\begin{tikzpicture}[grow=right, sloped]\label{Figure-Example 1}
				\node(Root)[end] {}
				child {
					node[end] {}
					child {
						node[end, label=right:
						{$\omega_{10}$}] {}
					}
					child {
						node[end, label=right:
						{$\omega_{9}$}] {}
					}
					child {
						node[end, label=right:
						{$\omega_{8}$}] {}
					}
				}
				child {
					node[end] {}
					child {
						node[end, label=right:
						{$\omega_{7}$}] {}
					}
					child {
						node[end, label=right:
						{$\omega_{6}$}] {}
					}
					child {
						node[end, label=right:
						{$\omega_{5}$}] {}
					}
				}
				child {
					node[end] {}
					child {
						node[end, label=right:
						{$\omega_{4}$}] {}
					}
					child {
						node[end, label=right:
						{$\omega_{3}$}] {}
					}
					child {
						node[end, label=right:
						{$\omega_{2}$}] {}
					}
					child {
						node[end, label=right:
						{$\omega_{1}$}] {}
					}
				};
				\path (Root)  ++(0cm,3.5cm) node{$\{ \emptyset, \Omega \}$};
				\path (Root-3)  ++(0,1.5cm) node{$\G$};
				\path (Root-3-4)  ++(0cm,0.65cm) node{$\F$};
			\end{tikzpicture}
			\captionof{figure}{Relationship between convex risk measures and natural quasiconvex risk measures when $\F$ is finitely generated.}
			\label{Figure: Tree}
		\end{center}
		
		In this setting, $L^{p}(\F) \cong \R^{4}\times\R^{3}\times\R^{3} \cong \R^{10}$ and $L^{p}(\G) \cong \R^{3}$. Since $X\in L^{p}(\F)$, it can be identified by the vector $x=
			 (X(\omega_{1}), \ldots, X(\omega_{10}))^{\mathsf{T}}\in \R^{10}$. For such $x\in\R^{10}$, we write $x_{1:4}=(X(\omega_1,\ldots,X(\omega_4)))^{\mathsf{T}}$, $x_{5:7}=(X(\omega_5),X(\omega_6),X(\omega_7))^{\mathsf{T}}$, $x_{8:10}=(X(\omega_8),X(\omega_9),X(\omega_{10}))^{\mathsf{T}}$. Furthermore, thanks to locality, $\rho$ can be seen as a vector-valued function from $\R^{10}$ into $\R^3$ if we write
		$\rho(x) =(\rho_{1}(x_{1:4}), \rho_{2}(x_{5:7}), \rho_{3}(x_{8:10}))^{\mathsf{T}}\in \R^{3}$
		for $x\in\R^{10}$, where $\rho_1(x_{1:4})$ is the constant value of $\rho(X)$ on $\{\omega_1,\ldots,\omega_4\}$, and $\rho_2(x_{5:7}), \rho_3(x_{8:10})$ are defined similarly. 
		
		Assume that $\rho$ is naturally quasiconvex and $\rho_{i}$ is lower semicontinuous, non-constant for each $i \in \{ 1,2,3 \}$. In this finite-dimensional setting, clearly, $\rho$ is lower semicontiuous  if and only if $\rho_{1},\rho_2,\rho_3$ are lower semicontinuous. It follows from Theorem \ref{StarQC Equivalent to NQC} that $\rho$ is $\star$-quasiconvex. Then, by Definition \ref{StarQC defn}, for every $w=(w_{1},w_{2},w_{3})\in\R^{3}_{+}$,
		\begin{equation}\label{eq:Finitely Generated Example-1}
			x\mapsto w_{1}\rho_{1}(x_{1:4}) + w_{2}\rho_{2}(x_{5:7}) + w_{3}\rho_{3}(x_{8:10})
		\end{equation}
		is quasiconvex on $\R^{10}$. Depending on the nonnegativity of the convexity indices of $\rho_1,\rho_2,\rho_3$, there are eight possible cases.
		\begin{enumerate}[(a)]
			\item $c(\rho_{1}) \geq  0$, $c(\rho_{2}) \geq 0$ and $c(\rho_{3}) \geq 0$.
			\item $c(\rho_{1}) \geq 0$, $c(\rho_{2}) \geq 0$ and $c(\rho_{3}) < 0$. Then, it follows from Theorem \ref{convindex_values_infinitedimensional} that $\rho_{1}, \rho_{2}$ are convex and $\rho_{3}$ is not convex. In view of \eqref{eq:Finitely Generated Example-1} and Theorem \ref{MainResult-2.2}, we have
			\begin{equation}\label{eq:Finitely Generated Example-2}
				\frac{1}{c(w_{1}\rho_{1})} + \frac{1}{c(w_{2}\rho_{2})} + \frac{1}{c(w_{3}\rho_{3})} \leq 0.
			\end{equation}
			It follows from Lemma \ref{Homegenity Convexity Index} together with \eqref{eq:Finitely Generated Example-2} that
			\begin{equation}\label{eq:Finitely Generated Example-3}
				\frac{1}{c(w_{1}\rho_{1})} + \frac{1}{c(w_{2}\rho_{2})} + \frac{1}{c(w_{3}\rho_{3})}= \frac{w_{1}}{c(\rho_{1})} + \frac{w_{2}}{c(\rho_{2})} + \frac{w_{3}}{c(\rho_{3})}
			\end{equation}
			Notice that, since $\rho_{i}$ is non-constant for each $i \in \{1,2,3\}$, $c(\rho_{i}) < +\infty$ for each $i \in\{1,2,3\}$. Let us set $(w_{1},w_{2}, w_{3}) = (c(\rho_{1}),c(\rho_{2}), -c(\rho_{3}) )$ in \eqref{eq:Finitely Generated Example-3}. Then,
			\[
				\frac{1}{c(w_{1}\rho_{1})} + \frac{1}{c(w_{2}\rho_{2})} + \frac{1}{c(w_{3}\rho_{3})} =   1,
			\]
			which is in contradiction with \eqref{eq:Finitely Generated Example-2}. Hence, this case is eliminated.
			\item $c(\rho_{1}) \geq 0$, $c(\rho_{2}) < 0$ and $c(\rho_{3}) < 0$. Then, $\rho_{2}$ and $\rho_{3}$ are not convex by Theorem \ref{convindex_values_infinitedimensional}. On the other hand, since \eqref{eq:Finitely Generated Example-1} holds, it follows from Theorem \ref{MainResult-2.2} that at most one of $\rho_{1}$, $\rho_{2}$ or $\rho_{3}$ is not convex. We reached a contradiction, hence this case is eliminated. 
			\item $c(\rho_{1}) < 0$, $c(\rho_{2}) < 0$ and $c(\rho_{3}) < 0$. By a similar argument as in case (c), we can eliminate this case.  
		\end{enumerate}
		Observe that the remaining four cases are symmetric either with the case (b) or (c). Therefore, we are left with the case (a) and $\rho_{1},\rho_{2},\rho_{3}$ are convex. Hence, $\rho$ is convex. 
	\end{example}
	In  Example \ref{Finitely Generated Example}, we need non-constancy of $\rho_{i}$ for each $i\in \{1,2,3\}$ to draw the connection between natrual quasiconvexity and convexity. We need a similar assumption to generalize Example \ref{Finitely Generated Example}, which we state next.
	\begin{assumption}\label{as:Convexity and NQC- general case}
		For every $A\in \G$ with $\P(A) > 0$, the function $X\mapsto \E[\rho(X)1_{A}]$ is non-constant on $L^p(\F)$. 
	\end{assumption}
	To motivate this assumption, let us give an example of a quasiconvex risk measure which satisfies Assumption \ref{as:Convexity and NQC- general case}.
	\begin{example}\label{ex:assumption3.3}
		Consider the certainty equivalent risk measure 
		\[
			\rho(X) = \ell^{-1}\E[\ell(-X)|\G],\quad X\in L^p(\F),
		\]
		where $\ell\colon \R\to\R$ is a continuous increasing loss function. Let $x_{1}, x_{2} \in \R$ such that $x_{1}\neq x_{2}$. Observe that
		\begin{align*}
			&\rho(x_{1}) = \ell^{-1}\E[\ell(-x_{1})|\G] = \ell \big( \ell^{-1}(-x_{1})\big) = -x_{1},\\
			&\rho(x_{2}) = \ell^{-1}\E[\ell(-x_{2})|\G]  = \ell \big( \ell^{-1}(-x_{2})\big) = -x_{2}.
		\end{align*}
		Hence, 
		\[
			\E[\rho(x_{1}) {\bf 1}_{A}] - \E[\rho(x_{2}) {\bf 1}_{A}] 
			=\E\big[\big(\rho(x_{1}) -\rho(x_{2})\big) {\bf 1}_{A}\big] 
			= \E[(x_{2} - x_{1}) {\bf 1}_{A}]
			= (x_{2} - x_{1})\P(A) 
			> 0,
		\]
		where the strict inequality holds since $x_{1} \neq x_{2}$ and $\P(A) >0$. Hence, $X \mapsto \E[\rho(X){\bf 1}_{A}]$ is non-constant.
	\end{example}
	We would like to relate Assumption \ref{as:Convexity and NQC- general case} with sensitive risk measures, which are discussed in \citet{Follmer-Penner}, \citet{Cheridito} and \citet{Kloppel}. First let us give the definition of a sensitive risk measure.
	\begin{defn}
		$\rho$ is called sensitive if $\P\big\{\rho(-\epsilon {\bf 1}_{A}) > 0\big\} > 0$ holds for every $\epsilon > 0$ and $A\in \F$ such that $\P(A) > 0$.  
	\end{defn}
	\begin{rem}\label{rem:Sensitivity}
	A stronger property for $\rho$ than sensitivity is the following: $A \subseteq \{ \rho(-\epsilon {\bf 1}_{A}) > 0\}$ holds for every $A\in \F$ and $\epsilon> 0$.
	\end{rem}
	The connection between Assumption \ref{as:Convexity and NQC- general case} and sensitive risk measures is given in the next proposition. 
	
	\begin{prop}
		Assume that $\rho$ is local and normalized. If $\rho$ is sensitive, then $\rho$ satisfies Assumption \ref{as:Convexity and NQC- general case}.
	\end{prop}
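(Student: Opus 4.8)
The plan is to fix an arbitrary $A\in\G$ with $\P(A)>0$ and exhibit two financial positions at which the functional $X\mapsto\E[\rho(X){\bf 1}_{A}]$ takes different values, which is exactly the non-constancy demanded by Assumption \ref{as:Convexity and NQC- general case}. The natural test points are $X=0$ and $X=-\epsilon{\bf 1}_{A}$ for an arbitrary $\epsilon>0$. Since $\G\subseteq\F$, we have $A\in\F$, so both the sensitivity hypothesis and the fact that ${\bf 1}_{A}\in L^{p}(\F)$ are available. First I would evaluate at $X=0$: because $\rho$ is normalized, $\rho(0)=0$ and hence $\E[\rho(0){\bf 1}_{A}]=0$.

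The crucial step is to localize $\rho(-\epsilon{\bf 1}_{A})$ onto $A$. Writing $-\epsilon{\bf 1}_{A}=(-\epsilon{\bf 1}_{A}){\bf 1}_{A}+0\cdot{\bf 1}_{A^{c}}$ and applying the alternative form of locality from Remark \ref{rem:alternative locality} with $U=0$, together with $\rho(0)=0$, I would obtain
\[
	\rho(-\epsilon{\bf 1}_{A})=\rho(-\epsilon{\bf 1}_{A}){\bf 1}_{A}+\rho(0){\bf 1}_{A^{c}}=\rho(-\epsilon{\bf 1}_{A}){\bf 1}_{A}.
\]
In other words, $\rho(-\epsilon{\bf 1}_{A})$ vanishes on $A^{c}$. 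This identity is what transfers the global positivity statement contained in sensitivity into a contribution that is genuinely supported on $A$ and therefore seen by $\E[\,\cdot\,{\bf 1}_{A}]$.

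Next I would combine monotonicity and sensitivity. Since $-\epsilon{\bf 1}_{A}\leq 0$, monotonicity gives $\rho(-\epsilon{\bf 1}_{A})\geq\rho(0)=0$, so $\rho(-\epsilon{\bf 1}_{A})$ is nonnegative everywhere; by the localization it is supported on $A$; and by sensitivity $\P\{\rho(-\epsilon{\bf 1}_{A})>0\}>0$. A nonnegative random variable that is strictly positive on a set of positive probability has strictly positive integral, hence $\E[\rho(-\epsilon{\bf 1}_{A}){\bf 1}_{A}]=\E[\rho(-\epsilon{\bf 1}_{A})]>0$. Since this value differs from the value $0$ attained at $X=0$, the functional $X\mapsto\E[\rho(X){\bf 1}_{A}]$ is non-constant, which establishes Assumption \ref{as:Convexity and NQC- general case}.

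The main obstacle, and the only place where real care is needed, is the interplay between the three hypotheses in the middle two steps: locality is what guarantees that the positivity furnished by sensitivity lands inside $A$ rather than on $A^{c}$ (where it would be invisible to $\E[\,\cdot\,{\bf 1}_{A}]$), and monotonicity is what rules out cancellation between positive and negative contributions within $A$, so that one concludes strict positivity of the integral rather than merely its potential nonvanishing. Normalization is used twice, both to compute the value at $0$ and to kill the $A^{c}$ term in the locality identity.
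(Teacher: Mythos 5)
Your proof is correct and uses exactly the same ingredients as the paper's: normalization to get $\rho(0)=0$, locality (in the form of Remark \ref{rem:alternative locality}) to localize $\rho(-\epsilon{\bf 1}_A)$ onto $A$, and monotonicity to rule out cancellation. The only difference is that you argue the implication directly while the paper proves the contrapositive (constancy on some $B$ forces $\rho(-\epsilon{\bf 1}_B)=0$, contradicting sensitivity); the two arguments are logically the same.
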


	\begin{proof}
 		We prove the contrapositive. Assume that Assumption \ref{as:Convexity and NQC- general case} does not hold. Then, there exists $B\in \G$ with $\P(B) > 0$ such that $X\mapsto \E[\rho(X){\bf 1}_{B}]$ is constant. In particular, for every $X\in L^{p}(\F)$, we have
 		\[
 			\E[\rho(X){\bf 1}_{B}] = \E[\rho(0){\bf 1}_{B}] = 0.
 		\]
 By the locality of $\rho$, this implies that $\E[\rho(X {\bf 1}_{B})] = 0$ for every $X\in L^p(\F)$. In particular, 
 we have
 		$\E[\rho(-\epsilon {\bf 1}_{B})]  = 0$. 
 	Furthermore, by monotonicity, we have 
 	$\rho(-\epsilon {\bf 1}_{B}) \geq \rho(0) = 0$. Hence,
 		$\rho(-\epsilon {\bf 1}_{B})= 0$. 
 	Therefore, $\rho$ is not sensitive, which completes the proof.
	\end{proof}

	As a suitable semicontinuity notion for naturally quasiconvex conditional risk measures, we recall the following definition: a conditional risk measure $\rho$ is called $\star$-lower semicontinuous if and only if the function $X\mapsto \E[\rho(X)Z^\ast]$ is lower semicontinuous on $L^p(\F)$ for every $Z^\ast\in L^q_+(\G)$.
	
	The next theorem is a generalization of Example \ref{Finitely Generated Example}.
	
	\begin{thm} \label{Thm:NQC is equal to Convexity}
		Suppose that $\F$ is non-trivial with respect to $\P$, that is, there exists $A_0\in\F$ such that $\P(A_0)\in (0,1)$. Assume that $\rho \colon L^{p}(\F) \to L^{p}(\G)$ is a $\star$-lower semicontinuous local conditional risk measure which satisfies Assumption \ref{as:Convexity and NQC- general case}. Then, $\rho$ is convex if and only if it is naturally quasiconvex. 
	\end{thm}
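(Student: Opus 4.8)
The plan is to prove the nontrivial implication, that natural quasiconvexity forces convexity; the reverse is immediate, since a convex $\rho$ satisfies Definition \ref{StarQC defn} with $\mu=\lambda$. So assume $\rho$ is naturally quasiconvex. By Theorem \ref{StarQC Equivalent to NQC}, $\rho$ is $\star$-quasiconvex, i.e. $X\mapsto\E[\rho(X)Z^\ast]$ is quasiconvex for every $Z^\ast\in L^q_+(\G)$. My strategy is to show that $X\mapsto\E[\rho(X)\1_B]$ is convex for every $B\in\G$, and then to deduce almost sure convexity of $\rho$ from this family of scalar inequalities.

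The key reduction is to cast this as a two-factor decomposable sum, exactly as in the finite-dimensional prototype of Example \ref{Finitely Generated Example}. Fix $A\in\G$ with $\P(A)\in(0,1)$, which exists by the nontriviality hypothesis. Writing $L^p_A\coloneqq\{X\1_A:X\in L^p(\F)\}$ and $L^p_{A^c}\coloneqq\{X\1_{A^c}:X\in L^p(\F)\}$, these are closed subspaces (hence topological vector spaces) whose direct sum is $L^p(\F)$. For weights $w_1,w_2>0$ put $Z^\ast\coloneqq w_1\1_A+w_2\1_{A^c}\in L^q_+(\G)$. Using locality through Remark \ref{rem:alternative locality}, for $V\in L^p_A$ and $W\in L^p_{A^c}$ one has $\E[\rho(V+W)\1_A]=\E[\rho(V)\1_A]$ and $\E[\rho(V+W)\1_{A^c}]=\E[\rho(W)\1_{A^c}]$, so that
\[
	\E[\rho(V+W)Z^\ast]=w_1 f_A(V)+w_2 f_{A^c}(W),\qquad f_A(V)\coloneqq\E[\rho(V)\1_A],\ f_{A^c}(W)\coloneqq\E[\rho(W)\1_{A^c}].
\]
The left-hand side is the composition of the quasiconvex map $Z\mapsto\E[\rho(Z)Z^\ast]$ with the linear map $(V,W)\mapsto V+W$, hence quasiconvex; thus $w_1f_A+w_2f_{A^c}$ is an additively decomposable quasiconvex sum. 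Its coordinate functions are real-valued (by H\"older), lower semicontinuous (since $\rho$ is $\star$-lower semicontinuous and restriction to the closed subspaces $L^p_A,L^p_{A^c}$ preserves lower semicontinuity), and non-constant by Assumption \ref{as:Convexity and NQC- general case}, as $X\mapsto\E[\rho(X)\1_A]$ factors through $X\mapsto X\1_A$.

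Now I invoke Theorem \ref{MainResult-2.2} for $w_1f_A$ and $w_2f_{A^c}$, for each fixed pair $w_1,w_2>0$. Suppose, aiming at a contradiction, that $f_A$ is not convex. Since positive scaling preserves convexity, $w_1f_A$ is not convex for any $w_1>0$, so case (i) of Theorem \ref{MainResult-2.2} is excluded and case (ii) must hold for every choice of weights; in particular $f_{A^c}$ is convex and $\tfrac{1}{c(w_1f_A)}+\tfrac{1}{c(w_2f_{A^c})}\le 0$. By Lemma \ref{Homegenity Convexity Index} this reads $\tfrac{w_1}{c(f_A)}+\tfrac{w_2}{c(f_{A^c})}\le 0$ for all $w_1,w_2>0$, where $c(f_A)<0$ by Theorem \ref{convindex_values_infinitedimensional}, while $0\le c(f_{A^c})<+\infty$, the finiteness coming from the non-constancy of $f_{A^c}$ via Theorem \ref{convindex_values_infinitedimensional_constant}. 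If $c(f_{A^c})=0$ the second summand is $+\infty$ and the inequality fails outright; if $c(f_{A^c})>0$, letting $w_2\to\infty$ drives the left-hand side to $+\infty$. Either way we reach a contradiction. Hence $f_A$ is convex, and by symmetry so is $f_{A^c}$; consequently $X\mapsto\E[\rho(X)\1_A]=f_A(X\1_A)$ is convex on $L^p(\F)$, being a convex function composed with a linear map.

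It remains to upgrade these scalar statements to convexity of $\rho$. The previous paragraph gives convexity of $X\mapsto\E[\rho(X)\1_B]$ for every $B\in\G$ with $\P(B)\in(0,1)$; for $\P(B)=0$ the map is identically $0$, and for $\P(B)=1$ it equals $\E[\rho(X)\1_A]+\E[\rho(X)\1_{A^c}]$, a sum of two convex functions. Thus $X\mapsto\E[\rho(X)\1_B]$ is convex for every $B\in\G$. Fixing $X,Y\in L^p(\F)$, $\lambda\in[0,1]$ and setting $D\coloneqq\rho(\lambda X+(1-\lambda)Y)-\lambda\rho(X)-(1-\lambda)\rho(Y)\in L^p(\G)$, convexity yields $\E[D\1_B]\le 0$ for every $B\in\G$. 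Choosing $B=\{D>0\}\in\G$ gives $\E[D^+]=\E[D\1_B]\le 0$, and since $D^+\ge 0$ we conclude $D\le 0$ almost surely, i.e. $\rho$ is convex. The main obstacle is the middle step: recognizing $\E[\rho(\cdot)(w_1\1_A+w_2\1_{A^c})]$ as a genuine two-factor decomposable sum and exploiting the freedom in the weights, via Lemma \ref{Homegenity Convexity Index}, to rule out the degenerate alternative (ii) of Theorem \ref{MainResult-2.2}; everything else is the bookkeeping that makes the lower semicontinuity and non-constancy hypotheses of that theorem available in the present infinite-dimensional, conditional setting.
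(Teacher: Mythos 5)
Your proposal is correct and follows essentially the same route as the paper: locality turns $X\mapsto\E[\rho(X)(w_1{\bf 1}_A+w_2{\bf 1}_{A^c})]$ into a two-factor decomposable quasiconvex sum over the splitting of $L^p(\F)$ along $A$ and $A^c$, and Theorem \ref{MainResult-2.2} together with Lemma \ref{Homegenity Convexity Index} and the freedom in the weights rules out the non-convex alternative (the paper picks the specific weights $(c(f_1),-\tfrac12 c(f_2))$ where you send $w_2\to\infty$, and it invokes the characterization of a.s.\ ordering over all of $\F$ where you test against $B=\{D>0\}\in\G$, but these are cosmetic differences). Note that, exactly as in the paper's own proof, your argument really needs the set $A$ with $\P(A)\in(0,1)$ to lie in $\G$ (locality is only available for $\G$-measurable events), whereas the stated hypothesis only provides such a set in $\F$.
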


	\begin{proof}
		Trivially, convexity implies natural quasiconvexity. Conversely, assume that $\rho$ is naturally quasiconvex. Let $X,Y\in L^p(\F)$ and $\lambda\in[0,1]$. We show that $\rho((\lambda X + (1-\lambda) Y)) \leq \lambda \rho(X) + (1-\lambda) \rho(Y)$. By the well-known characterization of almost sure ordering of random variables, this is equivalent to having
		\begin{equation}\label{eq:NQC}
		\E[\rho((\lambda X + (1-\lambda) Y)) {\bf 1}_{A}] \leq  \E[(\lambda \rho(X) + (1-\lambda) \rho(Y) ){\bf 1}_{A}]
		\end{equation}
		for every $A\in\F$ with $\P(A) > 0$; see, e.g., \citet[Remark 2.3(d)]{Cinlar}. We claim that it is enough to show \eqref{eq:NQC} only for every $A\in\F$ with $\P(A)\in (0,1)$. Indeed, if $B\in\F$ with $\P(B)=1$, then 
		\begin{align*}
		&\E[\rho((\lambda X + (1-\lambda) Y)) {\bf 1}_{B}]=\E[\rho((\lambda X + (1-\lambda) Y)) {\bf 1}_{A_0}]+\E[\rho((\lambda X + (1-\lambda) Y)) {\bf 1}_{A_0^c}],\\
		& \E[(\lambda \rho(X) + (1-\lambda) \rho(Y) ){\bf 1}_{B}]= \E[(\lambda \rho(X) + (1-\lambda) \rho(Y) ){\bf 1}_{A_0}]+\E[(\lambda \rho(X) + (1-\lambda) \rho(Y) ){\bf 1}_{A_0^c}],
		\end{align*}
		where $A_0\in\F$ is an arbitrary event such that $\P(A_0)\in (0,1)$. Hence, having \eqref{eq:NQC} for $A=A_0$ and $A=A_0^c$ implies that \eqref{eq:NQC} holds for $A=B$ as well.
		
		Let $A \in \G$ such that $\P(A) \in (0,1)$. Let us take $Z^\ast=z_1{\bf 1}_A+z_2{\bf 1}_{A^c}$ for some arbitrary $z_1,z_2>0$. Then, by Theorem \ref{StarQC Equivalent to NQC}, the function $X \mapsto \E [\rho(X) Z^\ast]$ is quasiconvex on $L^{p}(\F)$. By the definition of $Z^\ast$ and the locality of $\rho$, we have
		\[
		\E [\rho(X) Z^\ast]=z_1\E[\rho(X){\bf 1}_A]+z_2\E[\rho(X){\bf 1}_{A^c}]=z_1\E[\rho(X{\bf 1}_A){\bf 1}_A]+z_2\E[\rho(X{\bf 1}_{A^c}){\bf 1}_{A^c}]
		\]
		for every $X\in L^p(\F)$. Let us consider the probability space $(A,\F_{A},\P_A)$, where $\F_A$ is the trace of $\F$ on $A$, and $\P_A$ is the conditional probability given $A$ considered on $\F_A$. The probability space $(A^c,\F_{A^c},\P_{A^c})$ is defined similarly. For each $X\in L^p(\F)=L^p(\Omega,\F,\P)$, we may write $X=X_A+X_{A^c}$, where $X_A\in L^p(A,\F_A,\P_A)$, $X_{A^c}\in L^p(A^c,\F_{A^c},\P_{A^c})$ denote the restrictions of $X$ on $A, A^c$, respectively. It follows that, $L^{p}(\Omega,\F,\P) \cong  L^{p}(A,\F_A,\P_A)  \times   L^{p}(A^{c},\F_{A^c},\P_{A^c})$. Moreover, $L^{p}(A,\F_A,\P_A)$, $L^{p}(A^{c},\F_{A^c},\P_{A^c})$ can be naturally imbedded into $L^p(\Omega,\F,\P)$; we rely on these embeddings below without introducing additional notation. Let us define $f_1\colon L^p(A,\F_A,\P_A)\to\R$, $f_{2}\colon L^p(A^c,\F_{A^c},\P_{A^c})\to\R$ by
		\[
		f_1(X_1)\coloneqq \E[\rho(X_1){\bf 1}_A],\ X_1\in L^p(A,\F_A,\P_{A});\quad f_{2}(X_2)\coloneqq \E[\rho(X_2){\bf 1}_{A^c}],\ X_2\in L^p(A^c,\F_{A^c},\P_{A^c}).
		\]
		Hence, $\E[\rho(X)Z^\ast]=z_1f_1(X_{A})+z_2f_{2}(X_{A^c})$ for every $X\in L^p(\Omega,\F,\P)$. It follows that the decomposable sum $(X_1,X_2)\mapsto z_1f_1(X_1)+z_2 f_{2}(X_2)$ is quasiconvex on $L^{p}(A,\F_A,\P_A)  \times   L^{p}(A^{c},\F_{A^c},\P_{A^c})$.
		
		Note that $f_{1},f_{2}$ are non-constant functions by Assumption \ref{as:Convexity and NQC- general case}. In particular, $c(f_1)<+\infty$, $c(f_2)<+\infty$. These functions are also lower semicontinuous since $\rho$ is $\star$-lower semicontinuous. We consider the following four cases:
		\begin{enumerate}[(a)]
			\item $c(f_{1}) \geq 0$, $c(f_{2}) \geq 0$.
			\item $c(f_{1}) \geq  0$, $c(f_{2}) < 0$. Then, it follows from Theorem \ref{convindex_values_infinitedimensional} that $f_{1}$ is convex and $f_{2}$ is not convex. In view of Lemma \ref{Homegenity Convexity Index} and Theorem \ref{MainResult-2.2}, we have
			\begin{equation}\label{eq:NQC is equal to Convexity-4}
				\frac{z_{1}}{c(f_{1})} + \frac{z_{2}}{c(f_{2})}=\frac{1}{c(z_{1}f_{1})} + \frac{1}{c(z_{2}f_{2})} \leq 0.
			\end{equation}
			Let us set $(z_{1},z_{2}) = (c(f_{1}),-\frac{1}{2}c(f_{2}))$. Then, $\frac{1}{c(z_{1}f_{1})} + \frac{1}{c(z_{2}f_{2})} = \frac{1}{2}$,	which contradicts \eqref{eq:NQC is equal to Convexity-4}. Hence, this case is eliminated.
			\item $c(f_{1}) < 0$, $c(f_{2}) \geq 0$. This case is symmetric with case (b), hence eliminated. 
			\item $c(f_{1}) < 0$, $c(f_{2}) < 0$. Then, neither $f_{1}$ nor $f_{2}$ is convex by Theorem \ref{convindex_values_infinitedimensional}. On the other hand, by taking $z_1=z_2=1$, the function $(X_1,X_2)\mapsto f_1(X_1)+f_2(X_2)$ is quasiconvex. Then, by Theorem \ref{MainResult-2.2}, at least one of $f_{1}$ or $f_{2}$ is convex, a contradiction. Hence, this case is eliminated.
		\end{enumerate}
		Summing up, we are left with the case (a), that is, $f_1, f_2$ are convex. In particular, $X\mapsto f_1(X_A)=\E[\rho(X_A){\bf 1}_A]=\E[\rho(X){\bf 1}_A]$ is convex so that
		\[
		\E[\rho((\lambda X + (1-\lambda) Y)) {\bf 1}_{A}] \leq  \lambda\E[\rho(X){\bf 1}_A]+(1-\lambda)\E[\rho(Y){\bf 1}_A]= \E[(\lambda \rho(X) + (1-\lambda) \rho(Y) ){\bf 1}_{A}].
		\]
		Hence, \eqref{eq:NQC} follows, which completes the proof.
	\end{proof}

	
	\subsection{Relationship between convexity and natural quasiconvexity on $L^{2}$}\label{subsec:L2}
	
	We turn our attention to the case $\rho \colon L^{2}(\F) \to L^{2}(\G)$. As a preparation, let us review some basic properties of $L^{2}(\F)$ and $L^{2}(\G)$.
	
	The sets $L^{2}(\F)$ and $L^{2}(\G)$ are separable Hilbert spaces with respect to the inner product 
	\begin{equation} \label{defn:inner product}
		\langle X , Y \rangle = \E[X Y].
	\end{equation}
	It is a well-known result that every separable Hilbert space has a countable orthonormal basis. Thus, one can find countable orthonormal bases for $L^{2}(\F)$ and $L^{2}(\G)$. Moreover, $L^{2}(\G)$ is a closed subspace of $L^{2}(\F)$ and we have
	\[
		L^{2}(\F) = L^{2}(\G) \oplus  L^{2}(\G)^{\perp},
	\]
	where $L^{2}(\G)^{\perp} = \{ X\in L^{2}(\F) \colon \langle X, Y \rangle = 0\text{ for every }Y\in L^{2}(\G) \}$ and $\oplus$ denotes internal direct sum for subspaces. We also define, for every $X\in L^{2}(\F)$, the orthogonal complement of $X$ by
	\[
		X^{\perp} \coloneqq X - \E[X|\G].
	\]
	Let $\{ e^{i} \colon i\in I\}$ be a countable orthonormal basis of $L^{2}(\G)$. It is orthonormal in the sense that
	\begin{enumerate}[(i)]
		\item $\norm{e^{i}} = 1$ for each $i\in I$,
		\item $\langle e^{i}, e^{j} \rangle = 0$ for each $i,j\in I$ such that $i \neq j$,
	\end{enumerate}
	where $\norm{\cdot}$ is the $L^{2}$-norm induced by the inner product in \eqref{defn:inner product}. Furthermore, for every $Y\in L^{2}(\G)$, we have
	\[
		Y = \sum_{i\in I} \langle Y, e^{i} \rangle e^{i}.
	\]
	
	A natural question is whether or not there is an extension of $\{ e^{i} \colon i\in I\}$ to an orthonormal basis of $L^{2}(\F)$. Luckily, one can find such an extension and a proof of this result can be found in \citet[Theorem 6.29]{Hunter}. Since $L^2(\F)$ is separable, such basis is necessarily countable. In light of these arguments, there exists $I^{\prime}\supseteq I$ such that  $\{ e^{i} \colon i\in I^{\prime} \}$ is a countable orthonormal basis of $L^{2}(\F)$.
	
	Let us come back to the discussion of this section. We will work under the following assumption. 
	\begin{assumption}\label{ass:structure}\rm
	     The reference probability space $\Omega$ is partitioned into (at most) countably many sets, i.e.,
	     $\Omega=\bigcup_{i\in \bbN} \Omega_i $ for some $\bbN\subseteq\N$ and disjoint sets $\Omega_i\in\F$, $i\in\bbN$,
	     and there exist closed subspaces $H_i, i\in \bbN$, of $L^2(\cG)$ such that 
	     \[
	     H_i \subseteq \Lambda_i:= \{ X \in L^2(\cF): X{\bf 1}_{\Omega_i^c}=0\},\quad i \in \bbN,
	     \]
	     and
	     \[
	        L^2(\cG)= \bigoplus_{i\in \bbN} H_i\coloneqq \cb{\sum_{i\in\bbN}X_i\colon \sum_{i\in\bbN}\norm{X_i}^2<+\infty,\ \forall i\in\bbN\colon X_i\in H_i}.
	     \]
	\end{assumption}
	

Notice that Assumption \ref{ass:structure} also ensures that $L^2(\cF)=\bigoplus_{i\in \bbN} \Lambda_i$. 
Moreover, under this assumption, the spaces $\Lambda_i$, $i\in\bbN$, are orthogonal. In fact, for every $i,j\in \bbN$ with $i\neq j$, and $X_i\in \Lambda_i,X_j \in \Lambda_j$ we have
$\langle X_i,X_j \rangle=\E[ X_i X_j] = 0$ since $X_i{\bf 1}_{\Omega_j}=X_j{\bf 1}_{\Omega_i}=0$.


Under Assumption \ref{ass:structure}, for each $i\in\bbN$, the closed subspace $H_i$ and its orthogonal complement $H_i^{\bot,\Lambda_i}$ in $\Lambda_i$ are separable Hilbert spaces; hence each admits a (at most) countable orthonormal basis. In the following, for each $i \in \bbN$, we denote by $(e_k^i)_{k\in {\bbN_i}}$ a countable orthonormal basis for $H_i$, and by $(\beta_k^i)_{k\in \bbM_i}$ a countable orthonormal basis for $H_i^{\bot,\Lambda_i}$, where $\bbN_i,\bbM_i\subseteq \bN$. Hence, 
 \begin{itemize}
 \item[-] for each $i\in \bbN$, $\Theta_i:=((e_k^i)_{k\in \bbN_i}, (\beta_k^i)_{k\in \bbM_i})$ is an orthonormal basis for $\Lambda_i$.
 
  \item[-] $(e_k^i)_{k\in \bbN_i,\, i\in \bbN}$ is an orthonormal basis for $L^2(\cG)$ while $(\beta_k^i)_{k\in \bbM_i,\, i\in \bbN}$ is an orthonormal basis for $L^2(\cG)^\bot$.
    
    \item[-] $\Theta:=(\Theta_i)_{i\in\bbN}$ is an orthonormal basis for $L^2(\cF)$. 
\end{itemize}

	Before proceeding further, let us give some examples where Assumption \ref{ass:structure} is satisfied.
	\begin{example}\label{ex:finite-dim}
	     Assume that $\Omega=\left\{ \omega_1,\dots,\omega_{10} \right\}$ with the power set
$\cF=2^\Omega$, and $\mathbb{P}$ is the (discrete) uniform distribution on $\Omega$. Moreover, assume that $\cG= \sigma( A_1,A_2,A_3)$, where
\[
A_1:=\left\{\omega_1,\omega_2,\omega_3,\omega_4 \right\}, \qquad 
A_2:=\left\{\omega_5,\omega_6,\omega_7 \right\}, \qquad
A_3:=\left\{\omega_8,\omega_9,\omega_{10} \right\}.
\]
Notice that $A_1,A_2,A_3$ form a partition of $\Omega$.
For $i\in\{1,2,3\}$, define
\[
H_i:=
 \spn\cb{ \frac{{\bf 1}_{A_i}}{\|{\bf 1}_{A_i}\|}},\qquad \Lambda_i:= \cb{ X \in L^2(\cF)\colon  X{\bf 1}_{A_i^c} =0}.
\]
Note that $\Lambda_1\cong \R^4$, $\Lambda_2\cong \R^3$, $\Lambda_3\cong \R^3$.
For simplicity, as in Example \ref{Finitely Generated Example}, we identify an element in 
$L^2(\cF)$ with a vector in $\R^{10}$, e.g., we write
$
  {\bf 1}_{A_1} = ( 1, 1,1, 1,0,0,0,0,0,0,0).
$
Clearly, $H_i\subseteq \Lambda_i$ for each $i\in\{1,2,3\}$; moreover, the spaces $H_1, H_2, H_3$ are orthogonal one to each other, the same holds for $\Lambda_1,\Lambda_2,\Lambda_3$. Finally, we have
$L^2(\cG)= H_1 \oplus H_2 \oplus H_3$ and
$L^2(\cF)=\Lambda_1 \oplus \Lambda_2 \oplus \Lambda_3.
$

We introduce the bases for $H_i,H_i^{\bot,\Lambda_i}, i\in\{1,2,3\}$ as follows. For each $i\in\{1,2,3\}$, we set $e^i = \frac{{\bf 1}_{A_i}}{\|{\bf 1}_{A_i}\|}$. 
For $H_1^{\bot,\Lambda_1}\cong\R^3$, we set 
  \begin{align*}
 \beta_1^1&= \frac{\tilde{\beta}_1^1}{\|\tilde{\beta}_1^1\|} \ \ \text{with} \ \ \tilde{\beta}_1^1 = (1,1,-1,-1,0,0,0,0,0,0), \qquad \\
 \beta_2^1&= \frac{\tilde{\beta}_2^1}{\|\tilde{\beta}_2^1\|}  \ \ \text{with} \ \ \tilde\beta_2^1=(1,-1,1,-1,0,0,0,0,0,0),\\
\beta_3^1&= \frac{\tilde{\beta}_3^1}{\|\tilde{\beta}_3^1\|} \ \ \text{with} \ \ \tilde{ \beta}_3^1=(-1,0,0,-1,0,0,0,0,0,0).
  \end{align*}
 For $H_2^{\bot,\Lambda_2}\cong\R^2$, we set
   \begin{align*}
  \beta_1^2&= \frac{\tilde{\beta}_1^2}{\|\tilde{\beta}_1^2\|} \ \ \text{with} \ \  \tilde\beta_1^2= \left(0,0,0,0,-\frac12,-\frac12,1,0,0,0\right) \\
 \beta_2^2&= \frac{\tilde{\beta}_2^2}{\|\tilde{\beta}_2^2\|}  \ \ \text{with} \ \  \tilde\beta_2^2=(0,0,0,0,-1,1,0,0,0,0).
  \end{align*}
  Similarly, for $H_3^{\bot,\Lambda_3}\cong\R^2$, we set
  \begin{align*}
 \beta_1^3&= \frac{\tilde{\beta}_1^3}{\|\tilde{\beta}_1^3\|} \ \ \text{with} \ \  
 \tilde\beta_1^3= \left(0,0,0,0,0,0,0-\frac12,-\frac12,1\right) \\
 \beta_2^3&= \frac{\tilde{\beta}_2^3}{\|\tilde{\beta}_2^2\|} \ \ \text{with} \ \  \tilde\beta_2^3=(0,0,0,0,0,0,0,-1,1,0) .
  \end{align*}
	\end{example}
	
\begin{example}\label{ex:infinit-dim} Assume that $(\Omega,\F,\P)$ is a probability space and $\mathcal{G}=\sigma(\Omega_i\colon i\in\bbN)$. For each $i\in \bN$, let $H_i:= \{ X \in L^2(\cF)\colon X{\bf 1}_{\Omega_i}=a{\bf 1}_{\Omega_i}\text{ for some }a\in\R, X{\bf 1}_{\Omega_i^c}  =0 \}$
   and
$   \Lambda_i:= \{ X \in L^2(\cF): X{\bf 1}_{\Omega_i^c}  =0 \}$; clearly
\medskip 
$H_i,\Lambda_i$ are closed subspaces in 
$L^2(\cF)$ and $H_i\subseteq \Lambda_i$. Hence, they are Hilbert spaces with respect to the norm induced by $L^2(\cF)$. Moreover, we have
\[
    L^2(\cG)= \bigoplus_{i\in\bbN} H_i, \qquad L^2(\cF)=\bigoplus_{i\in\bbN} \Lambda_i.
\]
We emphasize that, similar to Example \ref{ex:finite-dim}, for each $i\in \bN$, $e^i = \frac{{\bf 1}_{\Omega_i}}{\|{\bf 1}_{\Omega_i}\|}$ forms a basis for $H_i$.
\end{example}

\begin{example} \label{ex:borel01}
Assume that $\Omega=[0,1]$ with the Borel $\sigma$-algebra $\F=\B([0,1])$ and the Lebesgue measure $\P=\Leb$. Let $\cG= \B([0,\frac12]) \vee \{(\frac12,1] \}$, which consists of the Borel subsets of $[0,\frac12]$ and their unions with $(\frac12,1]$. Consider $\Omega_1=[0,\frac12]$, $\Omega_2=(\frac12,1]$.
Notice that each $\G$-measurable function is constant on 
$\Omega_2$. Let 
\begin{align*}
H_1&=\{ X\in L^2(\cF): X{\bf 1}_{(\frac12,1]}=0\},\\
H_2&=\{ X\in L^2(\cF): X{\bf 1}_{[0,\frac12]}=0, X{\bf 1}_{(\frac12,1]}=a{\bf 1}_{(\frac12,1]}\text{ for some }a\in\R\}.
\end{align*}
We stress that $H_1$ is infinite-dimensional while $H_2$ has finite dimension.
Moreover, set
\[
\Lambda_1=H_1,\qquad \Lambda_2=\{ X\in L^2(\cF): X{\bf 1}_{[0,\frac12]}=0 \}.
\]
Then a basis for $H_1$ is given by
\[
     e^1_{2k}(x)= 2\sin(4\pi kx){\bf 1}_{[0,\frac12]}(x), \qquad 
	 e^1_{2k-1}(x)=2\cos(4\pi kx){\bf 1}_{[0,\frac12]}(x), \qquad k\in \N,\ x\in[0,1];
\]
while a basis for $H_2$ is given by the unit vector $e^2=\sqrt{2}\cdot {\bf 1}_{(\frac12,1]}$. 
 
Note that $H_1^{\bot,\Lambda_1}=\{0\}$. A basis for $H_2^{\bot,\Lambda_2}$ is given by
 \[
      \beta^2_{2k}(x)= 2\sin(4\pi kx){\bf 1}_{(\frac12,1]}(x), \qquad
     \beta^2_{2k-1}(x)=2\cos(4\pi kx){\bf 1}_{(\frac12,1]}(x), \qquad k\in \N,\ x\in[0,1].
 \]
Under this construction, we have $H_1 =\Lambda_1$, $H_2\subseteq \Lambda_2$ as well as $L^2(\cG)=H_1 \oplus H_2$, $L^2(\cF)=\Lambda_1 \oplus \Lambda_2$.
\end{example}

	We continue by introducing a new locality property that is defined with respect to a given basis. This property precisely says that, when calculating the inner product of $\rho(X)$ with an element of the basis, one can replace $X$ with the sum of its projection on that basis and its orthogonal complement.
	
	\begin{defn}\label{defn:locality w.r.t. basis}
		$\rho$ is called local with respect to the orthonormal basis $(e_k^i)_{k\in \bbN_i,\,i\in \bbN}$ for $L^2(\G)$ if 
		\[
			\langle \rho(X),e_{k_0}^i \rangle =\bigg \langle \rho\Bigg(\sum_{k\in \bbN_i} \langle X,e_{k}^i\rangle e_k^i+ X^{\bot,\Lambda_i}\Bigg),e_{k_0}^i\bigg\rangle
		\]
		for each $X\in L^2(\F)$, $k_0\in\bbN_i$, $i\in\bbN$, where $X^{\bot,\Lambda_i}$ denotes the projection of $X$ onto $H_i^{\bot,\Lambda_i}$, $i\in\bbN$.
	\end{defn}
	
	In the following, we are concerned with the connections between the classical notion of locality in Definition \ref{defn:properties risk measures} and the new notion of locality with respect to a given basis. As a first result, we prove that the former implies the latter. 
	
	\begin{prop}\label{prop:implication}
	Under Assumption \ref{ass:structure}, suppose that $\rho$ is local in the sense of Definition \ref{defn:properties risk measures}. Then, it is local with respect to $(e_k^i)_{k\in \bbN_i,\,i\in \bbN}$.
	\end{prop}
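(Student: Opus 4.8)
The plan is to show that, under Assumption~\ref{ass:structure}, the argument of $\rho$ appearing on the right-hand side of Definition~\ref{defn:locality w.r.t. basis} is exactly the truncation $X{\bf 1}_{\Omega_i}$, and then to collapse the inner product using ordinary locality with the event $A=\Omega_i$. Concretely, since $(e_k^i)_{k\in\bbN_i}$ is an orthonormal basis of $H_i$ and $(\beta_k^i)_{k\in\bbM_i}$ one of $H_i^{\bot,\Lambda_i}$, the term $\sum_{k\in\bbN_i}\langle X,e_k^i\rangle e_k^i$ is the orthogonal projection of $X$ onto $H_i$, while $X^{\bot,\Lambda_i}$ is its projection onto $H_i^{\bot,\Lambda_i}$. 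As $\Lambda_i=H_i\oplus H_i^{\bot,\Lambda_i}$ orthogonally, their sum is the orthogonal projection $P_{\Lambda_i}X$. Because $\Lambda_i=\{Z\in L^2(\F)\colon Z{\bf 1}_{\Omega_i^c}=0\}$, one checks directly that $P_{\Lambda_i}X=X{\bf 1}_{\Omega_i}$: indeed $X=X{\bf 1}_{\Omega_i}+X{\bf 1}_{\Omega_i^c}$ with $X{\bf 1}_{\Omega_i}\in\Lambda_i$, while the complementary piece $X{\bf 1}_{\Omega_i^c}$ is orthogonal to every function supported on $\Omega_i$. Thus the right-hand side is $\langle\rho(X{\bf 1}_{\Omega_i}),e_{k_0}^i\rangle$.

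The step I expect to be the main obstacle is justifying that ordinary locality may be invoked with $A=\Omega_i$, since Definition~\ref{defn:properties risk measures}(ii) only allows events in $\G$, whereas Assumption~\ref{ass:structure} a priori only places $\Omega_i$ in $\F$. I would resolve this by first proving that $\Omega_i\in\G$ up to a null set. For this, observe that multiplication by ${\bf 1}_{\Omega_i}$ is a contraction on $L^2(\F)$, so it commutes with $L^2$-convergent series. Hence for any $Y\in L^2(\G)=\bigoplus_{j\in\bbN}H_j$ with decomposition $Y=\sum_{j}Y_j$, $Y_j\in H_j\subseteq\Lambda_j$, the disjointness of the $\Omega_j$ gives $Y{\bf 1}_{\Omega_i}=\sum_{j}Y_j{\bf 1}_{\Omega_i}=Y_i\in H_i\subseteq L^2(\G)$. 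Applying this to the constant $Y\equiv 1\in L^2(\G)$ yields ${\bf 1}_{\Omega_i}=1\cdot{\bf 1}_{\Omega_i}\in L^2(\G)$, so that $\Omega_i\in\G$ (the degenerate case $\P(\Omega_i)=0$ is vacuous, since then no unit vector $e_{k_0}^i$ supported on $\Omega_i$ exists and $\bbN_i=\emptyset$).

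With $\Omega_i\in\G$ in hand, the conclusion follows by a short computation. Since $e_{k_0}^i\in H_i\subseteq\Lambda_i$ is supported on $\Omega_i$, we have $e_{k_0}^i={\bf 1}_{\Omega_i}e_{k_0}^i$, and therefore
\[
\langle\rho(X{\bf 1}_{\Omega_i}),e_{k_0}^i\rangle=\E[\rho(X{\bf 1}_{\Omega_i}){\bf 1}_{\Omega_i}e_{k_0}^i]=\E[\rho(X){\bf 1}_{\Omega_i}e_{k_0}^i]=\E[\rho(X)e_{k_0}^i]=\langle\rho(X),e_{k_0}^i\rangle,
\]
where the second equality uses locality $\rho(X{\bf 1}_{\Omega_i}){\bf 1}_{\Omega_i}=\rho(X){\bf 1}_{\Omega_i}$ (Definition~\ref{defn:properties risk measures}(ii) with $A=\Omega_i$) and the third again uses that $e_{k_0}^i$ is supported on $\Omega_i$. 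Combined with the identification of the right-hand side as $\langle\rho(X{\bf 1}_{\Omega_i}),e_{k_0}^i\rangle$ from the first paragraph, this is precisely the defining identity of locality with respect to $(e_k^i)_{k\in\bbN_i,\,i\in\bbN}$, completing the proof.
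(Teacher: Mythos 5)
Your proof is correct and follows essentially the same route as the paper: identify $\sum_{k\in\bbN_i}\langle X,e_k^i\rangle e_k^i+X^{\bot,\Lambda_i}$ with the truncation $X{\bf 1}_{\Omega_i}$, then collapse the inner product using ordinary locality with $A=\Omega_i$ and the fact that $e_{k_0}^i$ is supported on $\Omega_i$. The only difference is that you explicitly justify $\Omega_i\in\G$ (via ${\bf 1}_{\Omega_i}=1\cdot{\bf 1}_{\Omega_i}\in H_i\subseteq L^2(\G)$), a step the paper's proof asserts without comment even though Assumption \ref{ass:structure} only places $\Omega_i$ in $\F$; this is a worthwhile addition, not a deviation.
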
 
	
	\begin{proof}
	    Let $X\in L^2(\F)$. For fixed $i\in \bbN$, let us denote by $X^i$ the projection of $X$ on $\Lambda_i$, i.e., $X^i=X {\bf 1}_{\Omega_i}$. As an element of $\Lambda_i$, the random variable $X^i$ can be written as
	    \begin{equation}\label{eq:Xi}
	      X^i= \sum_{k\in \bbN_i} \langle X,e_k^i\rangle e^i_k + X^{\bot,\Lambda_i}.
	    \end{equation}
	    Moreover, since $\Omega_i \in \G$, locality implies that $\rho(X){\bf 1}_{\Omega_i}= \rho(X{\bf 1}_{\Omega_i}){\bf 1}_{\Omega_i}$;
	    since $(e_k^i)_{k\in \bbN_i}$ is a sequence in $\Lambda_i$, we also have $e_k^i=e_k^i{\bf 1}_{\Omega_i}$ for every $k\in \bbN_i$. 
	    Hence, for fixed $k_0\in \bbN_i$, we have
	    \[
	       \langle \rho(X), e^i_{k_0} \rangle =
	        \langle \rho(X),{\bf 1}_{\Omega_i} e^i_{k_0} \rangle
	        =  \langle \rho(X){\bf 1}_{\Omega_i}, e^i_{k_0} \rangle 
	        = \langle \rho(X{\bf 1}_{\Omega_i}), e^i_{k_0} \rangle 
	        = \langle \rho(X^i), e^i_{k_0} \rangle.
	    \]
	    By \eqref{eq:Xi}, it follows that $\rho$ is local with respect to$(e_k^i)_{k\in \bbN_i,\,i\in \bbN}$. 
	\end{proof}
 
 We now examine a situation where the two notions of locality are equivalent.
 \begin{prop}\label{prop:equivalence}
    In addition to Assumption \ref{ass:structure}, suppose that 
$\mathcal{G}= \sigma(\Omega_i,i\in \bbN)$.
   Then $\rho$ is local with respect to $(e_k^i)_{k\in \bbN_i,\,i\in \bbN}$ if and only if it is local in the sense of Definition \ref{defn:properties risk measures}.
 \end{prop}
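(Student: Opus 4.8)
The plan is to prove only the nontrivial implication. The converse---that locality in the sense of Definition \ref{defn:properties risk measures} implies locality with respect to the basis $(e_k^i)_{k\in\bbN_i,\,i\in\bbN}$---is exactly Proposition \ref{prop:implication} and holds already under Assumption \ref{ass:structure}. So I would assume that $\rho$ is local with respect to the basis and deduce the classical locality $\rho(X)\mathbf{1}_A=\rho(X\mathbf{1}_A)\mathbf{1}_A$ for every $A\in\G$.

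First I would record two structural consequences of the extra hypothesis $\G=\sigma(\Omega_i,i\in\bbN)$. On one hand, a $\G$-measurable function is constant on each $\Omega_i$, so any function lying in $H_i\subseteq L^2(\G)\cap\Lambda_i$ must be a scalar multiple of $\mathbf{1}_{\Omega_i}$; hence each $H_i$ is one-dimensional with $e^i=\mathbf{1}_{\Omega_i}/\norm{\mathbf{1}_{\Omega_i}}$ and $\bbN_i$ a singleton. On the other hand, every $A\in\G$ is a countable union $A=\bigcup_{i\in J}\Omega_i$ of partition cells for some $J\subseteq\bbN$.

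The key simplification is to recognize the argument of $\rho$ appearing in Definition \ref{defn:locality w.r.t. basis}. Since $(e_k^i)_{k\in\bbN_i}$ is an orthonormal basis of $H_i$ and $X^{\bot,\Lambda_i}$ is the projection of $X$ onto $H_i^{\bot,\Lambda_i}$, the quantity $\sum_{k\in\bbN_i}\langle X,e_k^i\rangle e_k^i+X^{\bot,\Lambda_i}$ is precisely the orthogonal projection of $X$ onto $\Lambda_i=H_i\oplus H_i^{\bot,\Lambda_i}$, which equals $X\mathbf{1}_{\Omega_i}$. Thus the locality-with-respect-to-basis hypothesis reads $\langle\rho(X),e^i\rangle=\langle\rho(X\mathbf{1}_{\Omega_i}),e^i\rangle$ for every $X$ and $i$. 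Next, because $\rho(X)\in L^2(\G)=\bigoplus_j H_j$ and each $H_j\subseteq\Lambda_j$ is supported on $\Omega_j$, the restriction $\rho(X)\mathbf{1}_{\Omega_i}$ lies in $H_i$, hence equals $\langle\rho(X),e^i\rangle e^i$; the same holds for $\rho(X\mathbf{1}_{\Omega_i})\mathbf{1}_{\Omega_i}$. Matching coefficients through the rewritten hypothesis yields the cell-wise locality $\rho(X)\mathbf{1}_{\Omega_i}=\rho(X\mathbf{1}_{\Omega_i})\mathbf{1}_{\Omega_i}$ for every $i\in\bbN$.

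Finally I would assemble full locality from the cell-wise identity. Writing $A=\bigcup_{i\in J}\Omega_i$ and using disjointness, $\rho(X)\mathbf{1}_A=\sum_{i\in J}\rho(X)\mathbf{1}_{\Omega_i}$ and $\rho(X\mathbf{1}_A)\mathbf{1}_A=\sum_{i\in J}\rho(X\mathbf{1}_A)\mathbf{1}_{\Omega_i}$. For $i\in J$ we have $\mathbf{1}_A\mathbf{1}_{\Omega_i}=\mathbf{1}_{\Omega_i}$, so applying the cell-wise identity to $X\mathbf{1}_A$ gives $\rho(X\mathbf{1}_A)\mathbf{1}_{\Omega_i}=\rho(X\mathbf{1}_{\Omega_i})\mathbf{1}_{\Omega_i}=\rho(X)\mathbf{1}_{\Omega_i}$, so the two sums coincide termwise. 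I expect the only genuinely delicate point to be the identification $\sum_{k}\langle X,e_k^i\rangle e_k^i+X^{\bot,\Lambda_i}=X\mathbf{1}_{\Omega_i}$ together with placing $\rho(X)\mathbf{1}_{\Omega_i}$ in $H_i$; everything else is bookkeeping with the partition. I would also flag clearly that it is precisely the hypothesis $\G=\sigma(\Omega_i)$---securing both the one-dimensionality of $H_i$ and the fact that $\G$ is generated by the cells---that powers this converse, which is the direction that genuinely fails without it.
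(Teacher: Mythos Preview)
Your proposal is correct and follows essentially the same route as the paper: identify $\sum_{k\in\bbN_i}\langle X,e_k^i\rangle e_k^i+X^{\bot,\Lambda_i}$ with $X\mathbf{1}_{\Omega_i}$, use the basis-locality hypothesis to match Fourier coefficients of $\rho(X)$ and $\rho(X\mathbf{1}_{\Omega_i})$ along $H_i$, and conclude cell-wise locality $\rho(X)\mathbf{1}_{\Omega_i}=\rho(X\mathbf{1}_{\Omega_i})\mathbf{1}_{\Omega_i}$, from which full locality on $\G=\sigma(\Omega_i,i\in\bbN)$ follows. Your extra observation that each $H_i$ is one-dimensional under this hypothesis is correct and harmless, though the paper's argument does not isolate it explicitly and works directly with the basis $(e_k^i)_{k\in\bbN_i}$; likewise your explicit assembly over $A=\bigcup_{i\in J}\Omega_i$ spells out what the paper compresses into ``thanks to the structure of~$\G$''.
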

 
 \begin{proof}
 The backward implication is already given by Proposition \ref{prop:implication}. 
We prove the forward implication under the additional assumption that 
$\G=\sigma(\Omega_i:i\in \bbN)$. Suppose that $\rho$ is local with respect to $(e_k^i)_{k\in \bbN_i,\,i\in \bbN}$. Let $X\in L^2(\F)$ and $i\in\bbN$. To show that $\rho$ is local, it is enough to verify that $\rho(X){\bf 1}_{\Omega_i} = \rho(X {\bf 1}_{\Omega_i}){\bf 1}_{\Omega_i}$ thanks to the structure of $\G$. Note that $e_k^j{\bf 1}_{\Omega_i}=0$ for each $j\in\bbN\setminus\{i\}$, $k\in\bbN_j$, whereas $e_k^i{\bf 1}_{\Omega_i}=e_k^i$ for each $k\in\bbN_i$. Moreover, we may write
\[
\rho(X)= \sum_{j\in \bbN} \sum_{k\in \bbN_j} \langle \rho(X), e_k^j\rangle e_k^j,\quad \rho(X{\bf1}_{\Omega_i})= \sum_{j\in \bbN} \sum_{k\in \bbN_j} \langle \rho(X{\bf1}_{\Omega_i}), e_k^j\rangle e_k^j,
\]
Note that $X {\bf 1}_{\Omega_i}= \sum_{k\in \bbN_i} \langle X,e_k^i\rangle e^i_k + X^{\bot,\Lambda_i}$. Hence, by the locality of $\rho$ with respect to the basis,
\[
\rho(X){\bf 1}_{\Omega_i}=\sum_{k\in \bbN_i} \langle \rho(X), e_k^i\rangle e_k^i=\sum_{k\in \bbN_i} \langle \rho(X{\bf 1}_{\Omega_i}), e_k^i\rangle e_k^i = \rho(X{\bf 1}_{\Omega_i}){\bf 1}_{\Omega_i}.
\]
Therefore, $\rho$ is local in the sense of Definition \ref{defn:properties risk measures}.
 \end{proof}
	\begin{rem}
The assumptions of Proposition \ref{prop:equivalence} are satisfied by Examples \ref{ex:finite-dim} and \ref{ex:infinit-dim}.
\end{rem}
	
	Taking into account the results established in Propositions \ref{prop:implication} and \ref{prop:equivalence}, a natural question is if we can construct a risk measure $\rho$ that is local with respect to the given basis but not in the sense of Definition \ref{defn:properties risk measures}. The following example investigates this point.
	
	\begin{example}
	    Consider the setting of Example \ref{ex:finite-dim}, but with
	    \[
	    H_1:= \{ X\in L^2(\F): X{\bf 1}_{A_1} =0, X \text{ is } \sigma(A^1_1,A^2_1)\text{-measurable}\},
	    \]
	    where 
	    $A_1^1=\left\{\omega_1,\omega_2 \right\}$ and 
	    $A_1^2=\left\{\omega_3,\omega_4\right\}$. 
	    Assume further that $\G:= \sigma(A_1^1,A_1^2,A_2,A_3)$.
	    Then a basis for 
	    $H_1$ is given by
	    \[
	       e_1^1 = \frac{{\bf 1}_{A_1^1}}{\|{\bf 1}_{A_1^1}\|}, \quad   e_2^1 = \frac{{\bf 1}_{A_1^2}}{\|{\bf 1}_{A_1^2}\|}.
	    \]
	    Now set $\G^\prime:= \sigma(A_1,A_2,A_3)$ and $\rho(X)=\E[X|\G^\prime]$, $X\in  L^2(\F)$. Then, it turns out that 
	    \[
	      \rho(X){\bf 1}_{A_i} =  \rho(X{\bf 1}_{A_i}){\bf 1}_{A_i}, \ \ \ i\in\{1,2,3\},
	    \]
	    but 
	     \[
	      \rho(X){\bf 1}_{A_1^j} \neq \rho(X{\bf 1}_{A_1^j}){\bf 1}_{A_1^j}, \ \ \ j\in\{1,2\}.
	    \]
	    Hence, $\rho$ is local with respect to the basis $(e^1_1,e^1_2,e^2,e^3)$ but not local in the sense of Definition \ref{defn:properties risk measures}.
	\end{example}
	The main result of this subsection is Theorem \ref{thm:NQC equivalent to Conv locality wrt basis}, given below. It formulates the relationship between natural quasiconvexity and convexity for risk measures on $L^{2}(\F)$ that are local with respect to a decomposable basis. Before we get into the result, we need to introduce an assumption and some notations.
	
	  \begin{assumption}\label{as:3.4}
		In addition to Assumption \ref{ass:structure}, suppose that the following properties hold:
		\begin{enumerate}
		\item [a)] For each $i\in\bbN$, the subspace $H_i$ is $1$-dimensional, i.e., $|\bbN_i|=1$. In particular, $H_i= \spn\{ e^i\}$ for some random variable $e^i$. 
		\item [b)] For each $i\in\bbN$, the function $X\mapsto \langle \rho(X),e^i \rangle$ on $L^2(\F)$ is non-constant.
		\end{enumerate}
	\end{assumption}
	
		The following is an example of a risk measure for which Assumption \ref{as:3.4} holds.
	\begin{example}
		Under the setting of Assumption \ref{ass:structure}, assume that
		$\G = \sigma(\Omega_{i} \colon i\in \bbN)$, where $\P(\Omega_{i}) > 0$ for each $i\in\bbN$. Moreover, we set $H_i:=\{ X\in\Lambda_i: X \text{ is $\G$-measurable}\}$, which is a $1$-dimensional subspace.  Define $e^{i} = \frac{{\bf 1}_{\Omega_{i}}}{\sqrt{\P(\Omega_{i})}}$ for each $i\in \bbN$. Let us consider the certainty equivalent defined by
		\[
			\rho(X) = \ell^{-1}\E[\ell(-X)|\G],\quad X\in L^2(\F),
		\]
		where $\ell\colon \R\to\R$ is a continuous increasing loss function. Let $x_{1}, x_{2} \in \R$ such that $x_{1}\neq x_{2}$. We claim that Assumption \ref{as:3.4} holds for $\rho$ with $(e^{i})_{i\in\bbN}$ used as the basis. Observe that, for every $i\in \bbN$,
		\[
			\rho(\langle x_{1} , e^{i} \rangle e^{i} + x_{1}^{\perp}) = \rho(\E[x_{1}e^{i}]e^{i})
			= \ell^{-1}\E[\ell(-\E[x_{1}e^{i}]e^{i})|\G] 
			= \ell^{-1}\ell(-\E[x_{1}e^{i}]e_{i})
			 = -\E[x_{1}e^{i}]e^{i}.
		\]
		A similar calculation yields that $\rho(\langle x_{2} , e^{i} \rangle e^{i} + x_{2}^{\perp}) = -\E[x_{2}e^{i}]e^{i}$. Then, 
		\begin{align*}
			\langle \rho(\langle x_{1} , e^{i} \rangle e^{i} + x_{1}^{\perp}) ,e^{i} \rangle - \langle \rho(\langle x_{2} , e^{i} \rangle e^{i} + x_{2}^{\perp}) ,e^{i} \rangle
			&=\E[x_{2}e^{i}]e^{i} -\E[x_{1}e^{i}]e^{i} \\
			&= (x_{2} - x_{1})\E[e^{i} ] 			
			=(x_{2} - x_{1})\sqrt{\P(\Omega_{i})}
			> 0,
		\end{align*}
		where the strict inequality holds since $x_{1} \neq x_{2}$ and $\P(\Omega_{i}) >0$. Hence, $X  \mapsto \langle \rho(\langle X , e^{i} \rangle e^{i} + X^{\perp}) ,e^{i} \rangle$ is non-constant.
	\end{example}
	
	As a preparation for Theorem \ref{thm:NQC equivalent to Conv locality wrt basis}, for every $n\in\bbN$, let us define
	\begin{align*}
		\H_{n} &\coloneqq \spn\{e^1,\ldots,e^n\}=H_1\oplus\ldots\oplus H_n,
		 \\
		  \mathcal{K}_n &\coloneqq \cl\spn\{\beta^i_{j},\ j \in \bbM_i, \ i\in\{1,\ldots,n\}\}=H_1^{\bot,\Lambda_1}\oplus\ldots\oplus H_n^{\bot,\Lambda_n},
	\end{align*}
	where the closure is with respect to the norm topology of $L^2(\F)$. Observe that
	\[
		L^{2}(\F) = \cl\bigcup_{n= 1}^{\infty} \of{\H_{n} \oplus \mathcal{K}_n}.
	\]
	Next, we define a preorder $\preceq$ on $L^{2}(\G)$ by
	\[
		Y \preceq V \iff \forall i\in\bbN\colon \langle Y, e^i \rangle \leq  \langle V, e^i \rangle.
	\]
	The \emph{ordering cone} corresponding to $\preceq$ is given as
	\[
		C \coloneqq \{ Y \in L^{2}(\G) \colon 0\preceq Y \}.
	\]
	In other words, for every $Y,V\in L^2(\G)$, we have $Y\preceq V$ if and only if $V-Y\in C$. Furthermore, the (positive) dual cone of $C$ is defined as
	\[
		C^{+} \coloneqq \{ Y \in L^{2}(\G) \colon \langle Y, V \rangle \geq  0  \text{ for every }V\in C \}.
	\]
	Finally, we introduce finite-dimensional analogues of $\preceq$, $C$, $C^+$. For every $n\in \N$, let us define a preorder $\preceq_n$ on $\mathcal{H}_n $ by
	\[
	Y \preceq_n V \iff \forall i\in\{1,\ldots,n\}\cap\bbN\colon \langle Y, e_i \rangle \leq  \langle V, e_i \rangle.
	\]
Then, the corresponding ordering cone and its dual are given by
\[
   C_n =\{ Y \in \H_n \colon 0 \preceq_n Y  \},\qquad 
		C_n^{+} = \{ Y \in \H_n\colon \langle Y, V \rangle \geq  0  \text{ for every }V\in C_n \}.
\]
	
	\begin{rem}\label{rem:cone}
	i) We observe that the following identities hold:
	    $C^+=C$ and $C^+_n= C_n$. In fact, let $Y \in C$. Then, by the definition of $C$, $\langle Y,e^i \rangle\geq 0$ and $\langle V, e^i\rangle\geq 0$ for every $V\in C$. Hence, $\langle Y, V \rangle = \sum_{i\in\bbN} \langle Y,e^i\rangle  \langle V,e^i\rangle \geq 0$ for every $V\in C$, i.e., $Y\in C^+$. This shows $C\subseteq C^+$. Conversely, let $Y\in C^+$. For every $i\in \bbN$, since $e^i \in C$, we have $ \langle Y,e^i\rangle \geq 0$. Hence, $Y\in C$. This shows $C^+ \subseteq C$. Similarly, we can prove that $C_n=C^+_n$. \\
	 ii) The identities  $C^+=C$ and $C^+_n= C_n$ imply that
	    $$
	       Y \in C^+  \text{ (resp., $Y\in C^+_n$)} \quad \Leftrightarrow \quad \forall i\in\bbN\colon
	       \langle Y, e_i \rangle \geq  0  \text{ (resp., $\forall i\in\{1,\ldots,n\}\cap\bbN\colon  \langle Y, e_i \rangle \geq  0$)}.
	    $$
	\end{rem}

Using the preorder $\preceq$, we modify Definition \ref{StarQC defn} in the $L^2$ setting as follows: a functional $\rho\colon L^2(\F)\to L^2(\G)$ is called \emph{naturally quasiconvex with respect to $\preceq$} if for every $X,Y\in L^2(\F)$ and $\lambda\in [0,1]$ there exists $\mu\in[0,1]$ such that
\[
\rho(\lambda X+(1-\lambda) Y)\preceq \mu\rho(X)+(1-\mu)\rho(Y).
\]
Convexity with respect to $\preceq$ is defined similarly. Moreover, $\rho$ is called \emph{$\star$-quasiconvex with respect to $\preceq$} if $X\mapsto \ip{X,Y}$ is quasiconvex on $L^2(\F)$ for every $Y\in C^+$. Analogous to Theorem \ref{StarQC Equivalent to NQC}, it can be checked that a conditional risk measure $\rho$ is naturally quasiconvex with respect to $\preceq$ if and only if it is $\star$-quasiconvex with respect to $\preceq$.

We conclude this section with the main result.
	
	\begin{thm}\label{thm:NQC equivalent to Conv locality wrt basis}
		Under Assumption \ref{as:3.4}, suppose that $\rho$ is a conditional risk measure that is naturally quasiconvex with respect to $\preceq$, continuous with respect to the norm topologies on $L^2(\F)$ and $L^2(\G)$, normalized, and local with respect to $(e^i)_{i\in\bbN}$. Then, $\rho$ is convex with respect to $\preceq$.
	\end{thm}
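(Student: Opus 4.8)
The plan is to reduce the vector-valued claim to the scalar decomposable-sum theory of Section~\ref{sec:sums} and then reproduce, coordinate by coordinate, the dichotomy argument of Theorem~\ref{Thm:NQC is equal to Convexity}. By the definition of $\preceq$, convexity of $\rho$ with respect to $\preceq$ is equivalent to convexity of every scalar map $f_i\colon L^2(\F)\to\R$, $f_i(X)\coloneqq\langle\rho(X),e^i\rangle$, $i\in\bbN$, so it suffices to show that each $f_i$ is convex. Under Assumption~\ref{as:3.4}(a) each $H_i$ is spanned by $e^i$, hence locality with respect to $(e^i)_{i\in\bbN}$ (Definition~\ref{defn:locality w.r.t. basis}) reads $\langle\rho(X),e^i\rangle=\langle\rho(\langle X,e^i\rangle e^i+X^{\bot,\Lambda_i}),e^i\rangle$. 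Since $\langle X,e^i\rangle e^i$ is the projection of $X$ onto $H_i$ and $X^{\bot,\Lambda_i}$ its projection onto $H_i^{\bot,\Lambda_i}$, their sum is exactly the projection $X^i\coloneqq X{\bf 1}_{\Omega_i}$ of $X$ onto $\Lambda_i=H_i\oplus H_i^{\bot,\Lambda_i}$. Therefore $f_i(X)=g_i(X^i)$, where $g_i\colon\Lambda_i\to\R$ is defined by $g_i(W)\coloneqq\langle\rho(W),e^i\rangle$ and depends only on the $\Lambda_i$-block of $X$. As $X^i$ ranges over all of $\Lambda_i$, and since $\rho$ is norm-continuous (so $f_i$ is continuous) and $f_i$ is non-constant by Assumption~\ref{as:3.4}(b), each $g_i$ is a proper, real-valued, lower semicontinuous, non-constant function on the topological vector space $\Lambda_i$. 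Because $f_i=g_i\circ\pi_i$ with $\pi_i$ the (linear) projection onto $\Lambda_i$, convexity of every $g_i$ yields convexity of every $f_i$; so it is enough to prove each $g_i$ is convex.

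Next I would invoke the decomposable-sum machinery. Writing $L^2(\F)=\bigtimes_{i\in\bbN}\Lambda_i$ and using that natural quasiconvexity with respect to $\preceq$ is equivalent to $\star$-quasiconvexity with respect to $\preceq$ (the $L^2$-analogue of Theorem~\ref{StarQC Equivalent to NQC} recorded just before the statement), for every $Y=\sum_{i\in\bbN}y_ie^i\in C^+$, that is, with $y_i=\langle Y,e^i\rangle\geq0$, the map
\[
X\longmapsto\langle\rho(X),Y\rangle=\sum_{i\in\bbN}y_i\,g_i(X^i)
\]
is quasiconvex on $L^2(\F)$. Fixing two indices $m\neq\ell$ and choosing $Y=y_me^m+y_\ell e^\ell$ with $y_m,y_\ell>0$ and all other coefficients zero, the restriction to the closed subspace $\Lambda_m\oplus\Lambda_\ell$ shows that the two-factor decomposable sum $(W_m,W_\ell)\mapsto y_mg_m(W_m)+y_\ell g_\ell(W_\ell)$ is quasiconvex on $\Lambda_m\times\Lambda_\ell$, with both coordinate functions non-constant and lower semicontinuous. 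This is exactly the hypothesis of Theorem~\ref{MainResult-2.2} with $n=2$; in particular I never need the infinite-sum Theorem~\ref{MainResult-2.3}, and no convergence question arises.

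The conclusion then follows the pattern of cases (a)--(d) in Theorem~\ref{Thm:NQC is equal to Convexity}. First, at most one $g_i$ can be non-convex: if $g_m$ and $g_\ell$ were both non-convex, then taking $y_m=y_\ell=1$ produces a quasiconvex two-factor sum whose two non-constant coordinates are both non-convex, contradicting Theorem~\ref{MainResult-2.2}. Suppose then, for contradiction, that exactly one function, say $g_m$, is non-convex, and pick any index $\ell\neq m$ (here I use $|\bbN|\geq2$); then $g_\ell$ is convex, so by Theorems~\ref{convindex_values_infinitedimensional} and~\ref{convindex_values_infinitedimensional_constant} one has $c(g_m)<0$ and $c(g_\ell)\in[0,+\infty)$. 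Since the two-factor sum $y_mg_m+y_\ell g_\ell$ is quasiconvex with a single non-convex coordinate for every $y_m,y_\ell>0$, Theorem~\ref{MainResult-2.2}(ii) together with the scaling identity $c(wg)=\tfrac1w c(g)$ of Lemma~\ref{Homegenity Convexity Index} forces
\[
\frac{y_m}{c(g_m)}+\frac{y_\ell}{c(g_\ell)}=\frac{1}{c(y_mg_m)}+\frac{1}{c(y_\ell g_\ell)}\leq0\qquad\text{for all }y_m,y_\ell>0.
\]
This is impossible: fixing $y_m$, the first term is a finite number $\leq0$, whereas the second term equals $+\infty$ if $c(g_\ell)=0$ and increases to $+\infty$ as $y_\ell\to+\infty$ if $c(g_\ell)>0$; either way the left-hand side becomes strictly positive for a suitable $y_\ell$, a contradiction. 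Hence no $g_i$ is non-convex, so every $g_i$, and therefore every $f_i$ and $\rho$ itself, is convex with respect to $\preceq$.

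The main obstacle is the reduction in the first paragraph: one must check that locality with respect to the basis, combined with the one-dimensionality in Assumption~\ref{as:3.4}(a), collapses $f_i$ into a function $g_i$ of the single block-coordinate $X^i\in\Lambda_i$, so that $\sum_iy_if_i$ becomes a genuine additively decomposable sum over $\bigtimes_{i\in\bbN}\Lambda_i$; without this step the coordinate functions would share variables and Theorem~\ref{MainResult-2.2} would not apply. A secondary point worth flagging is that the statement tacitly needs $|\bbN|\geq2$: if $\bbN$ is a singleton, then $L^2(\G)$ is one-dimensional, $\preceq$ is the usual order on $\R$, and natural quasiconvexity reduces to ordinary quasiconvexity, which does not force convexity.
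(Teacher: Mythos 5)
Your proof is correct, and while it runs on the same engine as the paper's --- reduce to additively decomposable sums via locality with respect to the basis, then apply Theorem \ref{MainResult-2.2} and the convexity-index case analysis of Theorem \ref{Thm:NQC is equal to Convexity} --- it is organized differently in a way that simplifies the endgame. The paper restricts $\rho$ to the truncated subspaces $\H_n\oplus\mathcal{K}_n$, treats $X\mapsto\E[\rho_n(X)Y]$ for $Y\in C_n^+$ as an $n$-factor decomposable sum, concludes convexity of $\rho_n$ for each $n$, and then must close with a density-plus-continuity argument to pass from $\bigcup_n(\H_n\oplus\mathcal{K}_n)$ to $L^2(\F)$. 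You instead observe that convexity with respect to $\preceq$ is exactly convexity of every coordinate functional $f_i=g_i\circ\pi_i$, that each $g_i$ lives on a single block $\Lambda_i$ by locality and Assumption \ref{as:3.4}(a), and that testing against dual elements supported on just two coordinates already yields quasiconvex two-factor sums; this lets you work with $n=2$ throughout, dispense with the final density step entirely, and use norm continuity only to get lower semicontinuity of the $g_i$. Your way of killing the ``exactly one non-convex coordinate'' case --- letting $y_\ell\to+\infty$ in $\frac{y_m}{c(g_m)}+\frac{y_\ell}{c(g_\ell)}\leq 0$ --- is also slightly more robust than the paper's fixed weight choice, since it handles the boundary case $c(g_\ell)=0$ uniformly. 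Your closing remark that the argument tacitly needs $|\bbN|\geq 2$ is a fair point and applies equally to the paper's proof (it parallels the non-triviality hypothesis that Theorem \ref{Thm:NQC is equal to Convexity} states explicitly but Theorem \ref{thm:NQC equivalent to Conv locality wrt basis} does not).
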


	\begin{proof}
		Let $n \in \N$, and define $\rho_n$ to be the restriction of $\rho$ on $\H_n \oplus \mathcal{K}_n$.
		Observe that 
		\begin{equation}\label{finitesum}
		X\in \H_n \oplus \mathcal{K}_n
		  \quad \Rightarrow \quad \langle X, e^i \rangle =0,\  \langle X, \beta_j^i \rangle =0 \ \text{for all $i>n$, $j\in \bbM_i$}. 
		 \end{equation}
		 Let $X\in \H_n \oplus \mathcal{K}_n$. We have
		   \begin{align}\label{eq:rho-n}
		     \rho_n(X)= \rho(X) &= \sum_{i\in \bbN}\langle \rho(X), e^i\rangle e^i 
		     = \sum_{i\in \bbN}\Big\langle \rho \big(\langle X, e^i \rangle e^i + X^{\perp, \Lambda_i} \big), e^i\Big\rangle e^i \notag \\
		     & = \sum_{\substack{i\in\bbN\colon\\  i\leq n}}  \Big\langle \rho \big(\langle X, e^i \rangle e^i + X^{\perp, \Lambda_i} \big), e^i \Big\rangle e^i + \sum_{\substack{i\in\bbN\colon \\  i> n}}\langle \rho (0), e^i\rangle e^i \notag \\
		     &= \sum_{\substack{i\in\bbN\colon\\  i\leq n}}  \Big\langle \rho \big(\langle X, e^i \rangle e^i + X^{\perp, \Lambda_i} \big), e^i \Big\rangle e^i,
		  \end{align}
		  where the third equality follows from locality with respect to $(e^i)_{i\in \bbN}$, the fourth one follows by \eqref{finitesum}, and the last one follows from normalization. Let $Y\in C_n^+$. Then, using \eqref{eq:rho-n}, we obtain
		\[
			\E[\rho_n(X)Y] 
			=\sum_{\substack{i\in\bbN\colon\\ i\leq n}} \langle Y, e_{i} \rangle  \langle \rho (X), e_{i} \rangle 
			= \sum_{i=1}^n \langle Y, e_{i} \rangle   \Big\langle \rho \big(\langle X, e^i \rangle e^i + X^{\perp, \Lambda_i} \big), e^i \Big\rangle.
		\]
	    Hence, the function $X \mapsto \E[\rho_n(X)Y]$ is a finite decomposable sum on the space $\H_n \oplus \mathcal{K}_n$. This function is also quasiconvex since $\rho$ is $\star$-quasiconvex with respect to $\preceq$, as noted before the statement of the theorem.
		Moreover, by Remark \ref{rem:cone}ii), we have $\langle Y,e_i \rangle \geq 0$ holds for each $i\in \{1,\ldots,n\}$. Then, using Theorem \ref{MainResult-2.2}, we can argue as in the proof of Theorem \ref{Thm:NQC is equal to Convexity} and deduce that $\rho_n$ (hence $\rho$) is convex on $\H_n \oplus \mathcal{K}_n$. Since this is valid for each $n\in\N$, it follows that $\rho$ is convex on $\bigcup_{n\in\N} (\H_n \oplus \mathcal{K}_n)=\bigcup_{n\in\bbN} (\H_n \oplus \mathcal{K}_n)$.
		
		Finally, we prove that $\rho$ is convex on $L^2(\F)=\cl\bigcup_{n=1}^\infty (\H_n \oplus \mathcal{K}_n)$. Let $X,Z \in L^2(\F)$ and $\lambda \in [0,1]$. Then, there exist two sequences
		 $(X_k)_{k\in\N}$, $(Z_k)_{k\in\N}$ in $\bigcup_{n=1}^\infty (\H_n \oplus \mathcal{K}_n)$ such that $(X_k)_{k\in\N}$ converges to $X$ in $L^2(\F)$ and $(Z_k)_{k\in\N}$ converges to $Z$ in $L^2(\F)$. Then, for each $k\in\N$, we have
		 \[
		     \rho(\lambda X_k + (1-\lambda) Z_k) \leq \lambda \rho(X_k) + (1-\lambda) \rho(Z_k).
		 \]
		 Applying the continuity of $\rho$, we can easily pass to the limit as $k\to +\infty$ and obtain  
		  \[
		     \rho(\lambda X + (1-\lambda) Z) \leq \lambda \rho(X) + (1-\lambda) \rho(Z).
		  \]
		  Hence, $\rho$ is convex on $L^2(\F)$. 
	\end{proof}

	\section*{Acknowledgments}
	
	The first and second authors acknowledge the financial supports of the Department of Economics at Universit\`a degli Studi dell'Insubria and Istituto Nazionale di Alta Matematica ``Francesco Severi" during their visits to Varese. The second author acknowledges the financial support of Bilkent University during his graduate studies.

\singlespacing

\bibliographystyle{plainnat.bst}
\bibliography{ABMreflist.bib}

\end{document}